\documentclass[twocolumn,aps,nofootinbib,superscriptaddress,tightenlines,notitlepage,pra]{revtex4-2}

\usepackage{amsmath, mathtools, bbm}
\usepackage{amsthm}
\usepackage{amsfonts}
\usepackage{color}
\usepackage{graphicx}
\usepackage{soul}

\usepackage[ breaklinks,
			 colorlinks = true,
             linkcolor = blue,
             urlcolor  = blue,
             citecolor = red,
             anchorcolor = blue,
]{hyperref}

\newtheorem{lemma}{Lemma}

\newtheorem{theorem}{Theorem}

\newtheorem{prop}[lemma]{Proposition}

\usepackage[capitalise]{cleveref}

\DeclareMathOperator{\tr}{tr}
\DeclareMathOperator{\erf}{erf}
\DeclareMathOperator{\sgn}{sgn}
\newcommand{\id}{1}

\newcommand{\eps}{\varepsilon}

\newcommand{\ket}[1]{|#1\rangle}
\newcommand{\bra}[1]{\langle#1|}
\newcommand{\E}{\mathbb{E}}
\renewcommand{\Pr}{\mathbb{P}}

\newcommand{\suppress}[1]{}
\DeclareMathOperator*{\argmin}{arg\,min}

\newcommand{\ceil}[1]{\lceil #1\rceil}
\newcommand{\floor}[1]{\lfloor #1\rfloor}

\usepackage{algorithmicx}
\usepackage{algorithm}
\usepackage[noend]{algpseudocode}
\usepackage{footnote}

\usepackage[shortlabels]{enumitem}

\usepackage{float}

\makeatletter 
    
\renewcommand\onecolumngrid{
\do@columngrid{one}{\@ne}%
\def\set@footnotewidth{\onecolumngrid}
\def\footnoterule{\kern-6pt\hrule width 1.5in\kern6pt}%
}

\begin{document}

\title{\Large A randomized quantum algorithm for statistical phase estimation}

\author{Kianna Wan}
\affiliation{AWS Center for Quantum Computing, Pasadena, USA}
\affiliation{Stanford Institute for Theoretical Physics, Stanford University, Stanford, USA}

\author{Mario Berta}
\affiliation{AWS Center for Quantum Computing, Pasadena, USA}
\affiliation{Institute for Quantum Information and Matter, California Institute of Technology, Pasadena, USA}
\affiliation{Department of Computing, Imperial College London, London, UK}

\author{Earl T.~Campbell}
\affiliation{AWS Center for Quantum Computing, Cambridge, UK}


\begin{abstract}

Phase estimation is a quantum algorithm for measuring the eigenvalues of a Hamiltonian. We propose and rigorously analyse a randomized phase estimation algorithm with two distinctive features. First, our algorithm has complexity independent of the number of terms $L$ in the Hamiltonian. Second, unlike previous $L$-independent approaches, such as those based on qDRIFT, all sources of error in our algorithm can be suppressed by collecting more data samples, without increasing the circuit depth.

\end{abstract}

\maketitle

\section{Introduction}

Quantum computers can be used to simulate dynamics and learn the spectra of quantum systems, such as interacting particles comprising complex molecules or materials, described by some Hamiltonian $H$. Phase estimation~\cite{kitaev2002classical} on the unitary $U=e^{i H t}$ efficiently solves the common spectral problem of computing ground state energies, whenever we can efficiently prepare a trial state with non-trivial (not exponentially small) overlap $\eta$ with the ground state~\cite{Abrams1999} (see also~\cite{Poulin2018}). Each run of standard phase estimation returns a single eigenvalue, with precision and success probability dependent on the number of times $U$ is used. 

Recently, statistical approaches to phase estimation have been proposed~\cite{obrien19,Terhal21,lin21}, where each run uses only a few ancillae and shorter circuits than standard phase estimation.  As such, statistical phase estimation may be better suited to early fault-tolerant quantum computers that are qubit- and depth-limited. However, in these approaches, a single run  gives a sample of an estimator for $\langle U^{j} \rangle$ for some runtime $j$, which alone is not enough to infer spectral properties. Multiple runs with different values of $j$ are needed, and statistical analysis gives spectral information with a confidence that increases with the amount of data obtained. These runs could be massively parallelized across multiple quantum computers. Interestingly, the approach of Lin \& Tong~\cite{lin21} is not only statistical in its analysis, but also generates the runtimes $j$, and therefore the circuits, from a random ensemble.

The cost of phase estimation---statistical or standard---typically depends on the Hamiltonian sparsity $L$, the number of terms in the Hamiltonian when decomposed in a suitable basis, such as the Pauli basis. Simple schemes based on implementing $U$ using Trotter formulae have $\mathcal O(L)$ gate complexity \cite{poulin2014trotter,babbush2015chemical,kivlichan2020improved,campbell2020early,mcardle2021exploiting}.
This can be prohibitive for the electronic structure problem in chemistry and materials science, where typically $L=\mathcal O(N^4)$ for an $N$-orbital problem \cite{helgaker2014molecular}. This increases to $L=\mathcal O(N^6)$ when using transcorrelated orbitals~\cite{motta2020quantum,mcardle2020improving} to better resolve electron-electron interactions. Interestingly, sub-linear non-Clifford complexity $\mathcal O(\sqrt{L}+N)$ is possible by employing an efficient data-lookup oracle~\cite{Babbush2018,low2018trading} in qubitization-based implementations of phase estimation~\cite{babbush2019quantum,berry2019qubitization,vonburg2021DoubleFactorized,Lee20}. However, these approaches require $\mathcal O(\sqrt{L})$ ancillae, which increases the qubit cost from $\mathcal O(N)$ to $\mathcal O(N^2)$, or even $\mathcal O(N^3)$ in the transcorrelated setting.


Heuristic truncation and low-rank factorisations have been proposed to decrease the sparsity $L$~\cite{berry2019qubitization,Lee20,vonburg2021DoubleFactorized} of the electronic structure Hamiltonian. As an alternative approach, randomized compilation~\cite{campbell19,Wiebe19,ouyang2020compilation} has been rigorously shown to enable phase estimation with gate complexity that is independent of $L$ for any Hamiltonian. A weakness of these randomized algorithms is a systematic error in energy estimates that can only be suppressed by increasing gate complexity, leading to high gate counts per run (cf.~\cite[Appendix D]{Lee20}).

Here, we overcome this difficulty by combining the statistical approach of Lin \& Tong \cite{lin21} with a novel random compilation of each $U^j$ instance, that has parallels to---but is distinct from---both the qDRIFT random compiler~\cite{campbell19} and the linear combinations of unitaries (LCU) method~\cite{childs2012,berry15}. Our algorithm for phase estimation is doubly randomized in that we randomly sample $j$, then approximate $U^j$ using a random gate sequence.  Unlike in 
any previous approach, all approximation and compilation errors can be expressed in terms of statistical noise that is suppressed by collecting more data samples. This allows for a trade-off between the gate complexity per sample and the number of samples required.  We explore this trade-off and show how to efficiently find the algorithmic parameters that minimise the total complexity. In contrast, qDRIFT approximates $U$ up to some systematic error (measured by the diamond norm) that cannot be mitigated by increasing the number of samples.

Applied to ground state energy estimation, we can tune the gate vs.\ sample trade-off to yield the following complexities. Given a Hamiltonian as a linear combination of Pauli operators with total weight $\lambda$, and an ansatz state with overlap at least $\eta$ with the ground space, we can choose to sample from $\widetilde{\mathcal{O}}(\eta^{-2})$ randomly compiled quantum circuits, where $\widetilde{\mathcal{O}}(\cdot)$ hides polylogarithmic factors. Each circuit uses one ancilla and at most $\widetilde{\mathcal{O}}(\lambda^2\Delta^{-2})$ single-qubit Pauli rotations to estimate the ground state energy to within additive error $\Delta$.

In Section~\ref{sec:threshold}, we start by constructing a subroutine that we refer to as \textit{eigenvalue thresholding}, which we then apply to ground state energy estimation in Sec.~\ref{sec:minimal}. We discuss examples from quantum chemistry in Sec.~\ref{sec:examples}.



\section{Eigenvalue thresholding}\label{sec:threshold}

\paragraph*{Problem setting.} We assume that the Hamiltonian $H$ is specified as a linear combination of $n$-qubit Pauli operators $P_{\ell}$:
\begin{align}\label{eq:hamiltonian}
H=\sum_{\ell=1}^L\alpha_\ell P_\ell, \quad \text{with } \lambda \coloneqq \sum_{\ell=1}^L |\alpha_\ell|.
\end{align}
This form can always be achieved, and is particularly natural for many physical systems of interest, such as fermionic Hamiltonians~\cite{jordan28,Verstraete2005,Seeley2012,Havlek2017,Derby2021}. Note that the spectral norm $\|H\|$ obeys the generally loose bound $\|H\|\leq\lambda$. We consider the following problem of coarsely determining whether an ansatz state $\rho$ has overlap with eigenstates of $H$ with eigenvalues below some threshold: Given a threshold $X$, precision $\Delta >0$, and overlap parameter $\eta > 0$, we seek to decide if (A) $\tr[\rho \Pi_{\leq X - \Delta}] < \eta$ or (B) $\tr[\rho \Pi_{\leq X + \Delta}] > 0$, where $\Pi_{\leq x}$ denotes the projector onto the eigenspaces of $H$ with eigenvalues at most $x$. Both of these statements can simultaneously be true, in which case it suffices to output either A or B. We refer to this problem as \textit{eigenvalue thresholding}, and its solution will later allow us to estimate the ground state energy, given a suitable ansatz $\rho$. 


\paragraph*{Cumulative distribution function.}

Similarly to~\cite{lin21}, we define the cumulative distribution function (CDF) associated with the Hamiltonian $H$ and ansatz state $\rho$ as 
\begin{equation} \label{cdfdef}
C(x) \coloneqq \tr\big[\rho \Pi_{\leq x/\tau}\big],
\end{equation}
where $\tau \coloneqq \frac{\pi}{2\lambda + \Delta}$ is a normalisation factor. The jump discontinuities in $C(x)$ occur at eigenvalues of $\tau H$, so appropriately characterising the CDF would enable us to estimate the spectrum of the Hamiltonian. 
We can write $C(x)$ as the convolution $(\Theta * p)(x)$ of the Heaviside function $\Theta(\cdot)$ and the probability density function $p(\cdot)$ corresponding to $\tau H$ and $\rho$:
\begin{equation} \label{cdfconv}
    C(x) 
    =\int_{-\pi/2}^{\pi/2}dy\,p(y)\Theta(x-y),
\end{equation}
noting that $p(x)$ is supported within $x \in (-\frac{\pi}{2},\frac{\pi}{2})$ since $\tau \|H\| \leq \tau \lambda < \frac{\pi}{2}$.\footnote{This will enable us to replace $\Theta(x)$ with a periodic function that is a good approximation only within $x \in (-\pi,\pi)$.}
Eigenvalue thresholding then reduces to the following problem regarding the CDF.

\smallskip
\noindent \textbf{Problem 1:} For given $x \in [-\tau\lambda, \tau\lambda]$ and $\delta > 0$, determine whether
\begin{equation} \label{cdfprob}
C(x - \delta) < \eta \quad \text{or} \quad C(x + \delta) > 0\,,
\end{equation}
outputting either statement if both are true.
\smallskip

\noindent In particular, solving Problem~1 for $x = \tau {X}$ and $\delta = \tau\Delta$ solves eigenvalue thresholding.\footnote{Solving Problem~1 with these parameter values also solves the ``eigenvalue threshold problem''~\cite{martyn21,gilyen19}, which, unlike eigenvalue thresholding, is a promise problem (where it is guaranteed that either $\tr[\rho\Pi_{\leq X -\Delta}] \geq \eta$ or $\tr[\rho \Pi_{\leq X + \Delta}] = 0$) and hence cannot be used as a subroutine for phase estimation in the same manner.} 

\paragraph*{Algorithm overview.} To solve Problem~1, we will construct an approximation $\widetilde{C}(\cdot)$ to the CDF $C(\cdot)$ satisfying
\begin{equation} \label{acdf}
    C(x-\delta) - \varepsilon \leq \widetilde{C}(x) \leq C(x +\delta) + \varepsilon
\end{equation}
for relevant values of $x$, $\delta$, and $\varepsilon$. Observe that for $\varepsilon \in (0,\eta/2)$, $\widetilde{C}(x) < \eta - \varepsilon$ would imply the first case of Eq.~\eqref{cdfprob}, while $\widetilde{C}(x) > \varepsilon$ would imply the second case. Hence, it suffices to estimate $\widetilde{C}(x)$.

Our algorithm is based on expressing $\widetilde{C}(x)$ in terms of a linear combination of computationally simple unitaries,  obtained via a two-step construction. First, we develop an improved Fourier series approximation to the Heaviside function (Lemma~\ref{lem:fourier}).
Second, we combine this with a novel decomposition of the time evolution operators (Lemma~\ref{lem:simulation}) in the relevant Fourier series. Randomly sampling unitaries from our decomposition and estimating their expectation values using Hadamard tests (Fig.~\ref{fig:circuits}(a)) will give estimates for $\widetilde{C}(x)$, allowing us to solve Problem~1 with high probability. 

\begin{figure*}
    \centering
    \includegraphics[width=0.99\textwidth]{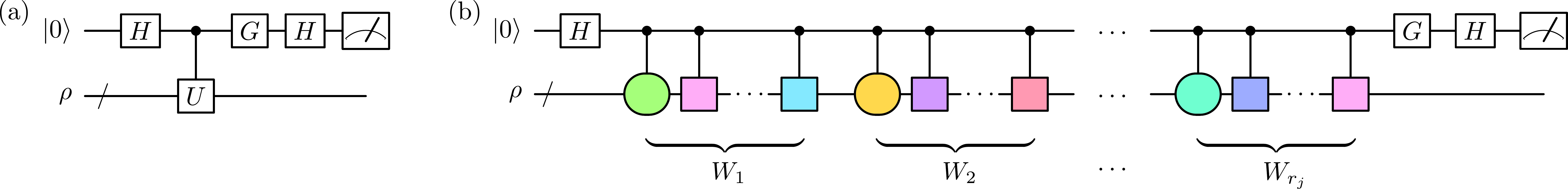}
    \caption{(a) Hadamard test on $\rho$ and $U$: setting $G = \mathbbm{1}$ (resp.\ $G= S^\dagger \coloneqq \ket{0}\bra{0} - i\ket{1}\bra{1}$) and associating the measurement outcomes $(\ket{0},\ket{1})$ with $(+1,-1)$ produces an unbiased estimator for $\mathrm{Re}(\tr[\rho U])$ (resp.\ $\mathrm{Im}(\tr[\rho U])$).
    (b)~Schematic depiction of the randomly compiled circuits in our algorithm. For $\hat{H} = \sum_\ell p_\ell P_\ell$, the squares represent Pauli operators randomly drawn from $\{P_\ell\}_\ell$ according to $\{p_\ell\}_\ell$, while circles denote multi-qubit Pauli rotations; see the proof of Lemma~\ref{lem:simulation} for details. The number of Pauli operators appearing in each $W_i$ is random, and will be zero with high probability.}
\label{fig:circuits}
\end{figure*}


\paragraph*{Fourier series approximation.}

Following {Lin \& Tong~\cite{lin21}, which uses ideas similar to those in~\cite{Somma2002,Somma2019}, we obtain an approximate CDF $\widetilde{C}(\cdot)$ by replacing $\Theta(\cdot)$ in Eq.~\eqref{cdfconv} with a finite Fourier series approximation thereof. As in \cite{vanapeldoorn20,gilyen19} and related works, we need a Fourier series with small approximation error on $|x| \in [\delta, \pi -\delta]$ for fixed $\delta>0$, small total weight of Fourier coefficients, and small maximal ``time'' parameter $|t|$ in the $e^{itx}$ terms. We explicitly construct such a Fourier series in Appendix~\ref{app:fourier}.

\begin{lemma}\label{lem:fourier}
For any $\varepsilon>0$ and $\delta \in (0,\pi/2)$, the Fourier series $F(x) = \sum_{j \in S_1} F_j e^{ijx}$ defined in Eq.~\eqref{Fbetad} with $S_1 \coloneqq \{0\} \cup \{ \pm (2j+1)\}_{j=0}^d$ and $d = \mathcal O(\delta^{-1}\log(\varepsilon^{-1}))$ satisfies
\begin{enumerate}
    \item $|F(x) - \Theta(x)| \leq \varepsilon \quad \forall\, x \in [-\pi + \delta, -\delta] \cup [\delta, \pi-\delta]$, 
    \item $-\varepsilon \leq |F(x)| \leq 1 + \varepsilon \quad \forall\, x \in \mathbb{R}$,
    \item $\mathcal{F}\coloneqq\sum\limits_{j\in S_1} |F_j| =  \mathcal O(\log d)$.
\end{enumerate}
\end{lemma}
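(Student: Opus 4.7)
My plan is to build $F$ by Gaussian-mollifying the Heaviside function and then truncating the resulting (rapidly convergent) Fourier series, choosing the Gaussian width to balance the smoothing error against the truncation error.

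First I would periodize: let $\Theta_{\mathrm{per}}$ denote the $2\pi$-periodic extension of $\Theta$ restricted to $(-\pi,\pi)$, whose Fourier coefficients are $c_0 = 1/2$, $c_k = 1/(\pi i k)$ for odd $k$, and $c_k = 0$ for even $k\neq 0$. These are already supported on $\{0\}\cup\{\text{odd integers}\}$, which is what forces the support of $F$ to lie in $S_1$. I would then convolve $\Theta_{\mathrm{per}}$ with the periodic Gaussian mollifier $\theta_\sigma(x) = \sum_{n\in\mathbb{Z}}(2\pi\sigma^2)^{-1/2}e^{-(x+2\pi n)^2/(2\sigma^2)}$, whose $k$th Fourier coefficient equals $e^{-k^2\sigma^2/2}$, and keep only the $|k|\le 2d+1$ modes. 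This defines $F$ with $F_0 = 1/2$ and $F_{\pm(2k+1)} = \pm e^{-(2k+1)^2\sigma^2/2}/(\pi i(2k+1))$ for $0\le k\le d$, from which property~3 is immediate via $\mathcal{F}\leq \tfrac12 + (2/\pi)\sum_{k=0}^d 1/(2k+1) = O(\log d)$.

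Next I would bound the two sources of error separately. For $|x|\in[\delta,\pi-\delta]$ the integrand $(\Theta_{\mathrm{per}}(x-y)-\Theta_{\mathrm{per}}(x))\,\theta_\sigma(y)$ vanishes for $|y|<\delta$, since the jumps of $\Theta_{\mathrm{per}}$ lie at $\{0,\pm\pi\}$; hence the convolution error is controlled by the Gaussian tail $\int_{|y|\ge\delta}\theta_\sigma(y)\,dy = O(e^{-\delta^2/(2\sigma^2)})$. The uniform truncation error is at most the discarded $\ell^1$ mass $\sum_{k>d}\tfrac{2}{\pi(2k+1)}e^{-(2k+1)^2\sigma^2/2} = O(e^{-d^2\sigma^2/2}/(d\sigma))$. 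Choosing $\sigma\asymp \delta/\sqrt{\log(1/\varepsilon)}$ and $d\asymp \delta^{-1}\log(1/\varepsilon)$ drives both below $\varepsilon$, giving property~1 with the advertised $d = O(\delta^{-1}\log(1/\varepsilon))$. For property~2 I would use that $\Theta_{\mathrm{per}}*\theta_\sigma$ is pointwise in $[0,1]$ (a $[0,1]$-valued function convolved with a probability density) and then add back the uniform truncation error to conclude $F(x)\in[-\varepsilon,1+\varepsilon]$.

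I expect the main obstacle to be keeping $\mathcal{F}=O(\log d)$ rather than the looser $O(\log(1/\varepsilon))$: this relies on the Gaussian envelope $e^{-k^2\sigma^2/2}$ being only a mild perturbation of the $1/(2k+1)$ harmonic decay for $k\lesssim 1/\sigma$, so that the sum behaves as a harmonic tail rather than being dominated by a heavier prefactor. Any more aggressive damping (for instance $e^{-|k|\sigma}$) would still suffice for properties~1 and~2 but would inflate $\mathcal{F}$. A secondary delicacy is handling the jump of $\Theta_{\mathrm{per}}$ at $\pm\pi$: the sum-over-images form of $\theta_\sigma$ absorbs the mass that leaks across the boundary automatically, but this needs to be verified explicitly when estimating the convolution error near the right end of the good region.
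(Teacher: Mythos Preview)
Your plan is sound and would yield a Fourier series with properties 1--3 and $d=\mathcal O(\delta^{-1}\log(\varepsilon^{-1}))$, but it takes a genuinely different route from the paper and, strictly speaking, proves an existential analogue of the lemma rather than the lemma as written: the statement refers to the specific series $F_{\beta,d}$ of Eq.~\eqref{Fbetad}, whose coefficients involve modified Bessel functions, whereas your construction produces the different coefficients $F_{2k+1}=e^{-(2k+1)^2\sigma^2/2}/(\pi i(2k+1))$.

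The paper builds $F$ as $P_{\beta,d}(\sin x)$, where $P_{\beta,d}$ is a truncated Chebyshev expansion of $\tfrac12(1+\erf(\sqrt{2\beta}\,\cdot\,))$; the conversion $T_k(\sin x)\to e^{\pm ikx}$ then yields the Bessel-function coefficients, and the bound $\mathcal F=\mathcal O(\log d)$ follows from the asymptotic $\sqrt{2\pi\beta}\,e^{-\beta}I_j(\beta)\approx e^{-j^2/(2\beta)}$ (Lemma~\ref{lem:kasperkovitz}). Your approach is more direct and elementary: you take the exact Fourier coefficients $1/(\pi i k)$ of the periodic Heaviside, damp them by a Gaussian envelope, and truncate, so all error estimates reduce to Gaussian tail bounds with no special functions in sight. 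The two constructions are morally the same---the Bessel envelope is asymptotically Gaussian, which is why both give the identical $1/(2j+1)$ harmonic profile that makes $\mathcal F=\mathcal O(\log d)$ automatic. What the paper's route buys is a fully explicit non-asymptotic version (Theorem~\ref{thm:Fourier}) with concrete constants, which they need for the numerical resource estimates in Section~\ref{sec:examples}; your route would serve equally well for the asymptotic lemma and for the algorithm itself, since only properties 1--3 are used downstream. Incidentally, the ``obstacle'' you flag is not one: $\mathcal F\le \tfrac12+(2/\pi)\sum_{k\le d}1/(2k+1)=\mathcal O(\log d)$ follows immediately from $e^{-(\cdot)}\le 1$, independent of how aggressively the envelope decays.
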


This improves on the Fourier approximation of Lin \& Tong, which has $d = \mathcal{O}(\delta^{-1}\log(\delta^{-1}\varepsilon^{-1}))$~\cite[Lemma~6]{lin21}. As such, Lemma~\ref{lem:fourier} also improves the asymptotic complexity of their phase estimation algorithm. In Appendix~\ref{app:fourier}, we prove a stronger version of Lemma~\ref{lem:fourier} with explicit constants, by converting suitable Chebyshev approximations to the error function into Fourier series. 

Using the Fourier series $F(\cdot)$ of Lemma~\ref{lem:fourier}, we obtain the approximate CDF 
\begin{equation} \label{Ctilde}
    \widetilde{C}(x) \coloneqq \int_{-\pi/2}^{\pi/2}dy\,p(y)F(x-y) = \sum_{j \in S_1} F_j e^{ijx}\tr[\rho e^{i\hat{H}t_j}],
\end{equation}
where $t_j \coloneqq -j\tau\lambda$ 
and $\hat{H} \coloneqq H/\lambda$. We show in Appendix~\ref{app:acdf} that for any $\delta \in (0,\tau\Delta]$, this $\widetilde{C}(\cdot)$ indeed satisfies the guarantees in Eq.~\eqref{acdf}. 


\paragraph*{LCU decomposition of time evolution operators.}

Instead of directly implementing the time evolution operators $e^{i\hat{H}t_j}$ from Eq.~\eqref{Ctilde} in Hadamard tests, as considered by~\cite{lin21}, we further decompose each of these terms into a specific linear combination of unitaries.

\begin{lemma} \label{lem:simulation}
Let $\hat{H} = \sum_\ell p_\ell P_\ell$ be a Hermitian operator that is specified as a convex combination of Pauli operators. For any $t \in \mathbb{R}$ and $r \in \mathbb{N} \coloneqq \{1,2,\dots\}$, there exists a linear decomposition
\[ e^{i\hat{H}t} = \sum_{k\in S_2} b_k U_k \]
for some index set $S_2$, real numbers $b_k > 0$, and unitaries $U_k$, such that
\[ \sum_{k \in S_2} b_k \leq \exp(t^2/r), \]
and for all $k \in S_2$, the non-Clifford cost of controlled-$U_k$ is that of $r$ controlled single-qubit Pauli rotations.
\end{lemma}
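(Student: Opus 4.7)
The plan is to split the evolution into $r$ equal segments $e^{i\hat{H}t} = \prod_{i=1}^r e^{i\hat{H}\alpha}$ with $\alpha \coloneqq t/r$, and to give an exact positive-coefficient LCU of each segment whose $L^1$ weight is at most $e^{\alpha^2}$. Multiplying the $r$ per-segment decompositions then yields a decomposition of $e^{i\hat{H}t}$ with $L^1$ weight at most $(e^{\alpha^2})^r = e^{t^2/r}$, and naturally arranges each $U_k$ as the alternating sequence of Pauli rotations and random Pauli strings shown in Fig.~\ref{fig:circuits}(b).

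For the per-segment LCU, I would start from the identity
\[ V_\alpha \coloneqq \sum_\ell p_\ell\, e^{iP_\ell\alpha} = \cos\alpha\,\mathbbm{1} + i\sin\alpha\,\hat{H}, \]
obtained by using $P_\ell^2 = \mathbbm{1}$ to write $e^{iP_\ell\alpha} = \cos\alpha\,\mathbbm{1} + i\sin\alpha\, P_\ell$ and averaging over $\ell$ with weights $p_\ell$. Solving this for $\hat{H}$ as an affine function of $V_\alpha$ and substituting into the exponential gives the closed form
\[ e^{i\hat{H}\alpha} = e^{-\alpha\cot\alpha}\,\exp\!\left(\tfrac{\alpha}{\sin\alpha}\,V_\alpha\right). \]
Taylor-expanding the outer exponential and substituting $V_\alpha^n = \sum_{\vec\ell\in[L]^n} p_{\ell_1}\cdots p_{\ell_n}\prod_{j=1}^n e^{iP_{\ell_j}\alpha}$ will then produce a positive-coefficient LCU of $e^{i\hat{H}\alpha}$ whose unitaries are products of angle-$\alpha$ Pauli rotations, with total weight
\[ e^{-\alpha\cot\alpha}\cdot e^{\alpha/\sin\alpha} = \exp\!\bigl(\alpha\tan(\alpha/2)\bigr) \leq e^{\alpha^2}, \]
using $(1-\cos\alpha)/\sin\alpha=\tan(\alpha/2)$ together with the elementary inequality $\tan(\alpha/2)\leq\alpha$.

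For the non-Clifford cost, I would argue that each angle-$\alpha$ multi-qubit Pauli rotation $e^{iP\alpha}$ is Clifford-equivalent to a single-qubit rotation $e^{iZ\alpha}$, so that controlled-$U_k$ has non-Clifford cost equal to the number of Pauli rotations appearing in $U_k$. Taking one sampled rotation slot per segment and treating the remaining Taylor-series factors as the random Pauli corrections $W_i$ of Fig.~\ref{fig:circuits}(b) then gives the budget of $r$ controlled single-qubit rotations per $U_k$. The main analytic content sits in the $L^1$ bound via $\alpha\tan(\alpha/2)\leq\alpha^2$, which is immediate; the main obstacle I anticipate is making the circuit-level accounting rigorous so that each composite $U_k$ uniformly respects the stated $r$-rotation budget despite the Taylor expansion producing terms with a varying number of $V_\alpha$ factors per segment.
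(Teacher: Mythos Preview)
Your weight argument is elegant and, where it applies, gives the right bound: the identity $e^{i\hat{H}\alpha} = e^{-\alpha\cot\alpha}\exp\bigl(\tfrac{\alpha}{\sin\alpha}V_\alpha\bigr)$ holds whenever $\sin\alpha\neq 0$, the Taylor coefficients are positive when $\alpha/\sin\alpha>0$ (i.e.\ $|\alpha|<\pi$), and $\alpha\tan(\alpha/2)\leq\alpha^2$ on a further subinterval. Since the lemma is stated for all $t\in\mathbb{R}$ and $r\in\mathbb{N}$, you would still need a separate argument when $|t/r|$ lies outside this range; the paper handles large $|t/r|$ by a direct estimate.

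The substantive gap is the non-Clifford accounting---the obstacle you yourself flag---and it is not resolvable within your decomposition. The $n$th term of your outer Taylor expansion contributes per-segment unitaries $e^{iP_{\ell_1}\alpha}\cdots e^{iP_{\ell_n}\alpha}$, a product of $n$ Pauli \emph{rotations}, each of which is a non-Clifford gate. You cannot ``treat the remaining Taylor-series factors as the random Pauli corrections $W_i$'': in Fig.~\ref{fig:circuits}(b) the squares are Pauli \emph{operators} $P_\ell$ (Clifford), not rotations. Consequently each composite $U_k$ carries $\sum_{i=1}^r n_i$ controlled rotations with the $n_i$ unbounded, and the uniform $r$-rotation budget fails. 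The paper's key idea is a different grouping: Taylor-expand $e^{i\hat{H}x}$ directly and pair consecutive orders as $\tfrac{(ix)^n}{n!}\hat{H}^n\bigl(\mathbbm{1}+i\tfrac{x}{n+1}\hat{H}\bigr)$ for even $n$. Then $\hat{H}^n=\sum_{\vec\ell}p_{\ell_1}\cdots p_{\ell_n}\,P_{\ell_1}\cdots P_{\ell_n}$ is a convex combination of products of Pauli \emph{operators} (Clifford), while the bracket is proportional to $\sum_{\ell'}p_{\ell'}e^{i\theta_n P_{\ell'}}$ and contributes exactly one rotation. That is the missing ingredient: the variable-length part of each per-segment term must be made Clifford, with precisely one non-Clifford rotation pinned to each of the $r$ segments.
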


This decomposition is conceptually different from previous LCU methods, cf.~\cite{berry15} and references therein. The purpose of Lemma~\ref{lem:simulation} is to allow for a trade-off between the sample complexity and gate complexity of our algorithm. Specifically, as shown later,
the sample complexity depends on the total weight $\sum_{k \in S_2} b_k$ of the coefficients in our decomposition. Since this is bounded by $\exp(t^2/r)$, we can reduce the sample complexity by increasing $r$, at the cost of increasing the {gate complexity per sample}, and vice versa.

To prove Lemma~\ref{lem:simulation}, we write $e^{i\hat{H}t} = (e^{i\hat{H}t/r})^r$ and Taylor-expand each $e^{i\hat{H}t/r} = \mathbbm{1} + i\hat{H}t/r + \mathcal{O}((t/r)^2)$.  We then pair up consecutive terms in this expansion, which differ in phase by $i$. Since $\hat{H}$ is a convex combination of Pauli operators, this gives rise to convex combinations of multi-qubit Pauli rotations, e.g., the leading term is
\begin{equation} \label{eq:LeadingOrder}
\mathbbm{1} + i\hat{H}t/r = \sum_{\ell}p_\ell ({\mathbbm{1} + iP_\ell t/r}) \propto \sum_\ell p_\ell e^{i\theta P_\ell} 
\end{equation}
with $\theta \coloneqq \arccos\sqrt{1+(t/r)^2}$. The higher-order terms contain additional Pauli operators, as illustrated in Fig.~\ref{fig:circuits}(b). The controlled version of each Pauli rotation can be implemented using a controlled single-qubit rotation, along with Clifford gates. Hence, each controlled-$U_k$ requires $r$ controlled single-qubit rotations in total. Explicit forms for the higher-order terms and proof details are given in Appendix~\ref{app:lcu}, where we also show, via Algorithm~\ref{algsample}, that one can efficiently sample $U_k$ according to the distribution given by $\{b_k\}_k$.  


\paragraph*{Our algorithm for Problem~1.}
Putting together the above results, we apply Lemma~\ref{lem:simulation} to decompose each $e^{i\hat{H}t_j}$ in Eq.~\eqref{Ctilde} as $e^{i\hat{H}t_j} = \sum_{k \in S_2} b_k^{(j)} U_k^{(j)}$. We choose a positive integer $r_j$ for each $j \in S_1$, and define the corresponding ``runtime vector'' $\vec{r} = (r_j)_j \in \mathbb{N}^{S_1}$. This leads to the final decomposition 
\begin{align}\label{eq:sampling}
\widetilde C(x)&=\sum_{(j,k)\in S_1\times S_2}\underbrace{F_{j}e^{ijx}b_{k}^{(j)}}_{=:\;a_{jk}}\tr[\rho U_{k}^{(j)}]
\end{align}
with total weight
\begin{align}
\mathcal{A}(\vec{r})&\coloneqq\sum_{(j,k)\in S_1\times S_2}|a_{jk}|\leq\sum_{j\in S_1}|F_{j}|\exp(t_j^2/r_{j})\,.\label{eq:A-sum}
\end{align}
As a simple example,
\begin{equation} \label{eq:constant-weight}
r_j = \lceil 2t_j^2\rceil \,\enspace \forall\, j \in S_1 \quad \text{gives} \quad \mathcal{A}(\vec{r}) \leq \sqrt{e}\mathcal{F}.
\end{equation}

Recall that we can solve Problem~1 by determining if $\widetilde{C}(x) < \eta - \varepsilon$ or $\widetilde{C}(x) > \varepsilon$. To estimate $\widetilde{C}(x)$, we sample $(j,k)$ from $S_1 \times S_2$ with probability proportional to $|a_{jk}|$, and perform a Hadamard test 
on $\rho$ and $U_k^{(j)}$, obtaining an estimate $m_{jk}$ for $\tr[\rho U_k^{(j)}]$. Then, $z_{jk} \coloneqq \mathcal{A}(\vec{r})e^{i\,\mathrm{arg}(a_{jk})} m_{jk}$ is an unbiased estimate of $\widetilde{C}(x)$. Letting $\overline{Z}$ denote the random variable obtained by taking the average of $\mathcal{C}_{\mathrm{sample}}$ such estimates, it follows from Hoeffding's inequality that
guessing $\widetilde{C}(x) < \eta-\varepsilon$ if $\mathrm{Re}[\overline Z] < \eta/2$, and $\widetilde{C}(x) > \varepsilon$ otherwise, gives a correct answer with probability at least $1-\vartheta$ provided that $\mathcal{C}_{\mathrm{sample}} \geq 4\mathcal{A}(\vec{r})^2(\eta/2-\varepsilon)^{-2}\ln(\vartheta^{-1})$ (cf.~Appendix~\ref{app:prob}).
Thus, we arrive at Algorithm~\ref{alg1}, our algorithm for solving Problem 1, and hence eigenvalue thresholding.

\begin{algorithm}[H]
\caption{algorithm for Problem~1\label{alg1}} 
\textbf{Problem inputs:} an $n$-qubit Hamiltonian $H = \sum_{\ell=1}^L \alpha_\ell P_\ell$ with $\alpha_\ell >0$ and $\lambda \coloneqq \sum_{\ell = 1}^L \alpha_\ell$, an ansatz state $\rho$, a precision parameter $\Delta > 0$; $\tau \coloneqq \frac{\pi}{2\lambda + \Delta}$.

\textbf{Algorithm parameters:}  real numbers $x \in [-\tau\lambda,\tau\lambda]$, $\delta \in (0,\tau\Delta]$, $\eta \in (0,1]$, and $\varepsilon \in (0,\eta/2)$,
a probability $\vartheta$. \\

\textbf{Output:} ${0}$ if $C(x -  \delta) < \eta$, ${1}$ if $C(x + \delta) > 0$, and either $0$ or $1$ if both are true (where $C(\cdot)$ is the CDF defined in Eq.~\eqref{cdfdef}) with probability of error at most $\vartheta$. \\ 
\begin{algorithmic}[1]
\State Compute the coefficients $\{F_j\}_{j \in S_1}$, specified in Eq.~\eqref{eq:Fouriercoefficients}, of the Fourier series from Lemma~\ref{lem:fourier} with approximation parameters $\delta$ and $\varepsilon$. Set $t_j \leftarrow -j \lambda \tau$ $\forall \, j \in S_1$.
\State Choose a runtime vector $\vec{r} \in \mathbb{N}^{S_1}$ (using e.g., Eq.~\eqref{eq:constant-weight} or Eq.~\eqref{eq:r-vector}), and apply Lemma~\ref{lem:simulation} 
to obtain the decomposition in Eq.~\eqref{eq:sampling}, with 
total weight $\mathcal A(\vec r)$ as in Eq.~\eqref{eq:A-sum}. 
\State \begin{equation} \label{Csample} \mathcal{C}_{\mathrm{sample}}(\vec{r}) \leftarrow \left\lceil \left(\frac{2\mathcal{A}(\vec{r})}{\eta/2 - \varepsilon}\right)^2 \ln\frac{1}{\vartheta}\right\rceil. \end{equation}
\State \textbf{For} $i = 1,\dots, \mathcal{C}_{\mathrm{sample}}(\vec{r})$:
\State \hspace{1em} Sample a unitary $U_k^{(j)}$ from Eq.~\eqref{eq:sampling} as follows:
\begin{enumerate}[a.] 
\item \hspace{1em} Sample an index $j \in S_1$ with probability $\propto |F_j|$.
\item \hspace{1em} Sample a unitary using
Algorithm~\ref{algsample} with inputs

\hspace{1em} ${H}/\lambda$, $t_j$, $r_j$ .
\end{enumerate}
\State \label{step6} \hspace{1em} Perform a Hadamard test 
with inputs $\rho$ and $U_k^{(j)}$,

obtaining an estimate $m_i$ of $\tr[\rho U_k^{(j)}]$.
\State \hspace{1em} $z_i \leftarrow \mathcal{A}(\vec{r}) e^{i(\mathrm{arg}(F_j)+jx)} m_i$
\State $\overline{z} \leftarrow \sum_i z_i/\mathcal{C}_{\mathrm{sample}}(\vec{r})$. If $\mathrm{Re}(\overline{z}) < \eta/2$, \textbf{return} $0$. Else, \textbf{return} $1$
\end{algorithmic}
\end{algorithm}


\paragraph*{Complexity.}

The Hadamard test in Step~\ref{step6} is the only quantum step and involves two circuits on $n + 1$ qubits, for an $n$-qubit Hamiltonian $H$. The expected number of controlled Pauli rotations per circuit is
\begin{equation} \label{Cgate}
    \mathcal{C}_{\mathrm{gate}}(\vec{r}) \coloneqq \frac{1}{\mathcal{A}(\vec{r})} \sum_{(j,k) \in S_1 \times S_2}|a_{jk}|r_j.
\end{equation}
Step~\ref{step6} is repeated $\mathcal{C}_{\mathrm{sample}}(\vec{r})$ times, so the expected total non-Clifford complexity is 
$2\mathcal{C}_{\mathrm{sample}}(\vec{r})\cdot \mathcal{C}_{\mathrm{gate}}(\vec{r})$.

It remains to specify how to choose the runtime vector $\vec{r} \in \mathbb{N}^{S_1}$. For example, we could aim to minimise the total complexity
\begin{equation} \label{eq:r-vector}
\argmin_{\vec{r}}\;\mathcal{C}_{\mathrm{sample}}(\vec{r})\cdot \mathcal{C}_{\mathrm{gate}}(\vec{r}).
\end{equation}
\textit{Prima facie} this is a high-dimensional optimisation problem, as $|S_1| = \mathcal{O}(\delta^{-1}\log(\epsilon^{-1}))$ from Lemma~\ref{lem:fourier}. However, differentiating with respect to $\vec{r}$, one sees that the argmin is effectively described by a single free parameter.  Therefore, optimising $\vec{r}$ is reducible to an efficiently solvable one-dimensional problem, and this further holds when minimising $\mathcal{C}_{\mathrm{sample}}$ subject to constraints on $\mathcal{C}_{\mathrm{gate}}$; see Appendix~\ref{app:optimiser} for details. Moreover, if one is exclusively interested in asymptotic complexities, the simple choice for $\vec{r}$ in Eq.~\eqref{eq:constant-weight} already gives 
\begin{equation}
    \mathcal{C}_{\mathrm{sample}}(\vec{r}) = \mathcal{O}\left(\frac{1}{\eta^2}\log^2\Big(\frac{1}{\delta}\log \frac{1}{\eta} \Big)\log \frac{1}{\vartheta} \right) = \widetilde{\mathcal{O}}\left(\frac{1}{\eta^2}\right) \\
\end{equation}
\begin{equation}\label{Cgateasymptotic}
\text{and}\quad\mathcal{C}_{\mathrm{gate}}(\vec{r}) = \mathcal{O}\left(\frac{1}{\delta^2}\log^2\frac{1}{\eta}\right)= \widetilde{\mathcal{O}}\left(\frac{1}{\delta^2}\right),
\end{equation}
since $\mathcal{A}(\vec{r}) \leq \sqrt{e}\mathcal{F}$ and $\mathcal{C}_{\mathrm{gate}}(\vec{r}) \leq \max_{j \in S_1} r_j = 2[(2d+1)\tau\lambda]^2$ for this choice, with $\mathcal{F}$ and $d$ given by Lemma~\ref{lem:fourier} and picking $\varepsilon = \text{const.} \times \eta$ in Algorithm~\ref{alg1}. Note that the \emph{worst-case} gate complexity thus has the same scaling as that in Eq.~\eqref{Cgateasymptotic} for the \emph{expected} gate complexity $\mathcal{C}_{\mathrm{gate}}(\vec{r})$. Hence, we arrive at a total complexity $\widetilde{\mathcal{O}}(\delta^{-2} \eta^{-2})$.
For eigenvalue thresholding, one would choose $\delta = \tau\Delta$, in which case $\delta^{-1} = \mathcal{O}(\lambda/\Delta)$.


\section{Ground state energy estimation}\label{sec:minimal}


Under appropriate assumptions on the Hamiltonian $H$ and ansatz state $\rho$, our method for estimating the CDF
can be adapted to perform phase estimation. Specifically, eigenvalues of $H$ coincide with the locations of jump discontinuities in $C(x)$, and we can estimate these locations given sufficient knowledge about the overlap of $\rho$ with relevant eigenspaces. For simplicity, we restrict ourselves to the problem of estimating the ground state energy $[H]_{\min}$, which only requires the standard promise that $\tr[\rho \Pi_{\min}] \geq \eta$ for some $\eta > 0$, where $\Pi_{\min}$ denotes the projector onto the ground space of $H$.


The analysis in \cite[Section 5]{lin21} shows that by solving Problem 1 for $s = \mathcal{O}(\log(\delta^{-1}))$ different values of $x$ determined in a fashion similar to binary search, one can find an $x^*$ such that $C(x^* - \delta) <\eta$ and $C(x^* + \delta) > 0$, which implies that $|x^* - \tau [H]_{\min}| \leq \delta$. Hence, if we take $\delta = \tau\Delta$, then $x^*/\tau$ would give an estimate of the ground state energy to within additive error $\Delta$. We use Algorithm~\ref{alg1} to solve Problem~1, noting that we can reuse the samples collected in Step~\ref{step6} for \emph{all} of the different $x$ values, with only a small overhead in the sample complexity. Namely, since Algorithm~\ref{alg1} errs with probability at most $\vartheta$ for any $x$, choosing $\vartheta = \xi/s$ would ensure, by the union bound, that the ground state is successfully estimated with probability at least $1-\xi$.


\begin{theorem} \label{thm:main}
For any $n$-qubit Hamiltonian $H$ of the form in Eq.~\eqref{eq:hamiltonian}, let $\rho$ be a state that has overlap $\tr[\rho \Pi_{\min}] \geq \eta$ with the ground space of $H$. Then, the ground state energy of $H$ can be estimated to within additive error $\Delta$ with probability at least $1-\xi$ using $\mathcal{O}\left(\frac{1}{\eta^2}\log^2\left(\frac{\lambda}{\Delta}\log\frac{1}{\eta}\right)\log\left(\frac{1}{\xi}\log\frac{\lambda}{\Delta}\right)\right)$ quantum circuits on $n + 1$ qubits. Each circuit uses one copy of $\rho$ and at most $\mathcal{O}\left(\frac{\lambda^2}{\Delta^2}\log^2 \frac{1}{\eta}\right)$ single-qubit Pauli rotations.
\end{theorem}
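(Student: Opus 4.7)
The plan is to reduce ground state energy estimation to a short binary search over the location of the smallest jump discontinuity of the CDF $C(\cdot)$, with each binary-search step handled by Algorithm~\ref{alg1}. By the overlap hypothesis, $C(x) = 0$ for $x < \tau[H]_{\min}$ and $C(x) \geq \eta$ for $x \geq \tau[H]_{\min}$, so any procedure that localises this first jump to an interval of width $2\delta = 2\tau\Delta$ yields an estimate of $[H]_{\min}$ to additive error $\Delta$. I would invoke the binary-search strategy of~\cite[Section~5]{lin21}: across $s = \mathcal{O}(\log(1/(\tau\Delta))) = \mathcal{O}(\log(\lambda/\Delta))$ adaptively chosen queries $x_1, \ldots, x_s \in [-\tau\lambda, \tau\lambda]$, each answered by solving Problem~1 via Algorithm~\ref{alg1} with $\delta = \tau\Delta$, we identify $x^*$ satisfying $C(x^* - \delta) < \eta$ and $C(x^* + \delta) > 0$, so that $|x^*/\tau - [H]_{\min}| \leq \Delta$.

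For the per-round parameters, I would fix $\delta = \tau\Delta$ and $\varepsilon$ as a constant fraction of $\eta$ (e.g.\ $\varepsilon = \eta/4$), and use the simple runtime choice of Eq.~\eqref{eq:constant-weight}. Lemma~\ref{lem:fourier} together with Eq.~\eqref{Cgateasymptotic} then directly gives $\mathcal{C}_{\mathrm{gate}}(\vec{r}) = \mathcal{O}((\lambda/\Delta)^2 \log^2(1/\eta))$ per Hadamard-test circuit, matching the announced per-circuit bound; and since $r_j \leq \max_{j \in S_1} r_j = 2[(2d+1)\tau\lambda]^2$ for this $\vec{r}$, the \emph{worst-case} count inherits the same scaling. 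Setting the per-call failure probability to $\vartheta = \xi/s$ and substituting $\delta^{-1} = \mathcal{O}(\lambda/\Delta)$ along with $\log(\vartheta^{-1}) = \mathcal{O}(\log(\xi^{-1}\log(\lambda/\Delta)))$ into Eq.~\eqref{Csample} yields the claimed total sample complexity.

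The crucial observation that prevents the sample count from being inflated by an extra factor of $s$ is that the Hadamard-test outputs drawn in Step~\ref{step6} of Algorithm~\ref{alg1} are $x$-independent: the sampling distribution over $(j,k)$ is governed by $|a_{jk}| = |F_j|\,b_k^{(j)}$ (the phase $e^{ijx}$ does not affect probabilities), and the circuit itself depends only on $\rho$ and $U_k^{(j)}$. The dependence on $x$ enters exclusively through the classical post-processing factor $e^{ijx}$ in the definition of $z_i$. I would therefore collect a single batch of $\mathcal{C}_{\mathrm{sample}}(\vec{r})$ samples up front and reuse it across all $s$ binary-search queries, recomputing only the classical averages $\overline{z}$; a union bound over the $s$ Hoeffding events then establishes overall success probability at least $1 - \xi$.

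The main subtlety will be making this sample reuse rigorous. The $s$ estimators $\overline{Z}(x_1), \ldots, \overline{Z}(x_s)$ are deterministic functions of a shared sample batch and are therefore highly correlated, but the union bound only needs each to hold individually with probability $\geq 1 - \vartheta$, which it does by the pre-Algorithm~\ref{alg1} Hoeffding analysis. One also must confirm that adaptivity of the $x_i$ is harmless: conditioning on prior outcomes does not alter the distribution of the underlying $(j,k)$ samples or of $m_i$, so each conditional $z_i$ remains unbiased with the same bounded range, preserving the Hoeffding bound at every query and closing the argument.
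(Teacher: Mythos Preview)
Your proposal is correct and mirrors the paper's argument essentially step for step: binary search \`a la~\cite[Section~5]{lin21} with $s = \mathcal{O}(\log(\lambda/\Delta))$ calls to Algorithm~\ref{alg1}, the parameter choices $\delta = \tau\Delta$, $\varepsilon = \Theta(\eta)$, the runtime vector of Eq.~\eqref{eq:constant-weight} (with the worst-case gate bound noted exactly as the paper does), and the union bound via $\vartheta = \xi/s$. Your discussion of sample reuse across the $s$ queries is actually more explicit than the paper's, which simply asserts reuse and defers the details to~\cite{lin21}; one caveat is that your claim that ``conditioning on prior outcomes does not alter the distribution of the underlying $(j,k)$ samples'' is not literally true when the batch is fixed and $x_i$ is chosen from it adaptively, so the rigorous justification still rests on the analysis in~\cite{lin21} rather than on that sentence.
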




Thus, our quantum complexities are independent of the Hamiltonian sparsity $L$, at the price of the quadratic dependence $\widetilde{\mathcal{O}}(\lambda^2\Delta^{-2}\eta^{-2})$ for the total gate count. This is in contrast to standard results on phase estimation (see e.g., ~\cite[Table I]{Lee20}). Additionally, note that Theorem~\ref{thm:main} is derived using the specific choice of runtime vector in Eq.~\eqref{eq:constant-weight}. By tuning $\vec{r}$ (using for instance the optimisation procedures in Appendix~\ref{app:optimiser}), we can reduce the gate complexity per circuit by running more circuits, for a given set of problem parameters.



\section{Examples in quantum chemistry}\label{sec:examples}

\paragraph*{Comparisons.} 
Conventional phase estimation algorithms depend on the sparsity $L$, which is especially prohibitive for chemistry Hamiltonians. Several algorithms~\cite{berry2019qubitization,Lee20,vonburg2021DoubleFactorized} have used heuristic truncation policies to justify eliminating certain terms from the Hamiltonians, thereby reducing $L$.  While supporting numerics were presented, these truncations are not rigorous. Moreover, it was also assumed that only a single run of the algorithm suffices.  In practice, a single sample might return an incorrect result due to imperfect overlap with the ground state ($\eta < 1$), inherent failure probabilities of phase estimation, or quantum error correction failure events. In contrast, our algorithm is rigorously analysed; we use no Hamiltonian truncation, and upper-bound the number of samples needed in terms of $\eta$ and the target success probability.\footnote{We neglect quantum error correction failure events, though these can easily be suppressed to lower levels than other failure modes.}

\paragraph*{Hydrogen chains.}
As a benchmark system for assessing the scaling of quantum algorithms applied to quantum chemistry, we discuss hydrogen chains~\cite{Lee20, koridon2021orbital}. Using the best value $\lambda \sim \mathcal{O}(N^{1.34})$ given in \cite{koridon2021orbital}, our algorithm scales as $\widetilde{\mathcal{O}}(N^{2.68}/\Delta^2)$. For comparison, the scaling of qubitization is $\widetilde{\mathcal{O}}(N^{3.34}/\Delta)$ without truncation, and with heuristic truncations, $\widetilde{\mathcal{O}}(N^{2.3}/\Delta)$ for the sparse method of \cite{berry2019qubitization} and $\widetilde{\mathcal{O}}(N^{2.1}/\Delta)$ for the tensor hypercontraction approach of \cite{Lee20}. Hence, for constant $\Delta$, qubitization gives a better scaling than our algorithm if the proposed truncation schemes are accurate. However, we emphasize that our rigorous analysis does not make use of heuristic strategies for truncating Hamiltonian terms~\cite{Lee20, vonburg2021DoubleFactorized, berry2019qubitization} and that qubitization uses considerably more logical ancillae. Finally, if we are interested in extensive properties, where $\Delta \propto N$, then our approach scales as $\widetilde{\mathcal{O}}(N^{0.68})$, outperforming all qubitization algorithms.

\paragraph*{FeMoco.}
We estimate the costs of our algorithm applied to the Li \textit{et al.} FeMoco Hamiltonian~\cite{LiFeMoco}, another popular benchmark for which there have been several state-of-the-art resource studies~\cite{berry2019qubitization,Lee20,vonburg2021DoubleFactorized}.  We consider chemical accuracy $\Delta=0.0016$ Hartree, and use $\lambda = 1511$ Hartree, obtained using the bounds in~\cite{koridon2021orbital}.
We present our results in Fig.~\ref{fig:FeMoco}, illustrating the trade-off between the the expected number of gates per circuit and the number of samples required. Since the Hamiltonian from~\cite{LiFeMoco} has $152$ spin orbitals, each circuit uses $153$ qubits. 

\begin{figure}[t!]
    \centering
    \includegraphics[width=0.45
    \textwidth]{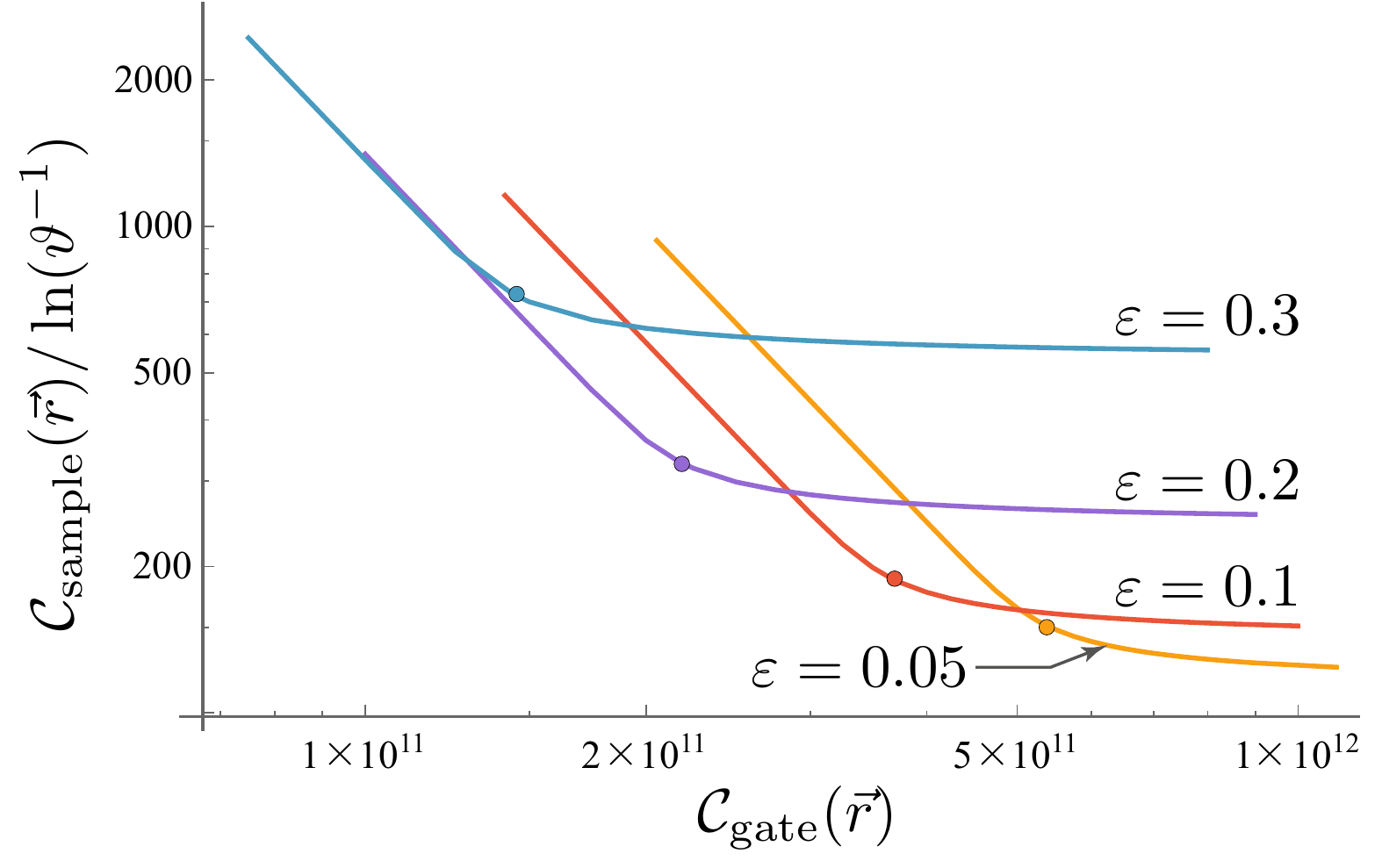}
    \caption{Log-log plot of $\mathcal{C}_{\mathrm{sample}}(\vec{r})/\ln(\vartheta^{-1})$ vs. $\mathcal{C}_{\mathrm{gate}}(\vec{r})$, for $\lambda = 1511$ (FeMoco~\cite{LiFeMoco,koridon2021orbital}), $\Delta= 0.0016$ (chemical accuracy), $\eta = 1$, and $\eps \in \{0.05, 0.1,0.2, 0.3\}$, for various choices of the runtime vector $\vec{r}$ calculated using the optimisation algorithms detailed in Appendix~\ref{app:optimiser}. Dots indicate the values that minimise the total expected complexity $2\mathcal{C}_{\mathrm{sample}}\cdot \mathcal{C}_{\mathrm{gate}}$, while curves are obtained by fixing $\mathcal{C}_{\mathrm{gate}}$ and minimising $\mathcal{C}_{\mathrm{sample}}$. To calculate the number of samples needed to guarantee an overall failure probability of $\leq \xi$ for ground state energy estimation (Theorem~\ref{thm:main}), one would multiply the $y$-axis by $\ln(\vartheta^{-1}) = \mathcal{O}\big(\log(\xi^{-1}) + \log\log(\delta^{-1})\big)$ (see \cite{lin21} for the explicit constants). As an example, $\xi = 0.1$ would give a multiplier of approximately $6$.}
    \label{fig:FeMoco}
\end{figure}

We have presented our gate counts as $\mathcal{C}_{\mathrm{gate}}$ controlled Pauli rotations, but {asymptotically our circuits can typically be realised using $\sim \! 2 \mathcal{C}_{\mathrm{gate}}$ Toffoli gates.  For modest system sizes and a modest number of logical ancilla ($\sim\! 40$), the Toffoli count is $\sim \! 6 \mathcal{C}_{\mathrm{gate}}$} (see Appendix~\ref{sec:GateComplexities}). The FeMoco resource estimate for the qDRIFT random compiler combined with phase estimation in~\cite[Appendix D]{Lee20} arrived at $10^{16}$ Toffoli gates per sample, which is $\sim\! 10^4$ times larger than $2 \mathcal{C}_{\mathrm{gate}}$ from the results in Fig.~\ref{fig:FeMoco}. Moreover, our rigorous analysis will likely be loose and overestimate resources; for instance, more aggressive---though heuristic---Hamiltonian rescaling is justifiable and can further reduce costs (see Appendix~\ref{App:TAU}).


\paragraph*{Acknowledgements.} We thank Sam McArdle for helpful discussions, especially with regard to calculations for the quantum chemistry examples, and Fernando Brand\~ao for discussions and support throughout this project.


\bibliography{random-phase}

\begin{thebibliography}{52}%
\makeatletter
\providecommand \@ifxundefined [1]{%
 \@ifx{#1\undefined}
}%
\providecommand \@ifnum [1]{%
 \ifnum #1\expandafter \@firstoftwo
 \else \expandafter \@secondoftwo
 \fi
}%
\providecommand \@ifx [1]{%
 \ifx #1\expandafter \@firstoftwo
 \else \expandafter \@secondoftwo
 \fi
}%
\providecommand \natexlab [1]{#1}%
\providecommand \enquote  [1]{``#1''}%
\providecommand \bibnamefont  [1]{#1}%
\providecommand \bibfnamefont [1]{#1}%
\providecommand \citenamefont [1]{#1}%
\providecommand \href@noop [0]{\@secondoftwo}%
\providecommand \href [0]{\begingroup \@sanitize@url \@href}%
\providecommand \@href[1]{\@@startlink{#1}\@@href}%
\providecommand \@@href[1]{\endgroup#1\@@endlink}%
\providecommand \@sanitize@url [0]{\catcode `\\12\catcode `\$12\catcode
  `\&12\catcode `\#12\catcode `\^12\catcode `\_12\catcode `\%12\relax}%
\providecommand \@@startlink[1]{}%
\providecommand \@@endlink[0]{}%
\providecommand \url  [0]{\begingroup\@sanitize@url \@url }%
\providecommand \@url [1]{\endgroup\@href {#1}{\urlprefix }}%
\providecommand \urlprefix  [0]{URL }%
\providecommand \Eprint [0]{\href }%
\providecommand \doibase [0]{https://doi.org/}%
\providecommand \selectlanguage [0]{\@gobble}%
\providecommand \bibinfo  [0]{\@secondoftwo}%
\providecommand \bibfield  [0]{\@secondoftwo}%
\providecommand \translation [1]{[#1]}%
\providecommand \BibitemOpen [0]{}%
\providecommand \bibitemStop [0]{}%
\providecommand \bibitemNoStop [0]{.\EOS\space}%
\providecommand \EOS [0]{\spacefactor3000\relax}%
\providecommand \BibitemShut  [1]{\csname bibitem#1\endcsname}%
\let\auto@bib@innerbib\@empty
\bibitem [{\citenamefont {Kitaev}\ \emph {et~al.}(2002)\citenamefont {Kitaev},
  \citenamefont {Shen}, \citenamefont {Vyalyi},\ and\ \citenamefont
  {Vyalyi}}]{kitaev2002classical}%
  \BibitemOpen
  \bibfield  {author} {\bibinfo {author} {\bibfnamefont {A.~Y.}\ \bibnamefont
  {Kitaev}}, \bibinfo {author} {\bibfnamefont {A.}~\bibnamefont {Shen}},
  \bibinfo {author} {\bibfnamefont {M.~N.}\ \bibnamefont {Vyalyi}},\ and\
  \bibinfo {author} {\bibfnamefont {M.~N.}\ \bibnamefont {Vyalyi}},\ }\href
  {https://doi.org/10.1090/gsm/047} {\emph {\bibinfo {title} {Classical and
  quantum computation}}}\ (\bibinfo  {publisher} {American Mathematical Soc.},\
  \bibinfo {year} {2002})\BibitemShut {NoStop}%
\bibitem [{\citenamefont {Abrams}\ and\ \citenamefont
  {Lloyd}(1999)}]{Abrams1999}%
  \BibitemOpen
  \bibfield  {author} {\bibinfo {author} {\bibfnamefont {D.~S.}\ \bibnamefont
  {Abrams}}\ and\ \bibinfo {author} {\bibfnamefont {S.}~\bibnamefont {Lloyd}},\
  }\bibfield  {title} {\bibinfo {title} {Quantum algorithm providing
  exponential speed increase for finding eigenvalues and eigenvectors},\ }\href
  {https://doi.org/10.1103/physrevlett.83.5162} {\bibfield  {journal} {\bibinfo
   {journal} {Physical Review Letters}\ }\textbf {\bibinfo {volume} {83}},\
  \bibinfo {pages} {5162} (\bibinfo {year} {1999})}\BibitemShut {NoStop}%
\bibitem [{\citenamefont {Poulin}\ \emph {et~al.}(2018)\citenamefont {Poulin},
  \citenamefont {Kitaev}, \citenamefont {Steiger}, \citenamefont {Hastings},\
  and\ \citenamefont {Troyer}}]{Poulin2018}%
  \BibitemOpen
  \bibfield  {author} {\bibinfo {author} {\bibfnamefont {D.}~\bibnamefont
  {Poulin}}, \bibinfo {author} {\bibfnamefont {A.}~\bibnamefont {Kitaev}},
  \bibinfo {author} {\bibfnamefont {D.~S.}\ \bibnamefont {Steiger}}, \bibinfo
  {author} {\bibfnamefont {M.~B.}\ \bibnamefont {Hastings}},\ and\ \bibinfo
  {author} {\bibfnamefont {M.}~\bibnamefont {Troyer}},\ }\bibfield  {title}
  {\bibinfo {title} {Quantum algorithm for spectral measurement with a lower
  gate count},\ }\bibfield  {journal} {\bibinfo  {journal} {Physical Review
  Letters}\ }\textbf {\bibinfo {volume} {121}},\ \href
  {https://doi.org/10.1103/physrevlett.121.010501}
  {10.1103/physrevlett.121.010501} (\bibinfo {year} {2018})\BibitemShut
  {NoStop}%
\bibitem [{\citenamefont {O'Brien}\ \emph {et~al.}(2019)\citenamefont
  {O'Brien}, \citenamefont {Tarasinski},\ and\ \citenamefont
  {Terhal}}]{obrien19}%
  \BibitemOpen
  \bibfield  {author} {\bibinfo {author} {\bibfnamefont {T.~E.}\ \bibnamefont
  {O'Brien}}, \bibinfo {author} {\bibfnamefont {B.}~\bibnamefont
  {Tarasinski}},\ and\ \bibinfo {author} {\bibfnamefont {B.~M.}\ \bibnamefont
  {Terhal}},\ }\bibfield  {title} {\bibinfo {title} {Quantum phase estimation
  of multiple eigenvalues for small-scale (noisy) experiments},\ }\href
  {https://doi.org/10.1088/1367-2630/aafb8e} {\bibfield  {journal} {\bibinfo
  {journal} {New Journal of Physics}\ }\textbf {\bibinfo {volume} {21}},\
  \bibinfo {pages} {023022} (\bibinfo {year} {2019})}\BibitemShut {NoStop}%
\bibitem [{\citenamefont {Dutkiewicz}\ \emph {et~al.}(2021)\citenamefont
  {Dutkiewicz}, \citenamefont {Terhal},\ and\ \citenamefont
  {O'Brien}}]{Terhal21}%
  \BibitemOpen
  \bibfield  {author} {\bibinfo {author} {\bibfnamefont {A.}~\bibnamefont
  {Dutkiewicz}}, \bibinfo {author} {\bibfnamefont {B.~M.}\ \bibnamefont
  {Terhal}},\ and\ \bibinfo {author} {\bibfnamefont {T.~E.}\ \bibnamefont
  {O'Brien}},\ }\bibfield  {title} {\bibinfo {title} {Heisenberg-limited
  quantum phase estimation of multiple eigenvalues with a single control
  qubit},\ }\href@noop {} {\bibfield  {journal} {\bibinfo  {journal} {arXiv
  preprint arXiv:2107.04605}\ } (\bibinfo {year} {2021})}\BibitemShut {NoStop}%
\bibitem [{\citenamefont {Lin}\ and\ \citenamefont {Tong}(2021)}]{lin21}%
  \BibitemOpen
  \bibfield  {author} {\bibinfo {author} {\bibfnamefont {L.}~\bibnamefont
  {Lin}}\ and\ \bibinfo {author} {\bibfnamefont {Y.}~\bibnamefont {Tong}},\
  }\bibfield  {title} {\bibinfo {title} {Heisenberg-limited ground state energy
  estimation for early fault-tolerant quantum computers},\ }\href@noop {}
  {\bibfield  {journal} {\bibinfo  {journal} {arXiv preprint arXiv:2102.11340}\
  } (\bibinfo {year} {2021})}\BibitemShut {NoStop}%
\bibitem [{\citenamefont {Poulin}\ \emph {et~al.}(2015)\citenamefont {Poulin},
  \citenamefont {Hastings}, \citenamefont {Wecker}, \citenamefont {Wiebe},
  \citenamefont {Doherty},\ and\ \citenamefont {Troyer}}]{poulin2014trotter}%
  \BibitemOpen
  \bibfield  {author} {\bibinfo {author} {\bibfnamefont {D.}~\bibnamefont
  {Poulin}}, \bibinfo {author} {\bibfnamefont {M.~B.}\ \bibnamefont
  {Hastings}}, \bibinfo {author} {\bibfnamefont {D.}~\bibnamefont {Wecker}},
  \bibinfo {author} {\bibfnamefont {N.}~\bibnamefont {Wiebe}}, \bibinfo
  {author} {\bibfnamefont {A.~C.}\ \bibnamefont {Doherty}},\ and\ \bibinfo
  {author} {\bibfnamefont {M.}~\bibnamefont {Troyer}},\ }\bibfield  {title}
  {\bibinfo {title} {The {T}rotter step size required for accurate quantum
  simulation of quantum chemistry},\ }\href@noop {} {\bibfield  {journal}
  {\bibinfo  {journal} {Quantum Information \& Computation}\ }\textbf {\bibinfo
  {volume} {15}},\ \bibinfo {pages} {361} (\bibinfo {year} {2015})}\BibitemShut
  {NoStop}%
\bibitem [{\citenamefont {Babbush}\ \emph {et~al.}(2015)\citenamefont
  {Babbush}, \citenamefont {McClean}, \citenamefont {Wecker}, \citenamefont
  {Aspuru-Guzik},\ and\ \citenamefont {Wiebe}}]{babbush2015chemical}%
  \BibitemOpen
  \bibfield  {author} {\bibinfo {author} {\bibfnamefont {R.}~\bibnamefont
  {Babbush}}, \bibinfo {author} {\bibfnamefont {J.}~\bibnamefont {McClean}},
  \bibinfo {author} {\bibfnamefont {D.}~\bibnamefont {Wecker}}, \bibinfo
  {author} {\bibfnamefont {A.}~\bibnamefont {Aspuru-Guzik}},\ and\ \bibinfo
  {author} {\bibfnamefont {N.}~\bibnamefont {Wiebe}},\ }\bibfield  {title}
  {\bibinfo {title} {Chemical basis of {T}rotter-{S}uzuki errors in quantum
  chemistry simulation},\ }\href {https://doi.org/10.1103/PhysRevA.91.022311}
  {\bibfield  {journal} {\bibinfo  {journal} {Physical Review A}\ }\textbf
  {\bibinfo {volume} {91}},\ \bibinfo {pages} {022311} (\bibinfo {year}
  {2015})}\BibitemShut {NoStop}%
\bibitem [{\citenamefont {Kivlichan}\ \emph {et~al.}(2020)\citenamefont
  {Kivlichan}, \citenamefont {Gidney}, \citenamefont {Berry}, \citenamefont
  {Wiebe}, \citenamefont {McClean}, \citenamefont {Sun}, \citenamefont {Jiang},
  \citenamefont {Rubin}, \citenamefont {Fowler}, \citenamefont {Aspuru-Guzik}
  \emph {et~al.}}]{kivlichan2020improved}%
  \BibitemOpen
  \bibfield  {author} {\bibinfo {author} {\bibfnamefont {I.~D.}\ \bibnamefont
  {Kivlichan}}, \bibinfo {author} {\bibfnamefont {C.}~\bibnamefont {Gidney}},
  \bibinfo {author} {\bibfnamefont {D.~W.}\ \bibnamefont {Berry}}, \bibinfo
  {author} {\bibfnamefont {N.}~\bibnamefont {Wiebe}}, \bibinfo {author}
  {\bibfnamefont {J.}~\bibnamefont {McClean}}, \bibinfo {author} {\bibfnamefont
  {W.}~\bibnamefont {Sun}}, \bibinfo {author} {\bibfnamefont {Z.}~\bibnamefont
  {Jiang}}, \bibinfo {author} {\bibfnamefont {N.}~\bibnamefont {Rubin}},
  \bibinfo {author} {\bibfnamefont {A.}~\bibnamefont {Fowler}}, \bibinfo
  {author} {\bibfnamefont {A.}~\bibnamefont {Aspuru-Guzik}}, \emph {et~al.},\
  }\bibfield  {title} {\bibinfo {title} {Improved fault-tolerant quantum
  simulation of condensed-phase correlated electrons via trotterization},\
  }\href {https://doi.org/10.22331/q-2020-07-16-296} {\bibfield  {journal}
  {\bibinfo  {journal} {Quantum}\ }\textbf {\bibinfo {volume} {4}},\ \bibinfo
  {pages} {296} (\bibinfo {year} {2020})}\BibitemShut {NoStop}%
\bibitem [{\citenamefont {Campbell}(2020)}]{campbell2020early}%
  \BibitemOpen
  \bibfield  {author} {\bibinfo {author} {\bibfnamefont {E.~T.}\ \bibnamefont
  {Campbell}},\ }\bibfield  {title} {\bibinfo {title} {Early fault-tolerant
  simulations of the {H}ubbard model},\ }\href@noop {} {\bibfield  {journal}
  {\bibinfo  {journal} {arXiv preprint arXiv:2012.09238}\ } (\bibinfo {year}
  {2020})}\BibitemShut {NoStop}%
\bibitem [{\citenamefont {McArdle}\ \emph {et~al.}(2021)\citenamefont
  {McArdle}, \citenamefont {Campbell},\ and\ \citenamefont
  {Su}}]{mcardle2021exploiting}%
  \BibitemOpen
  \bibfield  {author} {\bibinfo {author} {\bibfnamefont {S.}~\bibnamefont
  {McArdle}}, \bibinfo {author} {\bibfnamefont {E.}~\bibnamefont {Campbell}},\
  and\ \bibinfo {author} {\bibfnamefont {Y.}~\bibnamefont {Su}},\ }\bibfield
  {title} {\bibinfo {title} {Exploiting fermion number in factorized
  decompositions of the electronic structure {H}amiltonian},\ }\href@noop {}
  {\bibfield  {journal} {\bibinfo  {journal} {arXiv preprint arXiv:2107.07238}\
  } (\bibinfo {year} {2021})}\BibitemShut {NoStop}%
\bibitem [{\citenamefont {Helgaker}\ \emph {et~al.}(2014)\citenamefont
  {Helgaker}, \citenamefont {Jorgensen},\ and\ \citenamefont
  {Olsen}}]{helgaker2014molecular}%
  \BibitemOpen
  \bibfield  {author} {\bibinfo {author} {\bibfnamefont {T.}~\bibnamefont
  {Helgaker}}, \bibinfo {author} {\bibfnamefont {P.}~\bibnamefont
  {Jorgensen}},\ and\ \bibinfo {author} {\bibfnamefont {J.}~\bibnamefont
  {Olsen}},\ }\href {https://doi.org/10.1002/9781119019572} {\emph {\bibinfo
  {title} {Molecular electronic-structure theory}}}\ (\bibinfo  {publisher}
  {John Wiley \& Sons},\ \bibinfo {year} {2014})\BibitemShut {NoStop}%
\bibitem [{\citenamefont {Motta}\ \emph {et~al.}(2020)\citenamefont {Motta},
  \citenamefont {Gujarati}, \citenamefont {Rice}, \citenamefont {Kumar},
  \citenamefont {Masteran}, \citenamefont {Latone}, \citenamefont {Lee},
  \citenamefont {Valeev},\ and\ \citenamefont {Takeshita}}]{motta2020quantum}%
  \BibitemOpen
  \bibfield  {author} {\bibinfo {author} {\bibfnamefont {M.}~\bibnamefont
  {Motta}}, \bibinfo {author} {\bibfnamefont {T.~P.}\ \bibnamefont {Gujarati}},
  \bibinfo {author} {\bibfnamefont {J.~E.}\ \bibnamefont {Rice}}, \bibinfo
  {author} {\bibfnamefont {A.}~\bibnamefont {Kumar}}, \bibinfo {author}
  {\bibfnamefont {C.}~\bibnamefont {Masteran}}, \bibinfo {author}
  {\bibfnamefont {J.~A.}\ \bibnamefont {Latone}}, \bibinfo {author}
  {\bibfnamefont {E.}~\bibnamefont {Lee}}, \bibinfo {author} {\bibfnamefont
  {E.~F.}\ \bibnamefont {Valeev}},\ and\ \bibinfo {author} {\bibfnamefont
  {T.~Y.}\ \bibnamefont {Takeshita}},\ }\bibfield  {title} {\bibinfo {title}
  {Quantum simulation of electronic structure with transcorrelated
  {H}amiltonian: increasing accuracy without extra quantum resources},\ }\href
  {https://doi.org/10.1039/D0CP04106H} {\bibfield  {journal} {\bibinfo
  {journal} {Physical Chemistry Chemical Physics}\ }\textbf {\bibinfo {volume}
  {22}},\ \bibinfo {pages} {24270} (\bibinfo {year} {2020})}\BibitemShut
  {NoStop}%
\bibitem [{\citenamefont {McArdle}\ and\ \citenamefont
  {Tew}(2020)}]{mcardle2020improving}%
  \BibitemOpen
  \bibfield  {author} {\bibinfo {author} {\bibfnamefont {S.}~\bibnamefont
  {McArdle}}\ and\ \bibinfo {author} {\bibfnamefont {D.~P.}\ \bibnamefont
  {Tew}},\ }\bibfield  {title} {\bibinfo {title} {Improving the accuracy of
  quantum computational chemistry using the transcorrelated method},\
  }\href@noop {} {\bibfield  {journal} {\bibinfo  {journal} {arXiv preprint
  arXiv:2006.11181}\ } (\bibinfo {year} {2020})}\BibitemShut {NoStop}%
\bibitem [{\citenamefont {Babbush}\ \emph {et~al.}(2018)\citenamefont
  {Babbush}, \citenamefont {Gidney}, \citenamefont {Berry}, \citenamefont
  {Wiebe}, \citenamefont {McClean}, \citenamefont {Paler}, \citenamefont
  {Fowler},\ and\ \citenamefont {Neven}}]{Babbush2018}%
  \BibitemOpen
  \bibfield  {author} {\bibinfo {author} {\bibfnamefont {R.}~\bibnamefont
  {Babbush}}, \bibinfo {author} {\bibfnamefont {C.}~\bibnamefont {Gidney}},
  \bibinfo {author} {\bibfnamefont {D.~W.}\ \bibnamefont {Berry}}, \bibinfo
  {author} {\bibfnamefont {N.}~\bibnamefont {Wiebe}}, \bibinfo {author}
  {\bibfnamefont {J.}~\bibnamefont {McClean}}, \bibinfo {author} {\bibfnamefont
  {A.}~\bibnamefont {Paler}}, \bibinfo {author} {\bibfnamefont
  {A.}~\bibnamefont {Fowler}},\ and\ \bibinfo {author} {\bibfnamefont
  {H.}~\bibnamefont {Neven}},\ }\bibfield  {title} {\bibinfo {title} {Encoding
  electronic spectra in quantum circuits with linear {T} complexity},\
  }\bibfield  {journal} {\bibinfo  {journal} {Physical Review X}\ }\textbf
  {\bibinfo {volume} {8}},\ \href {https://doi.org/10.1103/physrevx.8.041015}
  {10.1103/physrevx.8.041015} (\bibinfo {year} {2018})\BibitemShut {NoStop}%
\bibitem [{\citenamefont {Low}\ \emph {et~al.}(2018)\citenamefont {Low},
  \citenamefont {Kliuchnikov},\ and\ \citenamefont
  {Schaeffer}}]{low2018trading}%
  \BibitemOpen
  \bibfield  {author} {\bibinfo {author} {\bibfnamefont {G.~H.}\ \bibnamefont
  {Low}}, \bibinfo {author} {\bibfnamefont {V.}~\bibnamefont {Kliuchnikov}},\
  and\ \bibinfo {author} {\bibfnamefont {L.}~\bibnamefont {Schaeffer}},\
  }\bibfield  {title} {\bibinfo {title} {Trading {T}-gates for dirty qubits in
  state preparation and unitary synthesis},\ }\href@noop {} {\bibfield
  {journal} {\bibinfo  {journal} {arXiv preprint arXiv:1812.00954}\ } (\bibinfo
  {year} {2018})}\BibitemShut {NoStop}%
\bibitem [{\citenamefont {Babbush}\ \emph {et~al.}(2019)\citenamefont
  {Babbush}, \citenamefont {Berry}, \citenamefont {McClean},\ and\
  \citenamefont {Neven}}]{babbush2019quantum}%
  \BibitemOpen
  \bibfield  {author} {\bibinfo {author} {\bibfnamefont {R.}~\bibnamefont
  {Babbush}}, \bibinfo {author} {\bibfnamefont {D.~W.}\ \bibnamefont {Berry}},
  \bibinfo {author} {\bibfnamefont {J.~R.}\ \bibnamefont {McClean}},\ and\
  \bibinfo {author} {\bibfnamefont {H.}~\bibnamefont {Neven}},\ }\bibfield
  {title} {\bibinfo {title} {Quantum simulation of chemistry with sublinear
  scaling in basis size},\ }\href {https://doi.org/10.1038/s41534-019-0199-y}
  {\bibfield  {journal} {\bibinfo  {journal} {npj Quantum Information}\
  }\textbf {\bibinfo {volume} {5}},\ \bibinfo {pages} {1} (\bibinfo {year}
  {2019})}\BibitemShut {NoStop}%
\bibitem [{\citenamefont {Berry}\ \emph {et~al.}(2019)\citenamefont {Berry},
  \citenamefont {Gidney}, \citenamefont {Motta}, \citenamefont {McClean},\ and\
  \citenamefont {Babbush}}]{berry2019qubitization}%
  \BibitemOpen
  \bibfield  {author} {\bibinfo {author} {\bibfnamefont {D.~W.}\ \bibnamefont
  {Berry}}, \bibinfo {author} {\bibfnamefont {C.}~\bibnamefont {Gidney}},
  \bibinfo {author} {\bibfnamefont {M.}~\bibnamefont {Motta}}, \bibinfo
  {author} {\bibfnamefont {J.~R.}\ \bibnamefont {McClean}},\ and\ \bibinfo
  {author} {\bibfnamefont {R.}~\bibnamefont {Babbush}},\ }\bibfield  {title}
  {\bibinfo {title} {Qubitization of arbitrary basis quantum chemistry
  leveraging sparsity and low rank factorization},\ }\href
  {https://doi.org/10.22331/q-2019-12-02-208} {\bibfield  {journal} {\bibinfo
  {journal} {Quantum}\ }\textbf {\bibinfo {volume} {3}},\ \bibinfo {pages}
  {208} (\bibinfo {year} {2019})}\BibitemShut {NoStop}%
\bibitem [{\citenamefont {von Burg}\ \emph {et~al.}(2021)\citenamefont {von
  Burg}, \citenamefont {Low}, \citenamefont {H\"aner}, \citenamefont {Steiger},
  \citenamefont {Reiher}, \citenamefont {Roetteler},\ and\ \citenamefont
  {Troyer}}]{vonburg2021DoubleFactorized}%
  \BibitemOpen
  \bibfield  {author} {\bibinfo {author} {\bibfnamefont {V.}~\bibnamefont {von
  Burg}}, \bibinfo {author} {\bibfnamefont {G.~H.}\ \bibnamefont {Low}},
  \bibinfo {author} {\bibfnamefont {T.}~\bibnamefont {H\"aner}}, \bibinfo
  {author} {\bibfnamefont {D.~S.}\ \bibnamefont {Steiger}}, \bibinfo {author}
  {\bibfnamefont {M.}~\bibnamefont {Reiher}}, \bibinfo {author} {\bibfnamefont
  {M.}~\bibnamefont {Roetteler}},\ and\ \bibinfo {author} {\bibfnamefont
  {M.}~\bibnamefont {Troyer}},\ }\bibfield  {title} {\bibinfo {title} {Quantum
  computing enhanced computational catalysis},\ }\href
  {https://doi.org/10.1103/PhysRevResearch.3.033055} {\bibfield  {journal}
  {\bibinfo  {journal} {Physical Review Research}\ }\textbf {\bibinfo {volume}
  {3}},\ \bibinfo {pages} {033055} (\bibinfo {year} {2021})}\BibitemShut
  {NoStop}%
\bibitem [{\citenamefont {Lee}\ \emph {et~al.}(2021)\citenamefont {Lee},
  \citenamefont {Berry}, \citenamefont {Gidney}, \citenamefont {Huggins},
  \citenamefont {McClean}, \citenamefont {Wiebe},\ and\ \citenamefont
  {Babbush}}]{Lee20}%
  \BibitemOpen
  \bibfield  {author} {\bibinfo {author} {\bibfnamefont {J.}~\bibnamefont
  {Lee}}, \bibinfo {author} {\bibfnamefont {D.~W.}\ \bibnamefont {Berry}},
  \bibinfo {author} {\bibfnamefont {C.}~\bibnamefont {Gidney}}, \bibinfo
  {author} {\bibfnamefont {W.~J.}\ \bibnamefont {Huggins}}, \bibinfo {author}
  {\bibfnamefont {J.~R.}\ \bibnamefont {McClean}}, \bibinfo {author}
  {\bibfnamefont {N.}~\bibnamefont {Wiebe}},\ and\ \bibinfo {author}
  {\bibfnamefont {R.}~\bibnamefont {Babbush}},\ }\bibfield  {title} {\bibinfo
  {title} {Even more efficient quantum computations of chemistry through tensor
  hypercontraction},\ }\href {https://doi.org/10.1103/PRXQuantum.2.030305}
  {\bibfield  {journal} {\bibinfo  {journal} {PRX Quantum}\ }\textbf {\bibinfo
  {volume} {2}},\ \bibinfo {pages} {030305} (\bibinfo {year}
  {2021})}\BibitemShut {NoStop}%
\bibitem [{\citenamefont {Campbell}(2019)}]{campbell19}%
  \BibitemOpen
  \bibfield  {author} {\bibinfo {author} {\bibfnamefont {E.}~\bibnamefont
  {Campbell}},\ }\bibfield  {title} {\bibinfo {title} {Random compiler for fast
  {H}amiltonian simulation},\ }\href
  {https://doi.org/10.1103/PhysRevLett.123.070503} {\bibfield  {journal}
  {\bibinfo  {journal} {Physical Review Letters}\ }\textbf {\bibinfo {volume}
  {123}},\ \bibinfo {pages} {070503} (\bibinfo {year} {2019})}\BibitemShut
  {NoStop}%
\bibitem [{\citenamefont {Kivlichan}\ \emph {et~al.}(2019)\citenamefont
  {Kivlichan}, \citenamefont {Granade},\ and\ \citenamefont {Wiebe}}]{Wiebe19}%
  \BibitemOpen
  \bibfield  {author} {\bibinfo {author} {\bibfnamefont {I.~D.}\ \bibnamefont
  {Kivlichan}}, \bibinfo {author} {\bibfnamefont {C.~E.}\ \bibnamefont
  {Granade}},\ and\ \bibinfo {author} {\bibfnamefont {N.}~\bibnamefont
  {Wiebe}},\ }\bibfield  {title} {\bibinfo {title} {Phase estimation with
  randomized {H}amiltonians},\ }\href@noop {} {\bibfield  {journal} {\bibinfo
  {journal} {arXiv preprint arXiv:1907.10070}\ } (\bibinfo {year}
  {2019})}\BibitemShut {NoStop}%
\bibitem [{\citenamefont {Ouyang}\ \emph {et~al.}(2020)\citenamefont {Ouyang},
  \citenamefont {White},\ and\ \citenamefont
  {Campbell}}]{ouyang2020compilation}%
  \BibitemOpen
  \bibfield  {author} {\bibinfo {author} {\bibfnamefont {Y.}~\bibnamefont
  {Ouyang}}, \bibinfo {author} {\bibfnamefont {D.~R.}\ \bibnamefont {White}},\
  and\ \bibinfo {author} {\bibfnamefont {E.~T.}\ \bibnamefont {Campbell}},\
  }\bibfield  {title} {\bibinfo {title} {Compilation by stochastic
  {H}amiltonian sparsification},\ }\href
  {https://doi.org/10.22331/q-2020-02-27-235} {\bibfield  {journal} {\bibinfo
  {journal} {Quantum}\ }\textbf {\bibinfo {volume} {4}},\ \bibinfo {pages}
  {235} (\bibinfo {year} {2020})}\BibitemShut {NoStop}%
\bibitem [{\citenamefont {Childs}\ and\ \citenamefont
  {Wiebe}(2012)}]{childs2012}%
  \BibitemOpen
  \bibfield  {author} {\bibinfo {author} {\bibfnamefont {A.~M.}\ \bibnamefont
  {Childs}}\ and\ \bibinfo {author} {\bibfnamefont {N.}~\bibnamefont {Wiebe}},\
  }\bibfield  {title} {\bibinfo {title} {Hamiltonian simulation using linear
  combinations of unitary operations},\ }\href@noop {} {\bibfield  {journal}
  {\bibinfo  {journal} {Quantum Information \& Computation}\ }\textbf {\bibinfo
  {volume} {12}} (\bibinfo {year} {2012})}\BibitemShut {NoStop}%
\bibitem [{\citenamefont {Berry}\ \emph {et~al.}(2015)\citenamefont {Berry},
  \citenamefont {Childs}, \citenamefont {Cleve}, \citenamefont {Kothari},\ and\
  \citenamefont {Somma}}]{berry15}%
  \BibitemOpen
  \bibfield  {author} {\bibinfo {author} {\bibfnamefont {D.~W.}\ \bibnamefont
  {Berry}}, \bibinfo {author} {\bibfnamefont {A.~M.}\ \bibnamefont {Childs}},
  \bibinfo {author} {\bibfnamefont {R.}~\bibnamefont {Cleve}}, \bibinfo
  {author} {\bibfnamefont {R.}~\bibnamefont {Kothari}},\ and\ \bibinfo {author}
  {\bibfnamefont {R.~D.}\ \bibnamefont {Somma}},\ }\bibfield  {title} {\bibinfo
  {title} {Simulating {H}amiltonian dynamics with a truncated {T}aylor
  series},\ }\href {https://doi.org/10.1103/PhysRevLett.114.090502} {\bibfield
  {journal} {\bibinfo  {journal} {Physical Review Letters}\ }\textbf {\bibinfo
  {volume} {114}},\ \bibinfo {pages} {090502} (\bibinfo {year}
  {2015})}\BibitemShut {NoStop}%
\bibitem [{\citenamefont {Jordan}\ and\ \citenamefont
  {Wigner}(1928)}]{jordan28}%
  \BibitemOpen
  \bibfield  {author} {\bibinfo {author} {\bibfnamefont {P.}~\bibnamefont
  {Jordan}}\ and\ \bibinfo {author} {\bibfnamefont {E.}~\bibnamefont
  {Wigner}},\ }\bibfield  {title} {\bibinfo {title} {{\"U}ber das {P}aulische
  {{\"A}}quivalenzverbot},\ }\href {https://doi.org/10.1007/BF01331938}
  {\bibfield  {journal} {\bibinfo  {journal} {Zeitschrift f{\"u}r Physik}\
  }\textbf {\bibinfo {volume} {47}},\ \bibinfo {pages} {631} (\bibinfo {year}
  {1928})}\BibitemShut {NoStop}%
\bibitem [{\citenamefont {Verstraete}\ and\ \citenamefont
  {Cirac}(2005)}]{Verstraete2005}%
  \BibitemOpen
  \bibfield  {author} {\bibinfo {author} {\bibfnamefont {F.}~\bibnamefont
  {Verstraete}}\ and\ \bibinfo {author} {\bibfnamefont {J.~I.}\ \bibnamefont
  {Cirac}},\ }\bibfield  {title} {\bibinfo {title} {Mapping local
  {H}amiltonians of fermions to local {H}amiltonians of spins},\ }\href
  {https://doi.org/10.1088/1742-5468/2005/09/p09012} {\bibfield  {journal}
  {\bibinfo  {journal} {Journal of Statistical Mechanics: Theory and
  Experiment}\ }\textbf {\bibinfo {volume} {2005}},\ \bibinfo {pages} {P09012}
  (\bibinfo {year} {2005})}\BibitemShut {NoStop}%
\bibitem [{\citenamefont {Seeley}\ \emph {et~al.}(2012)\citenamefont {Seeley},
  \citenamefont {Richard},\ and\ \citenamefont {Love}}]{Seeley2012}%
  \BibitemOpen
  \bibfield  {author} {\bibinfo {author} {\bibfnamefont {J.~T.}\ \bibnamefont
  {Seeley}}, \bibinfo {author} {\bibfnamefont {M.~J.}\ \bibnamefont
  {Richard}},\ and\ \bibinfo {author} {\bibfnamefont {P.~J.}\ \bibnamefont
  {Love}},\ }\bibfield  {title} {\bibinfo {title} {The {B}ravyi-{K}itaev
  transformation for quantum computation of electronic structure},\ }\href
  {https://doi.org/10.1063/1.4768229} {\bibfield  {journal} {\bibinfo
  {journal} {The Journal of Chemical Physics}\ }\textbf {\bibinfo {volume}
  {137}},\ \bibinfo {pages} {224109} (\bibinfo {year} {2012})}\BibitemShut
  {NoStop}%
\bibitem [{\citenamefont {Havl{\'{\i}}{\v{c}}ek}\ \emph
  {et~al.}(2017)\citenamefont {Havl{\'{\i}}{\v{c}}ek}, \citenamefont {Troyer},\
  and\ \citenamefont {Whitfield}}]{Havlek2017}%
  \BibitemOpen
  \bibfield  {author} {\bibinfo {author} {\bibfnamefont {V.}~\bibnamefont
  {Havl{\'{\i}}{\v{c}}ek}}, \bibinfo {author} {\bibfnamefont {M.}~\bibnamefont
  {Troyer}},\ and\ \bibinfo {author} {\bibfnamefont {J.~D.}\ \bibnamefont
  {Whitfield}},\ }\bibfield  {title} {\bibinfo {title} {Operator locality in
  the quantum simulation of fermionic models},\ }\bibfield  {journal} {\bibinfo
   {journal} {Physical Review A}\ }\textbf {\bibinfo {volume} {95}},\ \href
  {https://doi.org/10.1103/physreva.95.032332} {10.1103/physreva.95.032332}
  (\bibinfo {year} {2017})\BibitemShut {NoStop}%
\bibitem [{\citenamefont {Derby}\ \emph {et~al.}(2021)\citenamefont {Derby},
  \citenamefont {Klassen}, \citenamefont {Bausch},\ and\ \citenamefont
  {Cubitt}}]{Derby2021}%
  \BibitemOpen
  \bibfield  {author} {\bibinfo {author} {\bibfnamefont {C.}~\bibnamefont
  {Derby}}, \bibinfo {author} {\bibfnamefont {J.}~\bibnamefont {Klassen}},
  \bibinfo {author} {\bibfnamefont {J.}~\bibnamefont {Bausch}},\ and\ \bibinfo
  {author} {\bibfnamefont {T.}~\bibnamefont {Cubitt}},\ }\bibfield  {title}
  {\bibinfo {title} {Compact fermion to qubit mappings},\ }\bibfield  {journal}
  {\bibinfo  {journal} {Physical Review B}\ }\textbf {\bibinfo {volume}
  {104}},\ \href {https://doi.org/10.1103/physrevb.104.035118}
  {10.1103/physrevb.104.035118} (\bibinfo {year} {2021})\BibitemShut {NoStop}%
\bibitem [{\citenamefont {Martyn}\ \emph {et~al.}(2021)\citenamefont {Martyn},
  \citenamefont {Rossi}, \citenamefont {Tan},\ and\ \citenamefont
  {Chuang}}]{martyn21}%
  \BibitemOpen
  \bibfield  {author} {\bibinfo {author} {\bibfnamefont {J.~M.}\ \bibnamefont
  {Martyn}}, \bibinfo {author} {\bibfnamefont {Z.~M.}\ \bibnamefont {Rossi}},
  \bibinfo {author} {\bibfnamefont {A.~K.}\ \bibnamefont {Tan}},\ and\ \bibinfo
  {author} {\bibfnamefont {I.~L.}\ \bibnamefont {Chuang}},\ }\bibfield  {title}
  {\bibinfo {title} {A grand unification of quantum algorithms},\ }\href@noop
  {} {\bibfield  {journal} {\bibinfo  {journal} {arXiv preprint
  arXiv:2105.02859}\ } (\bibinfo {year} {2021})}\BibitemShut {NoStop}%
\bibitem [{\citenamefont {Gily{\'e}n}\ \emph {et~al.}(2019)\citenamefont
  {Gily{\'e}n}, \citenamefont {Su}, \citenamefont {Low},\ and\ \citenamefont
  {Wiebe}}]{gilyen19}%
  \BibitemOpen
  \bibfield  {author} {\bibinfo {author} {\bibfnamefont {A.}~\bibnamefont
  {Gily{\'e}n}}, \bibinfo {author} {\bibfnamefont {Y.}~\bibnamefont {Su}},
  \bibinfo {author} {\bibfnamefont {G.~H.}\ \bibnamefont {Low}},\ and\ \bibinfo
  {author} {\bibfnamefont {N.}~\bibnamefont {Wiebe}},\ }\bibfield  {title}
  {\bibinfo {title} {Quantum singular value transformation and beyond:
  exponential improvements for quantum matrix arithmetics},\ }in\ \href
  {https://doi.org/10.1145/3313276.3316366} {\emph {\bibinfo {booktitle}
  {Proceedings of the 51st Annual ACM SIGACT Symposium on Theory of
  Computing}}}\ (\bibinfo {year} {2019})\ pp.\ \bibinfo {pages}
  {193--204}\BibitemShut {NoStop}%
\bibitem [{\citenamefont {Somma}\ \emph {et~al.}(2002)\citenamefont {Somma},
  \citenamefont {Ortiz}, \citenamefont {Gubernatis}, \citenamefont {Knill},\
  and\ \citenamefont {Laflamme}}]{Somma2002}%
  \BibitemOpen
  \bibfield  {author} {\bibinfo {author} {\bibfnamefont {R.}~\bibnamefont
  {Somma}}, \bibinfo {author} {\bibfnamefont {G.}~\bibnamefont {Ortiz}},
  \bibinfo {author} {\bibfnamefont {J.~E.}\ \bibnamefont {Gubernatis}},
  \bibinfo {author} {\bibfnamefont {E.}~\bibnamefont {Knill}},\ and\ \bibinfo
  {author} {\bibfnamefont {R.}~\bibnamefont {Laflamme}},\ }\bibfield  {title}
  {\bibinfo {title} {Simulating physical phenomena by quantum networks},\
  }\bibfield  {journal} {\bibinfo  {journal} {Physical Review A}\ }\textbf
  {\bibinfo {volume} {65}},\ \href {https://doi.org/10.1103/physreva.65.042323}
  {10.1103/physreva.65.042323} (\bibinfo {year} {2002})\BibitemShut {NoStop}%
\bibitem [{\citenamefont {Somma}(2019)}]{Somma2019}%
  \BibitemOpen
  \bibfield  {author} {\bibinfo {author} {\bibfnamefont {R.~D.}\ \bibnamefont
  {Somma}},\ }\bibfield  {title} {\bibinfo {title} {Quantum eigenvalue
  estimation via time series analysis},\ }\href
  {https://doi.org/10.1088/1367-2630/ab5c60} {\bibfield  {journal} {\bibinfo
  {journal} {New Journal of Physics}\ }\textbf {\bibinfo {volume} {21}},\
  \bibinfo {pages} {123025} (\bibinfo {year} {2019})}\BibitemShut {NoStop}%
\bibitem [{\citenamefont {van Apeldoorn}\ \emph {et~al.}(2020)\citenamefont
  {van Apeldoorn}, \citenamefont {Gily{\'e}n}, \citenamefont {Gribling},\ and\
  \citenamefont {de~Wolf}}]{vanapeldoorn20}%
  \BibitemOpen
  \bibfield  {author} {\bibinfo {author} {\bibfnamefont {J.}~\bibnamefont {van
  Apeldoorn}}, \bibinfo {author} {\bibfnamefont {A.}~\bibnamefont
  {Gily{\'e}n}}, \bibinfo {author} {\bibfnamefont {S.}~\bibnamefont
  {Gribling}},\ and\ \bibinfo {author} {\bibfnamefont {R.}~\bibnamefont
  {de~Wolf}},\ }\bibfield  {title} {\bibinfo {title} {Quantum sdp-solvers:
  Better upper and lower bounds},\ }\href
  {https://doi.org/10.22331/q-2020-02-14-230} {\bibfield  {journal} {\bibinfo
  {journal} {Quantum}\ }\textbf {\bibinfo {volume} {4}},\ \bibinfo {pages}
  {230} (\bibinfo {year} {2020})}\BibitemShut {NoStop}%
\bibitem [{\citenamefont {Koridon}\ \emph {et~al.}(2021)\citenamefont
  {Koridon}, \citenamefont {Yalouz}, \citenamefont {Senjean}, \citenamefont
  {Buda}, \citenamefont {O'Brien},\ and\ \citenamefont
  {Visscher}}]{koridon2021orbital}%
  \BibitemOpen
  \bibfield  {author} {\bibinfo {author} {\bibfnamefont {E.}~\bibnamefont
  {Koridon}}, \bibinfo {author} {\bibfnamefont {S.}~\bibnamefont {Yalouz}},
  \bibinfo {author} {\bibfnamefont {B.}~\bibnamefont {Senjean}}, \bibinfo
  {author} {\bibfnamefont {F.}~\bibnamefont {Buda}}, \bibinfo {author}
  {\bibfnamefont {T.~E.}\ \bibnamefont {O'Brien}},\ and\ \bibinfo {author}
  {\bibfnamefont {L.}~\bibnamefont {Visscher}},\ }\bibfield  {title} {\bibinfo
  {title} {Orbital transformations to reduce the 1-norm of the electronic
  structure {H}amiltonian for quantum computing applications},\ }\href
  {https://doi.org/10.1103/PhysRevResearch.3.033127} {\bibfield  {journal}
  {\bibinfo  {journal} {Physical Review Research}\ }\textbf {\bibinfo {volume}
  {3}},\ \bibinfo {pages} {033127} (\bibinfo {year} {2021})}\BibitemShut
  {NoStop}%
\bibitem [{\citenamefont {Li}\ \emph {et~al.}(2019)\citenamefont {Li},
  \citenamefont {Li}, \citenamefont {Dattani}, \citenamefont {Umrigar},\ and\
  \citenamefont {Chan}}]{LiFeMoco}%
  \BibitemOpen
  \bibfield  {author} {\bibinfo {author} {\bibfnamefont {Z.}~\bibnamefont
  {Li}}, \bibinfo {author} {\bibfnamefont {J.}~\bibnamefont {Li}}, \bibinfo
  {author} {\bibfnamefont {N.~S.}\ \bibnamefont {Dattani}}, \bibinfo {author}
  {\bibfnamefont {C.~J.}\ \bibnamefont {Umrigar}},\ and\ \bibinfo {author}
  {\bibfnamefont {G.~K.-L.}\ \bibnamefont {Chan}},\ }\bibfield  {title}
  {\bibinfo {title} {The electronic complexity of the ground-state of the femo
  cofactor of nitrogenase as relevant to quantum simulations},\ }\href
  {https://doi.org/10.1063/1.5063376} {\bibfield  {journal} {\bibinfo
  {journal} {The Journal of Chemical Physics}\ }\textbf {\bibinfo {volume}
  {150}},\ \bibinfo {pages} {024302} (\bibinfo {year} {2019})}\BibitemShut
  {NoStop}%
\bibitem [{\citenamefont {Low}\ and\ \citenamefont {Chuang}(2017)}]{low17}%
  \BibitemOpen
  \bibfield  {author} {\bibinfo {author} {\bibfnamefont {G.~H.}\ \bibnamefont
  {Low}}\ and\ \bibinfo {author} {\bibfnamefont {I.~L.}\ \bibnamefont
  {Chuang}},\ }\bibfield  {title} {\bibinfo {title} {Hamiltonian simulation by
  uniform spectral amplification},\ }\href@noop {} {\bibfield  {journal}
  {\bibinfo  {journal} {arXiv preprint arXiv:1707.05391}\ } (\bibinfo {year}
  {2017})}\BibitemShut {NoStop}%
\bibitem [{\citenamefont {Sachdeva}\ and\ \citenamefont
  {Vishnoi}(2014)}]{sachdeva14}%
  \BibitemOpen
  \bibfield  {author} {\bibinfo {author} {\bibfnamefont {S.}~\bibnamefont
  {Sachdeva}}\ and\ \bibinfo {author} {\bibfnamefont {N.~K.}\ \bibnamefont
  {Vishnoi}},\ }\bibfield  {title} {\bibinfo {title} {Faster algorithms via
  approximation theory},\ }\href {https://doi.org/10.1561/0400000065}
  {\bibfield  {journal} {\bibinfo  {journal} {Foundations and Trends in
  Theoretical Computer Science}\ }\textbf {\bibinfo {volume} {9}},\ \bibinfo
  {pages} {125} (\bibinfo {year} {2014})}\BibitemShut {NoStop}%
\bibitem [{\citenamefont {Makis}\ \emph {et~al.}(2013)\citenamefont {Makis},
  \citenamefont {Volodin},\ and\ \citenamefont {Short}}]{makis13}%
  \BibitemOpen
  \bibfield  {author} {\bibinfo {author} {\bibfnamefont {V.}~\bibnamefont
  {Makis}}, \bibinfo {author} {\bibfnamefont {A.}~\bibnamefont {Volodin}},\
  and\ \bibinfo {author} {\bibfnamefont {M.}~\bibnamefont {Short}},\ }\bibfield
   {title} {\bibinfo {title} {Improved inequalities for the {P}oisson and
  binomial distribution and upper tail quantile functions},\ }\href
  {https://doi.org/10.1155/2013/412958} {\bibfield  {journal} {\bibinfo
  {journal} {ISRN Probability and Statistics}\ }\textbf {\bibinfo {volume}
  {2013}},\ \bibinfo {pages} {412958} (\bibinfo {year} {2013})}\BibitemShut
  {NoStop}%
\bibitem [{\citenamefont {Kasperkovitz}(1980)}]{kasperkovitz80}%
  \BibitemOpen
  \bibfield  {author} {\bibinfo {author} {\bibfnamefont {P.}~\bibnamefont
  {Kasperkovitz}},\ }\bibfield  {title} {\bibinfo {title} {Asymptotic
  approximations for modified {B}essel functions},\ }\href
  {https://doi.org/10.1063/1.524310} {\bibfield  {journal} {\bibinfo  {journal}
  {Journal of Mathematical Physics}\ }\textbf {\bibinfo {volume} {21}},\
  \bibinfo {pages} {6} (\bibinfo {year} {1980})}\BibitemShut {NoStop}%
\bibitem [{\citenamefont {Gribling}\ \emph {et~al.}(2021)\citenamefont
  {Gribling}, \citenamefont {Kerenidis},\ and\ \citenamefont
  {Szil{\'a}gyi}}]{gribling21}%
  \BibitemOpen
  \bibfield  {author} {\bibinfo {author} {\bibfnamefont {S.}~\bibnamefont
  {Gribling}}, \bibinfo {author} {\bibfnamefont {I.}~\bibnamefont
  {Kerenidis}},\ and\ \bibinfo {author} {\bibfnamefont {D.}~\bibnamefont
  {Szil{\'a}gyi}},\ }\bibfield  {title} {\bibinfo {title} {Improving quantum
  linear system solvers via a gradient descent perspective},\ }\href@noop {}
  {\bibfield  {journal} {\bibinfo  {journal} {arXiv preprint arXiv:2109.04248}\
  } (\bibinfo {year} {2021})}\BibitemShut {NoStop}%
\bibitem [{\citenamefont {Gottesman}(1998)}]{gottesman1998heisenberg}%
  \BibitemOpen
  \bibfield  {author} {\bibinfo {author} {\bibfnamefont {D.}~\bibnamefont
  {Gottesman}},\ }\href@noop {} {\bibinfo {title} {The heisenberg
  representation of quantum computers}} (\bibinfo {year} {1998}),\ \Eprint
  {https://arxiv.org/abs/quant-ph/9807006} {arXiv:quant-ph/9807006 [quant-ph]}
  \BibitemShut {NoStop}%
\bibitem [{\citenamefont {Aaronson}\ and\ \citenamefont
  {Gottesman}(2004)}]{Aaronson2004}%
  \BibitemOpen
  \bibfield  {author} {\bibinfo {author} {\bibfnamefont {S.}~\bibnamefont
  {Aaronson}}\ and\ \bibinfo {author} {\bibfnamefont {D.}~\bibnamefont
  {Gottesman}},\ }\bibfield  {title} {\bibinfo {title} {Improved simulation of
  stabilizer circuits},\ }\bibfield  {journal} {\bibinfo  {journal} {Physical
  Review A}\ }\textbf {\bibinfo {volume} {70}},\ \href
  {https://doi.org/10.1103/physreva.70.052328} {10.1103/physreva.70.052328}
  (\bibinfo {year} {2004})\BibitemShut {NoStop}%
\bibitem [{\citenamefont {Bocharov}\ \emph
  {et~al.}(2015{\natexlab{a}})\citenamefont {Bocharov}, \citenamefont
  {Roetteler},\ and\ \citenamefont {Svore}}]{bocharov15b}%
  \BibitemOpen
  \bibfield  {author} {\bibinfo {author} {\bibfnamefont {A.}~\bibnamefont
  {Bocharov}}, \bibinfo {author} {\bibfnamefont {M.}~\bibnamefont
  {Roetteler}},\ and\ \bibinfo {author} {\bibfnamefont {K.~M.}\ \bibnamefont
  {Svore}},\ }\bibfield  {title} {\bibinfo {title} {Efficient synthesis of
  universal repeat-until-success quantum circuits},\ }\href
  {https://doi.org/10.1103/PhysRevLett.114.080502} {\bibfield  {journal}
  {\bibinfo  {journal} {Physical Review Letters}\ }\textbf {\bibinfo {volume}
  {114}},\ \bibinfo {pages} {080502} (\bibinfo {year}
  {2015}{\natexlab{a}})}\BibitemShut {NoStop}%
\bibitem [{\citenamefont {Kliuchnikov}\ \emph {et~al.}(2013)\citenamefont
  {Kliuchnikov}, \citenamefont {Maslov},\ and\ \citenamefont
  {Mosca}}]{kliuchnikov13}%
  \BibitemOpen
  \bibfield  {author} {\bibinfo {author} {\bibfnamefont {V.}~\bibnamefont
  {Kliuchnikov}}, \bibinfo {author} {\bibfnamefont {D.}~\bibnamefont
  {Maslov}},\ and\ \bibinfo {author} {\bibfnamefont {M.}~\bibnamefont
  {Mosca}},\ }\bibfield  {title} {\bibinfo {title} {Asymptotically optimal
  approximation of single qubit unitaries by {C}lifford and {T} circuits using
  a constant number of ancillary qubits},\ }\href
  {https://doi.org/10.1103/PhysRevLett.110.190502} {\bibfield  {journal}
  {\bibinfo  {journal} {Physical Review Letters}\ }\textbf {\bibinfo {volume}
  {110}},\ \bibinfo {pages} {190502} (\bibinfo {year} {2013})}\BibitemShut
  {NoStop}%
\bibitem [{\citenamefont {Gosset}\ \emph {et~al.}(2014)\citenamefont {Gosset},
  \citenamefont {Kliuchnikov}, \citenamefont {Mosca},\ and\ \citenamefont
  {Russo}}]{gosset14}%
  \BibitemOpen
  \bibfield  {author} {\bibinfo {author} {\bibfnamefont {D.}~\bibnamefont
  {Gosset}}, \bibinfo {author} {\bibfnamefont {V.}~\bibnamefont {Kliuchnikov}},
  \bibinfo {author} {\bibfnamefont {M.}~\bibnamefont {Mosca}},\ and\ \bibinfo
  {author} {\bibfnamefont {V.}~\bibnamefont {Russo}},\ }\bibfield  {title}
  {\bibinfo {title} {An algorithm for the {T}-count},\ }\href@noop {}
  {\bibfield  {journal} {\bibinfo  {journal} {Quantum Information \&
  Computation}\ }\textbf {\bibinfo {volume} {14}},\ \bibinfo {pages} {1261}
  (\bibinfo {year} {2014})}\BibitemShut {NoStop}%
\bibitem [{\citenamefont {Bocharov}\ \emph
  {et~al.}(2015{\natexlab{b}})\citenamefont {Bocharov}, \citenamefont
  {Roetteler},\ and\ \citenamefont {Svore}}]{bocharov15}%
  \BibitemOpen
  \bibfield  {author} {\bibinfo {author} {\bibfnamefont {A.}~\bibnamefont
  {Bocharov}}, \bibinfo {author} {\bibfnamefont {M.}~\bibnamefont
  {Roetteler}},\ and\ \bibinfo {author} {\bibfnamefont {K.~M.}\ \bibnamefont
  {Svore}},\ }\bibfield  {title} {\bibinfo {title} {Efficient synthesis of
  probabilistic quantum circuits with fallback},\ }\href
  {https://doi.org/10.1103/PhysRevA.91.052317} {\bibfield  {journal} {\bibinfo
  {journal} {Physical Review A}\ }\textbf {\bibinfo {volume} {91}},\ \bibinfo
  {pages} {052317} (\bibinfo {year} {2015}{\natexlab{b}})}\BibitemShut
  {NoStop}%
\bibitem [{\citenamefont {Ross}\ and\ \citenamefont {Selinger}(2016)}]{RS14}%
  \BibitemOpen
  \bibfield  {author} {\bibinfo {author} {\bibfnamefont {N.~J.}\ \bibnamefont
  {Ross}}\ and\ \bibinfo {author} {\bibfnamefont {P.}~\bibnamefont
  {Selinger}},\ }\bibfield  {title} {\bibinfo {title} {Optimal ancilla-free
  {C}lifford+{T} approximation of z-rotations},\ }\href@noop {} {\bibfield
  {journal} {\bibinfo  {journal} {Quantum Information \& Computation}\ }\textbf
  {\bibinfo {volume} {16}},\ \bibinfo {pages} {901} (\bibinfo {year}
  {2016})}\BibitemShut {NoStop}%
\bibitem [{\citenamefont {Campbell}(2017)}]{campbell2017shorter}%
  \BibitemOpen
  \bibfield  {author} {\bibinfo {author} {\bibfnamefont {E.}~\bibnamefont
  {Campbell}},\ }\bibfield  {title} {\bibinfo {title} {Shorter gate sequences
  for quantum computing by mixing unitaries},\ }\href@noop {} {\bibfield
  {journal} {\bibinfo  {journal} {Physical Review A}\ }\textbf {\bibinfo
  {volume} {95}},\ \bibinfo {pages} {042306} (\bibinfo {year}
  {2017})}\BibitemShut {NoStop}%
\bibitem [{\citenamefont {Gidney}(2018)}]{gidney2018halving}%
  \BibitemOpen
  \bibfield  {author} {\bibinfo {author} {\bibfnamefont {C.}~\bibnamefont
  {Gidney}},\ }\bibfield  {title} {\bibinfo {title} {Halving the cost of
  quantum addition},\ }\href {https://doi.org/10.22331/q-2018-06-18-74}
  {\bibfield  {journal} {\bibinfo  {journal} {Quantum}\ }\textbf {\bibinfo
  {volume} {2}},\ \bibinfo {pages} {74} (\bibinfo {year} {2018})}\BibitemShut
  {NoStop}%
\bibitem [{\citenamefont {Kivlichan}\ \emph {et~al.}(2018)\citenamefont
  {Kivlichan}, \citenamefont {McClean}, \citenamefont {Wiebe}, \citenamefont
  {Gidney}, \citenamefont {Aspuru-Guzik}, \citenamefont {Chan},\ and\
  \citenamefont {Babbush}}]{kivlichan2018quantum}%
  \BibitemOpen
  \bibfield  {author} {\bibinfo {author} {\bibfnamefont {I.~D.}\ \bibnamefont
  {Kivlichan}}, \bibinfo {author} {\bibfnamefont {J.}~\bibnamefont {McClean}},
  \bibinfo {author} {\bibfnamefont {N.}~\bibnamefont {Wiebe}}, \bibinfo
  {author} {\bibfnamefont {C.}~\bibnamefont {Gidney}}, \bibinfo {author}
  {\bibfnamefont {A.}~\bibnamefont {Aspuru-Guzik}}, \bibinfo {author}
  {\bibfnamefont {G.~K.-L.}\ \bibnamefont {Chan}},\ and\ \bibinfo {author}
  {\bibfnamefont {R.}~\bibnamefont {Babbush}},\ }\bibfield  {title} {\bibinfo
  {title} {Quantum simulation of electronic structure with linear depth and
  connectivity},\ }\href {https://doi.org/10.1103/PhysRevLett.120.110501}
  {\bibfield  {journal} {\bibinfo  {journal} {Physical Review Letters}\
  }\textbf {\bibinfo {volume} {120}},\ \bibinfo {pages} {110501} (\bibinfo
  {year} {2018})}\BibitemShut {NoStop}%
\end{thebibliography}%


\newpage
\onecolumngrid
\appendix



\section{Fourier series approximation to the Heaviside function (Lemma~\ref{lem:fourier})} \label{app:fourier}


In this appendix, we work toward a non-asymptotic version of Lemma \ref{lem:fourier}. Along the way, we provide various related approximation theory results with explicit constants. The main result will be Theorem \ref{thm:Fourier}, which shows that the Fourier series 
\begin{align} \label{Fbetad}
&F_{\beta,d}(x) = F_0 + \sum_{j=0}^{d} F_{2j+1} \left[e^{i(2j+1)x} - e^{-i(2j+1)x}\right]\\
&\text{with} \quad F_0 \coloneqq \frac{1}{2},\quad F_{2j+1} \coloneqq -i \sqrt{\frac{\beta}{2\pi}}e^{-\beta}\frac{I_j(\beta) + I_{j+1}(\beta)}{2j+1}\;\text{ for $0 \leq j \leq d-1$},\quad F_{2d+1} \coloneqq -i\sqrt{\frac{\beta}{2\pi}}e^{-\beta}\frac{I_d(\beta)}{2d+1} \label{eq:Fouriercoefficients}
\end{align}
can be made an arbitrarily good approximation to the Heaviside function $\Theta(x)$ on $x \in [-\pi,\pi]$ by choosing appropriate values for the parameters $\beta \in \mathbb{R}_{>0}$ and $d \in \mathbb{N}$. Here and throughout, $I_n(\cdot)$ denotes the $n^{\mathrm{th}}$ modified Bessel function of the first kind.


\subsection{Chebyshev approximation}

First, we construct an approximation $P_{\beta,d}(\cdot)$ to the Heaviside function in terms of Chebyshev polynomials; the properties of $P_{\beta,d}(\cdot)$ are characterised in Theorem~\ref{thm:Chebyshev} below. The following development is a strengthening of the results in \cite[Appendix A]{low17}, featuring more direct proofs as well as tighter constants. Note that the methods and bounds from \cite[Appendix A]{low17} were subsequently employed in e.g., \cite{gilyen19} and other works on 
quantum algorithms. 

We start with the following Chebyshev approximation to the scaled error function $\mathrm{erf}(\sqrt{2\beta}x):=\frac{2}{\sqrt{\pi}}\int_0^{\sqrt{2\beta}x}dt\,e^{-t^2}$, which in turn approximates the sign function for large $\beta$ (cf.~Lemma \ref{claim:sgnerf}). For $\beta \in \mathbb{R}_{>0}$ and $d \in \mathbb{N}$, we define
\begin{equation} \label{Qbetad}
Q_{\beta,d}(x) \coloneqq \sqrt{\frac{2\beta}{\pi}} 2e^{-\beta}\left[I_0(\beta)x + \sum_{j=1}^d I_j(\beta)(-1)^j \left(\frac{T_{2j+1}(x)}{2j+1} - \frac{T_{2j-1}(x)}{2j-1}\right)\right],
\end{equation}
where $T_n$ denotes the $n^{\mathrm{th}}$ Chebyshev polynomial of the first kind.

\begin{prop}  \label{claim:erfQ1}
For any $\beta \in \mathbb{R}_{> 0}$ and $d \in \mathbb{N}$, we have
\[ \max_{x\in [-1,1]}\left|\erf\big(\sqrt{2\beta}x\big) - Q_{\beta,d}(x) \right|\leq \frac{1}{d}\sqrt{\frac{2\beta}{\pi}}2e^{-\beta}\sum_{j=d+1}^\infty I_j(\beta). \]
\end{prop}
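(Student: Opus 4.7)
The plan is to derive the full Chebyshev expansion of $\erf(\sqrt{2\beta}\cdot)$ on $[-1,1]$, identify $Q_{\beta,d}$ as its truncation at degree $d$, and then bound the tail using $|T_n(x)|\leq 1$.

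First, I would start from the integral representation $\erf(\sqrt{2\beta}x) = \frac{2\sqrt{2\beta}}{\sqrt{\pi}}\int_0^x e^{-2\beta t^2}\,dt$ and expand the Gaussian $e^{-2\beta t^2}$ in Chebyshev polynomials. Writing $t=\cos\theta$ so that $2t^2=1+\cos 2\theta$, and using the Jacobi--Anger generating identity $e^{z\cos\phi}=I_0(z)+2\sum_{n\geq 1}I_n(z)\cos(n\phi)$ at $z=-\beta$, $\phi=2\theta$, together with $I_n(-\beta)=(-1)^n I_n(\beta)$, one obtains the uniformly convergent series
\[
e^{-2\beta t^2} \;=\; e^{-\beta}\Bigl[I_0(\beta) + 2\sum_{n=1}^{\infty}(-1)^n I_n(\beta)\, T_{2n}(t)\Bigr].
\]
Uniform convergence on $[-1,1]$ justifies termwise integration from $0$ to $x$. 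Using $\int_0^x T_0(t)\,dt = x$ and, for $n\geq 1$, the antiderivative formula $\int_0^x T_{2n}(t)\,dt = \tfrac{1}{2}\bigl(\tfrac{T_{2n+1}(x)}{2n+1}-\tfrac{T_{2n-1}(x)}{2n-1}\bigr)$, where the boundary contribution at $t=0$ vanishes because odd Chebyshev polynomials satisfy $T_{2n\pm 1}(0)=0$, yields the full expansion
\[
\erf(\sqrt{2\beta}x) \;=\; \sqrt{\tfrac{2\beta}{\pi}}\,2e^{-\beta}\Bigl[I_0(\beta)\,x + \sum_{n=1}^{\infty}(-1)^n I_n(\beta)\Bigl(\tfrac{T_{2n+1}(x)}{2n+1}-\tfrac{T_{2n-1}(x)}{2n-1}\Bigr)\Bigr],
\]
whose degree-$d$ truncation is exactly $Q_{\beta,d}(x)$.

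Next I would estimate the tail. By the triangle inequality and $|T_k(x)|\leq 1$ for $x\in[-1,1]$,
\[
\bigl|\erf(\sqrt{2\beta}x)-Q_{\beta,d}(x)\bigr| \;\leq\; \sqrt{\tfrac{2\beta}{\pi}}\,2e^{-\beta}\sum_{n=d+1}^{\infty}I_n(\beta)\Bigl(\tfrac{1}{2n+1}+\tfrac{1}{2n-1}\Bigr).
\]
For each $n\geq d+1\geq 2$, monotonicity gives $\tfrac{1}{2n+1}+\tfrac{1}{2n-1}\leq \tfrac{2}{2n-1}\leq \tfrac{2}{2d+1}\leq \tfrac{1}{d}$, and factoring this bound out of the sum produces precisely the claimed inequality.

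There is no serious obstacle here: the only subtlety is ensuring uniform convergence so that termwise integration is legitimate (which follows from $\sum_{n\geq 0}|I_n(\beta)|<\infty$, in fact $\sum_n I_n(\beta) = \tfrac{1}{2}(e^\beta-1)+\tfrac{1}{2}e^\beta$ by the generating function at $z=1$), and the small bookkeeping observation that $T_{2n\pm 1}(0)=0$ kills all integration constants. The rest is clean algebra followed by the elementary tail estimate above.
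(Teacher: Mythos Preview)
Your proof is correct and follows essentially the same route as the paper: derive the full Chebyshev expansion of $\erf(\sqrt{2\beta}\,\cdot)$ via the Jacobi--Anger identity (this is the paper's Proposition~\ref{claim:erfChebyshev}), recognise $Q_{\beta,d}$ as the truncation, and bound the tail using $|T_k|\le 1$ together with an elementary estimate on $\tfrac{1}{2n+1}+\tfrac{1}{2n-1}$. The only difference is cosmetic---the paper uses $\tfrac{1}{2j+1}+\tfrac{1}{2j-1}\le \tfrac{1}{j-1}\le \tfrac{1}{d}$ whereas you use the (slightly sharper) chain $\le \tfrac{2}{2d+1}\le \tfrac{1}{d}$---and your parenthetical value for $\sum_n I_n(\beta)$ is off (it should be $\tfrac{1}{2}(e^\beta+I_0(\beta))$), though this does not affect the argument.
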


\begin{proof}
From the Chebyshev expansion of the error function (Proposition~\ref{claim:erfChebyshev}), we see that taking $d \to \infty$ in $Q_{\beta,d}(x)$ gives an exact expression for $\erf(\sqrt{2\beta}x)$. Hence,
\begin{align*}
\max_{x \in [-1,1]}\left|\erf\big(\sqrt{2\beta} x\big) - Q_{\beta,d}(x) \right| &= \max_{x \in [-1,1]}\left|2\sqrt{\frac{2\beta}{\pi}} e^{-\beta} \sum_{j=d+1}^\infty I_j(\beta) (-1)^j \left(\frac{T_{2j+1}(x)}{2j+1} - \frac{T_{2j-1}(x)}{2j-1} \right)\right| \\
&\leq 2\sqrt{\frac{2\beta}{\pi}}e^{-\beta} \sum_{j=d+1}^\infty I_j(\beta)\left(\frac{1}{2j+1} + \frac{1}{2j-1} \right) \\
&\leq 2\sqrt{\frac{2\beta}{\pi}}e^{-\beta} \frac{1}{d}\sum_{j=d+1}^\infty I_j(\beta)\,,
\end{align*}
using $\max_{x \in [-1,1]}|T_n(x)| = 1$ and the fact that 
\[ \frac{1}{2j+1} + \frac{1}{2j-1} \leq \frac{1}{j-1} \leq \frac{1}{d}\quad\forall\, j \geq d+1. \]
\end{proof}
Proposition~\ref{claim:erfQ1} shows that the error in using $Q_{\beta,d}(\cdot)$ to approximate the scaled error function depends on the infinite sum $\sum_{j=d+1}^\infty I_j(\beta)$ of modified Bessel functions. In the next proposition, we bound this sum directly in order to obtain tighter results than those given by \cite[Appendix A]{low17}, which used loose bounds from the survey \cite{sachdeva14}.

\begin{prop} \label{claim:besselsumbound}
For any $\beta \in \mathbb{R}_{> 0}$, $d \in \mathbb{N}$, and integer $t \geq \beta$, we have
\[ 2e^{-\beta}\sum_{j=d+1}^\infty I_j(\beta) \leq 2e^{-(d+1)^2/(2t)} + \frac{1}{2}\left(\frac{e\beta}{t}\right)^te^{-\beta}\,. \]
\end{prop}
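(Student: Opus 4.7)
The plan is to give the sum a probabilistic interpretation via the Skellam distribution, split its tail using an auxiliary Poisson sum, and handle the remaining Poisson tail with a monotonicity argument.

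\emph{Skellam representation and decomposition.} Let $X,Y$ be independent Poisson random variables with mean $\beta/2$ and set $Z\coloneqq X-Y$. Matching the series $\Pr[X-Y=j]=e^{-\beta}\sum_{k\geq 0}(\beta/2)^{|j|+2k}/[k!(|j|+k)!]$ against the standard expansion of $I_{|j|}(\beta)$ gives $\Pr[Z=j]=e^{-\beta}I_{|j|}(\beta)$ for every integer $j$; by the symmetry $Z\stackrel{d}{=}-Z$, the left-hand side of the proposition equals $\Pr[|Z|\geq d+1]$. Let $N\coloneqq X+Y\sim\mathrm{Pois}(\beta)$; since $|Z|\leq N$, split
\[ \Pr[|Z|\geq d+1]\leq\Pr[|Z|\geq d+1,\,N\leq t]+\Pr[N\geq t+1]. \]

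\emph{The Gaussian piece.} Conditioned on $N=n$, $X$ is Binomial$(n,1/2)$ and $Z=2X-n$ is a centred sum of $n$ independent $\pm 1$ variables, so Hoeffding's inequality gives $\Pr[|Z|\geq d+1\mid N=n]\leq 2e^{-(d+1)^2/(2n)}$. For $n\leq t$ this is $\leq 2e^{-(d+1)^2/(2t)}$, and averaging over $n\leq t$ bounds the first piece by $2e^{-(d+1)^2/(2t)}$, matching the first term of the claim.

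\emph{The Poisson tail via monotonicity.} It remains to show $\Pr[\mathrm{Pois}(\beta)\geq t+1]\leq\tfrac{1}{2}(e\beta/t)^{t}e^{-\beta}$ for integer $t\geq\beta$. Writing $P(\beta)$ for the left-hand side and $R(\beta)\coloneqq(e\beta/t)^{t}e^{-\beta}$, a direct computation gives $P'(\beta)=e^{-\beta}\beta^{t}/t!$ and $R'(\beta)/R(\beta)=(t-\beta)/\beta$, so the sign of $(P/R)'$ equals that of $\beta P'(\beta)-(t-\beta)P(\beta)$. The geometric-series tail bound $P(\beta)\leq e^{-\beta}\beta^{t+1}/[t!(t+1-\beta)]$ (valid for $\beta<t+1$), combined with the trivial inequality $t-\beta<t+1-\beta$, makes this non-negative, so $P/R$ is non-decreasing on $(0,t]$. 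Evaluating at the right endpoint gives $(P/R)(t)=\Pr[\mathrm{Pois}(t)\geq t+1]$, which is at most $1/2$ by the classical fact that $t$ is the median of $\mathrm{Pois}(t)$ for integer $t\geq 1$; hence $P(\beta)\leq\tfrac12 R(\beta)$ for all $\beta\in(0,t]$.

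Combining the two pieces proves the proposition. The main obstacle is the Poisson-median fact $\Pr[\mathrm{Pois}(t)\geq t+1]\leq 1/2$ for integer $t$: both sides of the target inequality approach $1/2$ as $t\to\infty$, so the bound is sharp and neither the standard Chernoff bound $(e\beta/(t+1))^{t+1}e^{-\beta}$ nor a Stirling estimate alone suffices. One either invokes the known Poisson-median result (Choi 1994; Adell--Jodr\'a 2005) or carries out a careful direct comparison.
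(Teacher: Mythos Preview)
Your proof is correct. Structurally it mirrors the paper's argument: both split the tail at the threshold $t$, bound the ``small-$j$'' part by a binomial/Rademacher concentration inequality, and bound the ``large-$j$'' part by a Poisson upper-tail estimate. The paper obtains the same decomposition by first rewriting $\sum_{j>d} I_j(\beta)$ via the combinatorial identity of Proposition~\ref{claim:besselsum} and then applying a Chernoff bound to the partial binomial sums; your Skellam representation with $N=X+Y\sim\mathrm{Pois}(\beta)$ and $Z=2X-N$ is exactly the probabilistic reading of that identity, and your Hoeffding step is the same Chernoff bound in different clothing. The one substantive difference is the Poisson tail: the paper simply cites the inequality $\sum_{j\geq t+1}\beta^j e^{-\beta}/j!\leq\tfrac12(e\beta/t)^t e^{-\beta}$ from Short~\cite{makis13}, whereas you supply a self-contained proof via the monotonicity of $P/R$ on $(0,t]$ together with the Poisson-median fact $\Pr[\mathrm{Pois}(t)\geq t+1]\leq 1/2$. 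Your derivation of that monotonicity is clean and correct (the geometric tail bound and the comparison $t-\beta<t+1-\beta$ do the job), so this buys a fully self-contained argument at the cost of invoking the median result, which is itself nontrivial but standard.
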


\begin{proof}
Starting from the expression for $\sum_{j=d+1}^\infty I_j(\beta)$ given by Proposition~\ref{claim:besselsum}, we have
\begin{align*}
\sum_{j=d+1}^\infty I_j(\beta) &\leq \left[\sum_{j=d+1}^t + \sum_{j=t+1}^\infty\right]\frac{\beta^j}{j!} 2^{-j} \sum_{k=0}^{\floor{(j-d-1)/2}} {j \choose k} \\
&\leq \sum_{j=d+1}^t \frac{\beta^j}{j!}\exp\left[-\frac{(d+1)^2}{2j} \right] + \sum_{j=t+1}^\infty \frac{\beta!}{j!} 2^{-j} \cdot 2^{j-1} \\
&\leq \exp\left[-\frac{(d+1)^2}{2t}\right]\sum_{j=0}^\infty \frac{\beta^j}{j!} + \frac{1}{2}\sum_{j=t+1}^\infty \frac{\beta^j}{j!},
\end{align*}
where in the second line, we bound the first term using a Chernoff bound and the second term using the fact that the inner sum goes over fewer than half of the binomial coefficients. Hence, we find
\[ 2e^{-\beta} \sum_{j=d+1}^\infty I_j(\beta) \leq 2\exp\left[ -\frac{(d+1)^2}{2t}\right] + \sum_{j=t+1}^\infty \frac{\beta^j}{j!} e^{-\beta}\,. \]
The second sum on the right-hand side is an upper tail of the Poisson distribution with mean $\beta$, provided that $t + 1  > \beta$. In particular, it follows from \cite[Corollary 6]{makis13} that for $t \geq \beta$,
\begin{equation} \label{poissontail} \sum_{j=t+1}^\infty \frac{\beta^j}{j!}e^{-\beta} \leq \frac{1}{2}\left(\frac{e\beta}{t}\right)^t e^{-\beta}, \end{equation}
and the claim follows.
\end{proof}

We now define the function
\begin{equation} \label{eq:f}
f(\beta,\epsilon) \coloneqq \frac{\ln\left(\frac{1}{\varepsilon}\right) - \beta}{W\left(\frac{1}{e}\left[\frac{1}{\beta}\ln\left(\frac{1}{\varepsilon}\right) - 1 \right]\right)}
\end{equation}
for $\beta \in \mathbb{R}_{\geq 0}$ and $\varepsilon \in (0,1)$,
where $W(\cdot)$ denotes the principal branch of the Lambert-W function. As shown in Proposition~\ref{claim:Wthing},  $f(\beta, \epsilon)$ is the solution $t$ to the equation $(e\beta/t)^t e^{-\beta} = \varepsilon$ under the constraint $t > \beta$. 

\begin{lemma}[Chebyshev approximation to the error function] \label{claim:erfQ2}
For any $\beta,\varepsilon_1,\varepsilon_2 \in \mathbb{R}_{>0}$,
we have
\[ \max_{x \in [-1,1]}\left|\erf(\sqrt{2\beta}x) - Q_{\beta,d}(x) \right| \leq \varepsilon_1 + \varepsilon_2 \]
for any $d\in\mathbb{N}$ satisfying
\begin{equation} \label{d}
\text{$d \geq {\sqrt{t w_{\varepsilon_1}}}$ \quad where $w_{\varepsilon_1} \coloneqq W\left(\frac{8}{\pi \varepsilon_1^2} \right)$ and $t$ is any integer such that}
\end{equation}
\begin{equation} \label{t} t \geq \begin{dcases}
f\left(\beta, \sqrt{2\pi w_{\varepsilon_1}}\varepsilon_2\right)
&\qquad \text{if $\varepsilon_2 < \dfrac{1}{\sqrt{2\pi w_{\varepsilon_1}}}$}\\
\beta &\qquad \text{if $\varepsilon_2 \geq \dfrac{1}{\sqrt{2\pi w_{\varepsilon_1}}}$,}
\end{dcases} \end{equation}
with the function $f(\cdot,\cdot)$ defined as in \cref{eq:f}.
\end{lemma}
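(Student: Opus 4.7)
The plan is to combine Propositions~\ref{claim:erfQ1} and~\ref{claim:besselsumbound} into a single bound and then split the two resulting summands against $\varepsilon_1$ and $\varepsilon_2$ separately. Chaining the propositions gives, for any integer $t \geq \beta$ and any $d\in\mathbb{N}$,
\begin{equation*}
\max_{x \in [-1,1]}\bigl|\erf(\sqrt{2\beta}x) - Q_{\beta,d}(x)\bigr| \;\leq\; \frac{1}{d}\sqrt{\frac{2\beta}{\pi}}\left[2\,e^{-(d+1)^2/(2t)} + \tfrac{1}{2}(e\beta/t)^t e^{-\beta}\right].
\end{equation*}
Because $t \geq \beta$, I would first upgrade $\sqrt{\beta}$ to $\sqrt{t}$ in the prefactor; this makes the hypothesis $d \geq \sqrt{t w_{\varepsilon_1}}$, rewritten as $\sqrt{t}/d \leq 1/\sqrt{w_{\varepsilon_1}}$ and $(d+1)^2/(2t) \geq w_{\varepsilon_1}/2$, directly applicable to both summands.

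For the first summand these two inequalities yield
\begin{equation*}
\frac{2}{d}\sqrt{\frac{2t}{\pi}}\,e^{-(d+1)^2/(2t)} \;\leq\; 2\sqrt{\frac{2}{\pi w_{\varepsilon_1}}}\,e^{-w_{\varepsilon_1}/2}.
\end{equation*}
The key identity is the defining relation $w_{\varepsilon_1} e^{w_{\varepsilon_1}} = 8/(\pi\varepsilon_1^2)$, which rearranges to $e^{-w_{\varepsilon_1}/2} = \varepsilon_1 \sqrt{\pi w_{\varepsilon_1}/8}$. Substituting this collapses the right-hand side exactly to $\varepsilon_1$. This clean cancellation is what the otherwise peculiar definition of $w_{\varepsilon_1}$ is engineered to deliver, and recognising it is the one genuinely non-mechanical step of the proof.

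For the second summand the same manipulations bound it by $(e\beta/t)^t e^{-\beta}/\sqrt{2\pi w_{\varepsilon_1}}$, and I would case-split on the size of $\varepsilon_2$. If $\varepsilon_2 \geq 1/\sqrt{2\pi w_{\varepsilon_1}}$, observe that $t \mapsto (e\beta/t)^t e^{-\beta}$ equals $1$ at $t=\beta$ and is monotonically decreasing for $t \geq \beta$ (seen by differentiating its logarithm), so any integer $t \geq \beta$ renders the summand at most $1/\sqrt{2\pi w_{\varepsilon_1}} \leq \varepsilon_2$. Otherwise we need $(e\beta/t)^t e^{-\beta} \leq \sqrt{2\pi w_{\varepsilon_1}}\,\varepsilon_2 < 1$, which is exactly the equation $f$ is defined to solve (cf.\ Proposition~\ref{claim:Wthing}); by the monotonicity just noted, any integer $t \geq f(\beta,\sqrt{2\pi w_{\varepsilon_1}}\,\varepsilon_2)$ suffices.

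The main obstacle is spotting the Lambert-W cancellation in the second paragraph; every other step is bookkeeping — a monotonicity check, a case split, and rounding reals up to integers. I would not expect surprises in the execution, but would want to double-check that the integer-valued constraints on $t$ and $d$ are preserved throughout and that the inequalities remain one-sided in the right direction at each substitution.
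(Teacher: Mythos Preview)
Your proposal is correct and mirrors the paper's proof essentially step for step: the paper likewise chains Propositions~\ref{claim:erfQ1} and~\ref{claim:besselsumbound}, loosens $\sqrt{\beta}\leq\sqrt{t}$, bounds the first summand via the Lambert-$W$ identity for $w_{\varepsilon_1}$, and handles the second by invoking Proposition~\ref{claim:Wthing} with the case split on $\varepsilon_2$. The only cosmetic difference is that the paper presents the first-summand bound as the equivalence $\tfrac{d^2}{t}e^{d^2/t}\geq 8/(\pi\varepsilon_1^2)\Leftrightarrow d\geq\sqrt{t w_{\varepsilon_1}}$ (after loosening $(d+1)^2\to d^2$), whereas you substitute directly—same computation, same result.
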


\begin{proof}
Combining Propositions~\ref{claim:erfQ1} and~\ref{claim:besselsumbound}, we have
\begin{equation} \label{erfQbound} \max_{x\in[-1,1]} \left|\erf\big(\sqrt{2\beta}x\big) - Q_{\beta,d}(x) \right| \leq \frac{1}{d}\sqrt{\frac{2\beta}{\pi}}\left[2e^{-(d+1)^2/(2t)} + \frac{1}{2}\left(\frac{e\beta}{t} \right)^t e^{-\beta} \right] \end{equation}
for any integer $t \geq \beta$. 
If we require that $t \geq \beta$, then to bound the first term on the right-hand side of Eq.~\eqref{erfQbound} by $\varepsilon_1$,
it suffices for
\[ \frac{1}{d} \sqrt{\frac{2t}{\pi}} 2e^{-d^2/(2t)} \leq \varepsilon_1\quad\Leftrightarrow\quad\frac{d^2}{t}e^{d^2/t} \geq \frac{8}{\pi \varepsilon_1^2}\,,\]
which holds for $d \geq \sqrt{tW(8/(\pi \varepsilon_1^2))}= \sqrt{t w_{\varepsilon_1}}$.
This in turn implies that $\sqrt{\beta}/d \leq \sqrt{t}/d \leq 1/\sqrt{w_{\varepsilon_1}}$, so the second term on the right-hand side of Eq.~\eqref{erfQbound} is at most $\varepsilon_2$ if
\[ \frac{1}{\sqrt{2\pi w_{\varepsilon_1}}} \left(\frac{e\beta}{t}\right)^t e^{-\beta} \leq \varepsilon_2. \]
The constraints on $t$ in Eq.~\eqref{t} then follow from Proposition~\ref{claim:Wthing}, together with the fact that Eq.~\eqref{erfQbound} holds only for $t \geq \beta$.
\end{proof}

For $\beta \in \mathbb{R}_{>0}$ and $d\in \mathbb{N}$, we define
\begin{equation} \label{Pbetad}
P_{\beta,d}(x) \coloneqq \frac{1}{2}\left(Q_{\beta,d}(x) + 1\right)\,.
\end{equation}
where $P_{\beta,d}(x)$ is a linear combination of Chebyshev polynomials, and approximates the Heaviside function $\Theta(x)$ with approximation error determined by $\beta$ and $d$ on a domain determined by $\beta$, as captured by the following theorem. 

\begin{theorem}[Chebyshev approximation to the Heaviside function] \label{thm:Chebyshev}
For any $\nu \in (0,1)$ and  $\varepsilon_1,\varepsilon_2,\varepsilon_3 \in \mathbb{R}_{>0}$, let $\beta$ be any real number such that $\beta \geq \frac{1}{4\nu^2}W\big(\frac{2}{\pi\varepsilon_3^2}\big)$ and let $d$ be any integer satisfying Eq.~\eqref{d} of Lemma \ref{claim:erfQ2}. Then,
\begin{equation} \label{chebthm1} \max_{x\in [-1,-\nu] \cup [\nu,1]} \left| \Theta(x)- P_{\beta,d}(x)\right| \leq \frac{1}{2}(\varepsilon_1 + \varepsilon_2 + \varepsilon_3), \end{equation}
and for all $x \in [-1,1]$,
\begin{equation} \label{chebthm2} -\frac{1}{2}(\varepsilon_1 + \varepsilon_2) \leq P_{\beta,d}(x) \leq 1 + \frac{1}{2}(\varepsilon_1 + \varepsilon_2). 
\end{equation}
\end{theorem}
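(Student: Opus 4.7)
The plan is to split the error via the triangle inequality, using the identities $\Theta(x) = \frac{1}{2}(\sgn(x)+1)$ (for $x \neq 0$) and $P_{\beta,d}(x) = \frac{1}{2}(Q_{\beta,d}(x)+1)$, to obtain
\begin{equation*}
|\Theta(x) - P_{\beta,d}(x)| = \tfrac{1}{2}|\sgn(x) - Q_{\beta,d}(x)| \leq \tfrac{1}{2}\Bigl(|\sgn(x) - \erf(\sqrt{2\beta}x)| + |\erf(\sqrt{2\beta}x) - Q_{\beta,d}(x)|\Bigr).
\end{equation*}
Lemma~\ref{claim:erfQ2} immediately bounds the second term by $\varepsilon_1 + \varepsilon_2$ uniformly over $x \in [-1,1]$ under the stated constraints on $d$, handling the Chebyshev-approximation contribution.

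For the first term, which must be controlled only on $|x| \in [\nu,1]$, I would invoke the standard tail estimate $1 - \erf(z) \leq e^{-z^2}/(z\sqrt{\pi})$ for $z > 0$ (proved by a single integration by parts), giving
\begin{equation*}
|\sgn(x) - \erf(\sqrt{2\beta}x)| \leq \frac{e^{-2\beta x^2}}{\sqrt{2\pi\beta}\,|x|}.
\end{equation*}
The right-hand side is strictly decreasing in $|x|$ on $(0,\infty)$, so it attains its maximum over $|x| \in [\nu,1]$ at $|x| = \nu$. Requiring this maximum to be at most $\varepsilon_3$, squaring, and setting $u \coloneqq 4\beta\nu^2$ reduces the condition to $u e^u \geq 2/(\pi\varepsilon_3^2)$. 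Since $u \mapsto u e^u$ is strictly increasing on $[0,\infty)$, this is equivalent, by the definition of the Lambert-$W$ function, to $u \geq W\!\bigl(2/(\pi\varepsilon_3^2)\bigr)$, which is precisely the hypothesis $\beta \geq \frac{1}{4\nu^2} W\!\bigl(\frac{2}{\pi\varepsilon_3^2}\bigr)$. Combining the two bounds and dividing by $2$ establishes \eqref{chebthm1}.

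For \eqref{chebthm2}, I would use the trivial bound $|\erf(\sqrt{2\beta}x)| \leq 1$ on all of $\mathbb{R}$, together with Lemma~\ref{claim:erfQ2}, to obtain $|Q_{\beta,d}(x)| \leq 1 + (\varepsilon_1+\varepsilon_2)$ for all $x \in [-1,1]$. Substituting into $P_{\beta,d}(x) = \frac{1}{2}(Q_{\beta,d}(x)+1)$ yields the stated range.

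The main obstacle is to match the $\mathrm{erfc}$ tail bound with the Lambert-$W$ hypothesis on $\beta$; once this correspondence is spotted, the remainder is routine bookkeeping around Lemma~\ref{claim:erfQ2}, and the two-sided bound in \eqref{chebthm2} is essentially a restatement of the same estimate without the sign-function piece.
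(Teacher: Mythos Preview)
Your proposal is correct and follows essentially the same route as the paper: the paper also writes $\Theta - P_{\beta,d} = \tfrac{1}{2}(\sgn - Q_{\beta,d})$, splits via the triangle inequality into a $\sgn$--$\erf$ piece and an $\erf$--$Q_{\beta,d}$ piece, bounds the latter by Lemma~\ref{claim:erfQ2}, and handles \eqref{chebthm2} exactly as you do. The only cosmetic difference is that the paper packages your $\mathrm{erfc}$ tail / Lambert-$W$ computation as a separate auxiliary result (Lemma~\ref{claim:sgnerf}, applied with $k=\sqrt{2\beta}$), whereas you inline it.
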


\begin{proof}
By Eq.~\eqref{Pbetad}, we have
\begin{align*}
\max_{x\in [-1,-\nu] \cup [\nu,1]} \left| \Theta(x)- P_{\beta,d}(x)\right| &= \max_{x\in [-1,-\nu] \cup [\nu,1]} \left| \frac{1}{2}(\mathrm{sgn}(x) + 1)- \frac{1}{2}(Q_{\beta,d}(x) + 1) \right| \\
&\leq \frac{1}{2}\max_{x\in [-1,-\nu] \cup [\nu,1]} \Bigg( \left| \mathrm{sgn}(x) - \erf(\sqrt{2\beta}x) \right| + \left|\erf(\sqrt{2\beta}x) - Q_{\beta,d}(x) \right|\Bigg), 
\end{align*}
using the triangle inequality. Eq.~\eqref{chebthm1} now follows by applying Lemma~\ref{claim:erfQ2} to upper-bound the second term in the brackets by $\varepsilon_1 + \varepsilon_2$, and applying Lemma~\ref{claim:sgnerf} with $k = \sqrt{2\beta}$ to upper-bound the first term by $\varepsilon_3$. 

To prove Eq.~\eqref{chebthm2}, we note that
\begin{align*}
\max_{x\in [-1,1]} \left|Q_{\beta,d}(x) \right| \leq \max_{x \in [-1,1]}\left(\left|Q_{\beta,d}(x) - \erf(\sqrt{2\beta} x) \right| + \left|\erf(\sqrt{2\beta}x) \right| \right)\leq \varepsilon_1 + \varepsilon_2 + 1,
\end{align*}
using the triangle inequality and Lemma~\ref{claim:erfQ2}. 
Thus, for $x \in [-1,1]$,
\[ \frac{1}{2}[-(1+\varepsilon_1 + \varepsilon_2) + 1] \leq P_{\beta,d}(x) \leq \frac{1}{2}[(1+\varepsilon_1 + \varepsilon_2) + 1]. \]
\end{proof}


\subsection{Fourier approximation}

Next, we transform our polynomial approximation $P_{\beta,d}(\cdot)$ to the Heaviside function $\Theta(\cdot)$ into a Fourier series $F_{\beta,d}(\cdot)$, using the observation that $\Theta(x) = \Theta(\sin x)$ for all $x \in [-\pi,\pi]$. For $\beta \in \mathbb{R}_{>0}$ and $ d \in \mathbb{N}$, we define
\[ F_{\beta,d}(x) \coloneqq P_{\beta,d}(\sin x). \]
The following proposition allows us to extract the Fourier coefficients of $F_{\beta,d}(\cdot)$. 

\begin{lemma} \label{ChebtoFourier}
If $f(x) = \sum_{k=0}^\infty a_k T_k(x)$ for some $a_k \in \mathbb{C}$, then
\[ \text{$f(\sin x) = \sum_{k=0}^\infty F_k \left[e^{ikx} + (-1)^k e^{-ikx} \right]$ \quad with $F_k = \frac{1}{2}(-i)^k a_k$ $\forall\, k \in \mathbb{Z}_{\geq 0}$.} \]
\end{lemma}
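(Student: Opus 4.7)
The plan is to verify the identity term-by-term by applying it to each Chebyshev polynomial $T_k$, then summing. By linearity of the substitution $x \mapsto \sin x$ and convergence of the given series, it suffices to prove the pointwise identity
\[
T_k(\sin x) \;=\; \tfrac{1}{2}(-i)^k\!\left[e^{ikx} + (-1)^k e^{-ikx}\right]
\qquad \forall\, k \in \mathbb{Z}_{\geq 0}.
\]

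To establish this, I would use the defining identity $T_k(\cos\theta) = \cos(k\theta)$ together with the trigonometric rewriting $\sin x = \cos(\pi/2 - x)$. Setting $\theta = \pi/2 - x$ gives
\[
T_k(\sin x) = T_k\!\bigl(\cos(\pi/2 - x)\bigr) = \cos\!\bigl(k\pi/2 - kx\bigr).
\]
Expanding via Euler's formula, $\cos(k\pi/2 - kx) = \tfrac{1}{2}\bigl[e^{i(k\pi/2 - kx)} + e^{-i(k\pi/2 - kx)}\bigr] = \tfrac{1}{2}\bigl[i^k e^{-ikx} + (-i)^k e^{ikx}\bigr]$, using $e^{\pm ik\pi/2} = (\pm i)^k$. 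Factoring out $(-i)^k$ from the second term and noting that $i^k = (-i)^k(-1)^k$ (since $(-1)(-i) = i$), I obtain $\tfrac{1}{2}(-i)^k\!\left[e^{ikx} + (-1)^k e^{-ikx}\right]$, as desired.

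Summing against the coefficients $a_k$ then yields
\[
f(\sin x) \;=\; \sum_{k=0}^\infty a_k T_k(\sin x) \;=\; \sum_{k=0}^\infty \tfrac{1}{2}(-i)^k a_k \!\left[e^{ikx} + (-1)^k e^{-ikx}\right],
\]
which matches the claimed Fourier expansion with $F_k = \tfrac{1}{2}(-i)^k a_k$. There is no real obstacle here: the content reduces to the single trigonometric identity $T_k(\sin x) = \cos(k\pi/2 - kx)$ and a tidy application of Euler's formula, with exchange of sum and substitution justified by the assumed convergence of the Chebyshev series (which in our application is just the finite polynomial $P_{\beta,d}$, so convergence is automatic).
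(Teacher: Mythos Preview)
Your proof is correct and follows essentially the same route as the paper: both use $\sin x = \cos(\pi/2 - x)$ together with $T_k(\cos\theta) = \cos(k\theta)$ to get $T_k(\sin x) = \cos(k\pi/2 - kx)$, and then expand via Euler's formula to read off $F_k = \tfrac{1}{2}(-i)^k a_k$. The only difference is cosmetic---you spell out the Euler-formula step in slightly more detail than the paper does.
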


\begin{proof}
Using the trigonometric identity $\sin x = \cos(\pi/2 - x)$ in conjunction with $T_k(\cos \theta) = \cos(k \theta)$, we have
\begin{align*} 
f(\sin x) 
= \sum_{k=0}^\infty a_k \cos\left(k \left(\frac{\pi}{2} - x \right) \right)
= \sum_{k=0}^\infty a_k \frac{1}{2}(-i)^k\left[e^{ikx} + (-1)^k e^{-ikx} \right]\,.
\end{align*}
\end{proof}
By reorganising the sum in Eq.~\eqref{Qbetad}, we have
\begin{align*}
P_{\beta,d}(x) 
&= \frac{1}{2} + \sqrt{\frac{2\beta}{\pi}} e^{-\beta} \left[\sum_{j=0}^{d-1}(I_j(\beta) + I_{j+1}(\beta))(-1)^j \frac{T_{2j+1}(x)}{2j+1} + I_d(\beta)(-1)^d \frac{T_{2d+1}(x)}{2d+1} \right].
\end{align*}
It then follows from Lemma~\ref{ChebtoFourier} that $F_{\beta,d}(x)\coloneqq P_{\beta,d}(\sin x)$ has the form given in Eq.~\eqref{Fbetad}, and we arrive at the following theorem.


\begin{theorem}[Fourier approximation to the Heaviside function] \label{thm:Fourier}
For any $\delta \in (0,\pi/2)$ and $\varepsilon_1,\varepsilon_2,\varepsilon_3 \in \mathbb{R}_{>0}$, let $\beta$ be any real number such that
\[ \beta \geq \max\left\{ \frac{1}{4\sin^2 \delta} W\left(\frac{2}{\pi\varepsilon_3^2}\right), 1\right\}\]
and let $d$ be any integer satisfying Eq.~\eqref{d}. Then, the function $F_{\beta,d}(\cdot)$ defined as in Eq.~\eqref{Fbetad} has the following properties:
\begin{enumerate}
    \item \hspace{1em}
    \begin{equation} \label{Ferror} \max_{x \in [-\pi + \delta, -\delta] \cup [\delta, \pi - \delta]}|\Theta(x) - F_{\beta,d}(x)| \leq \frac{1}{2}(\varepsilon_1 + \varepsilon_2 + \varepsilon_3), \end{equation}
    \item for all $x \in \mathbb{R}$,
    \begin{equation} \label{Fbound} -\frac{1}{2}(\varepsilon_1 + \varepsilon_2) \leq F_{\beta,d}(x) \leq 1 + \frac{1}{2}(\varepsilon_1 + \varepsilon_2), \end{equation}
    \item \hspace{1em}
    \begin{equation} \label{Fsum} \sum\limits_{j = 0}^d |F_{2j+1}| \leq \frac{1}{2}H_{d + {1}/{2}} + \ln 2, \end{equation}
\end{enumerate}
where $H_n$ denotes the $n^{\text{th}}$ harmonic number.
\end{theorem}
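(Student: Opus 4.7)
The plan is to deduce Theorem~\ref{thm:Fourier} directly from the already-established Theorem~\ref{thm:Chebyshev} via the identity $F_{\beta,d}(x)=P_{\beta,d}(\sin x)$, and to handle the $\ell^1$-norm bound on the Fourier coefficients by a direct calculation with modified Bessel functions.

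For parts~1 and~2, the key observation is that $\Theta(x)=\Theta(\sin x)$ on $(-\pi,\pi)\setminus\{0\}$, since $\sgn(\sin x)=\sgn(x)$ there. On $[-\pi+\delta,-\delta]\cup[\delta,\pi-\delta]$ one has $|\sin x|\geq\sin\delta$, i.e.\ $\sin x\in[-1,-\sin\delta]\cup[\sin\delta,1]$, so setting $\nu=\sin\delta$ in Theorem~\ref{thm:Chebyshev} turns its hypothesis $\beta\geq\tfrac{1}{4\nu^2}W(\tfrac{2}{\pi\varepsilon_3^2})$ into the first condition on $\beta$ in Theorem~\ref{thm:Fourier}, and Eq.~\eqref{chebthm1} evaluated at $\sin x$ yields Eq.~\eqref{Ferror}. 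Part~2 follows by the same substitution: since $\sin x\in[-1,1]$ for every $x\in\mathbb{R}$, Eq.~\eqref{chebthm2} immediately gives Eq.~\eqref{Fbound}.

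For part~3 I would start from Eq.~\eqref{eq:Fouriercoefficients}, observing that each $F_{2j+1}$ is purely imaginary with
\[
|F_{2j+1}|=\sqrt{\tfrac{\beta}{2\pi}}\,e^{-\beta}\,\tfrac{I_j(\beta)+I_{j+1}(\beta)}{2j+1}\quad(0\leq j<d),\qquad |F_{2d+1}|=\sqrt{\tfrac{\beta}{2\pi}}\,e^{-\beta}\,\tfrac{I_d(\beta)}{2d+1}.
\]
The central auxiliary estimate is the uniform bound $\sqrt{\beta/(2\pi)}\,e^{-\beta}\,I_k(\beta)\leq \tfrac{1}{2}$ for all $\beta\geq 0$ and $k\in\mathbb{Z}_{\geq 0}$, which I would prove by first using $|I_k(\beta)|\leq I_0(\beta)$ (immediate from the integral representation $I_k(\beta)=\pi^{-1}\int_0^\pi e^{\beta\cos\theta}\cos(k\theta)\,d\theta$) and then bounding $e^{-\beta}I_0(\beta)=\pi^{-1}\int_0^\pi e^{-\beta(1-\cos\theta)}d\theta$ via the elementary trigonometric inequality $1-\cos\theta\geq 2\theta^2/\pi^2$ on $[0,\pi]$ (equivalent by concavity to $\sin(\theta/2)\geq\theta/\pi$), which reduces the integral to a one-sided Gaussian on $[0,\infty)$. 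Termwise application then gives $\sum_{j=0}^d|F_{2j+1}|\leq \sum_{j=0}^{d-1}\tfrac{1}{2j+1}+\tfrac{1}{2(2d+1)}\leq \sum_{j=0}^d\tfrac{1}{2j+1}$, and the proof concludes with the closed-form identity $\sum_{j=0}^d \tfrac{1}{2j+1}=\tfrac{1}{2}H_{d+1/2}+\ln 2$, which follows from Gauss's duplication formula $\psi(2x)=\tfrac{1}{2}\psi(x)+\tfrac{1}{2}\psi(x+\tfrac{1}{2})+\ln 2$ at $x=d+1$, together with $H_n=\psi(n+1)+\gamma$ extended to half-integer indices via the digamma function.

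The main obstacle is the uniform-in-$\beta$ Bessel bound: although the chain of inequalities is elementary, one must string together a not-entirely-obvious trigonometric estimate, an extension of the integration range, and a Gaussian evaluation to obtain a constant independent of $\beta$. The remainder of the theorem is essentially an immediate consequence of Theorem~\ref{thm:Chebyshev} under the substitution $x\mapsto\sin x$.
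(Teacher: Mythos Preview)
Your argument for parts 1 and 2 is essentially identical to the paper's: both use $F_{\beta,d}(x)=P_{\beta,d}(\sin x)$, the identity $\Theta(x)=\Theta(\sin x)$ on $(-\pi,\pi)$, the bound $|\sin x|\geq\sin\delta$ on the relevant interval, and then invoke Theorem~\ref{thm:Chebyshev} with $\nu=\sin\delta$.

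For part 3 you take a genuinely different route. The paper cites an external asymptotic result of Kasperkovitz (Lemma~\ref{lem:kasperkovitz}), namely
\[
\bigl|\sqrt{2\pi\beta}\,e^{-\beta}I_j(\beta)-e^{-j^2/(2\beta)}\bigr|\leq 1.07\,\beta^{-1/4}\qquad(\beta\geq 1),
\]
to deduce $\sqrt{2\pi\beta}\,e^{-\beta}I_j(\beta)\leq 2.07$; this is the only place the hypothesis $\beta\geq 1$ enters. Your approach instead combines $I_k(\beta)\leq I_0(\beta)$ with Jordan's inequality $\sin(\theta/2)\geq\theta/\pi$ to dominate $e^{-\beta}I_0(\beta)$ by a Gaussian integral, yielding a fully self-contained bound valid for all $\beta\geq 0$ (in fact your chain gives $\sqrt{\beta/(2\pi)}\,e^{-\beta}I_k(\beta)\leq\tfrac14$, tighter than the $\tfrac12$ you state and tighter than the paper's $2.07/(2\pi)\approx 0.33$). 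Both routes then finish with the same harmonic-number identity $\sum_{j=0}^d\tfrac{1}{2j+1}=\tfrac12 H_{d+1/2}+\ln 2$. Your argument is more elementary and removes the need for the $\beta\geq 1$ hypothesis in the coefficient-sum bound; the paper's version has the advantage of giving finer $j$-dependent control on each $I_j(\beta)$ individually, though that refinement is not actually used.
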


\begin{proof}
Eq.~\eqref{Ferror} follows from the Eq.~\eqref{chebthm1} of Theorem~\ref{thm:Chebyshev}. Specifically, since $\Theta(x) = \Theta(\sin x)$ for $x\in[-\pi,\pi]$ and $|\sin x| \geq \sin \delta$ for $x \in [-\pi + \delta, -\delta] \cup [\delta, \pi - \delta]$, we have
\begin{align*}
\max_{x \in [-\pi + \delta, -\delta] \cup [\delta, \pi - \delta]} \left|\Theta(x) - F_{\beta,d}(x) \right| &= \max_{x \in [-\pi + \delta, -\delta] \cup [\delta, \pi - \delta]}\left|\Theta(\sin x) - P_{\beta,d}(\sin x)\right| \\
&= \max_{y \in [-1,-\sin\delta] \cup [\sin\delta, 1]} \left|\Theta(y) - P_{\beta,d}(y) \right| \\
&\leq \frac{1}{2}(\varepsilon_1 + \varepsilon_2 + \varepsilon_3)
\end{align*}
by Theorem~\ref{thm:Chebyshev},
under the assumptions on $\beta$ and $d$. Eq.~\eqref{Fbound} follows immediately from Eq.~\eqref{chebthm2} of Theorem~\ref{thm:Chebyshev}, using the definition $F_{\beta,d}(x) \coloneqq P_{\beta,d}(\sin x)$ and the fact that $\sin x \in [-1,1]$ for all $x \in \mathbb{R}$. Finally, to prove Eq.~\eqref{Fsum}, we use Lemma~\ref{lem:kasperkovitz} from~\cite{kasperkovitz80} on the asymptotic behaviour of $I_n(\cdot)$ to bound
\begin{equation*}
\sqrt{2\pi \beta}e^{-\beta} I_j(\beta) \leq \exp(-j^2/(2\beta)) + 1.07 \beta^{-1/4} \leq 2.07\,.
\end{equation*}
By Eq.~\eqref{eq:Fouriercoefficients}, we have
\begin{align*}
    \sum_{j=0}^{d-1} |F_{2j+1}| = \sum_{j=0}^{d-1} \sqrt{\frac{\beta}{2\pi}} \frac{I_j(\beta) + I_{j+1}(\beta)}{2j+1} + \sqrt{\frac{\beta}{2\pi}}e^{-\beta} \frac{I_d(\beta)}{2d+1}\leq \frac{1}{2\pi}\sum_{j=0}^d  \frac{2\cdot 2.07}{2j+1} = \frac{2.07}{2\pi} \left(H_{d+1/2} + 2\ln 2 \right),
\end{align*}
and we loosen $2.07/(2\pi) < 1/2$.\footnote{Obtaining tight constants in the bound on the sum of Fourier coefficients is not crucial for our purposes, since when using our algorithm, one would numerically compute the Fourier coefficients and their sum. We remark that calculations in \cite[Appendix A.2]{gribling21} would lead to a result similar to ours, in Eq.~\eqref{Fsum}, for the coefficient sum.}
\end{proof}

Lemma~\ref{lem:fourier} from the main text is a special case of Theorem~\ref{thm:Fourier}, obtained by making the simple choice $\varepsilon_1 = \varepsilon_2 = \varepsilon_3 = 2\eps/3$. For completeness, we provide the proof below. Note that in practice, one can numerically minimise $d$ over the choice of $\varepsilon_1, \varepsilon_2, \varepsilon_3$, as we do in our numerical estimates in Sec.~\ref{sec:examples} of the main text.

\begin{proof}[Proof of Lemma~\ref{lem:fourier}] 
Choosing $\varepsilon_1 = \varepsilon_2 = \varepsilon_3 = 2\eps/3$ in Theorem~\ref{thm:Fourier}, we take
\[ \beta = \max\left\{\frac{1}{4\sin^2\delta} W\left(\frac{3}{\pi \eps^2}\right), 1 \right\} = \mathcal{O}\left(\frac{1}{\delta^2} \log\left(\frac{1}{\eps}\right)\right). \] Then, from Eqs.~\eqref{t} and~\eqref{fasymptotic}, we have that $t = \mathcal{O}(\beta + \ln(\eps^{-1})) = \mathcal{O}(\delta^{-2}\log(\eps^{-1}))$. We can choose $d = \ceil{\sqrt{tw_{\varepsilon_1}}}$, so since $w_{\varepsilon_1} = \mathcal{O}(\log(\eps^{-1}))$, we have $d = \mathcal{O}(\delta^{-1}\log(\eps^{-1})).$ 
Since $(\varepsilon_1 + \varepsilon_2 + \varepsilon_3)/2 = \eps$, it then follows from Eqs.~\eqref{Ferror} and~\eqref{Fbound} of  Theorem~\ref{thm:Fourier} that these choices of $\beta$ and $d$ ensure that $|\Theta(x) - F_{\beta,d}(x)| \leq \eps$ for all $x \in [-\pi + \delta, -\delta] \cup [\delta, \pi - \delta]$ and that $-\eps \leq F_{\beta,d}(x) \leq 1 +\eps$ for all $x \in \mathbb{R}$. Also, Eq.~\eqref{Fsum} of Theorem~\ref{thm:Fourier} gives $\sum_{j\in S_1} |F_j| = \mathcal{O}(\log d)$ for $S_1 = \{0\}\cup \{\pm (2j+1)\}_{j=0}^d$, since $F_0 = 1/2$ and the harmonic number scales as $H_n = \mathcal{O}(\log n)$.
\end{proof}



\subsection{Technical lemmas}\label{sec:fourier-misc}

The following results are used in the approximation theory proofs above.

\begin{prop}[Chebyshev expansion of the error function] \label{claim:erfChebyshev}
For any $\beta \in \mathbb{R}_{>0}$, we have
\[ \erf\big(\sqrt{2\beta} x\big) = 2\sqrt{\frac{2\beta}{\pi}}e^{-\beta} \left[I_0(\beta)x + \sum_{j=1}^\infty I_j(\beta)(-1)^j \left(\frac{T_{2j+1}(x)}{2j+1} - \frac{T_{2j-1}(x)}{2j-1}\right) \right]. \]
\end{prop}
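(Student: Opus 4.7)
The plan is to derive the Chebyshev expansion by differentiating, expanding via Jacobi--Anger, and integrating back, exploiting the fact that the Chebyshev polynomials $T_n$ diagonalize $\cos$ under the substitution $x = \cos\theta$.

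First, I would observe that $\erf(\sqrt{2\beta}\,x)$ is odd with $\erf(0)=0$, so by the fundamental theorem of calculus
\begin{equation*}
\erf\!\big(\sqrt{2\beta}\,x\big) \;=\; 2\sqrt{\tfrac{2\beta}{\pi}} \int_0^x e^{-2\beta t^2}\, dt.
\end{equation*}
Writing $t=\cos\theta$, the double-angle identity gives $2\beta t^2 = \beta(1+\cos 2\theta)$, hence $e^{-2\beta t^2} = e^{-\beta}\, e^{-\beta\cos 2\theta}$. This is exactly the form to which the Jacobi--Anger expansion applies.

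Next I would apply the Jacobi--Anger expansion $e^{z\cos\phi} = I_0(z) + 2\sum_{n=1}^\infty I_n(z)\cos(n\phi)$ with $z=-\beta$ and $\phi = 2\theta$. Using the parity identity $I_n(-\beta) = (-1)^n I_n(\beta)$ for $n \in \mathbb{Z}_{\geq 0}$, and recognising $\cos(2n\theta) = T_{2n}(\cos\theta) = T_{2n}(t)$, I would obtain the Chebyshev expansion
\begin{equation*}
e^{-2\beta t^2} \;=\; e^{-\beta}\!\left[ I_0(\beta) + 2\sum_{n=1}^{\infty} (-1)^n I_n(\beta)\, T_{2n}(t)\right],
\end{equation*}
valid uniformly for $t\in[-1,1]$ (the $I_n(\beta)$ decay super-exponentially in $n$ for fixed $\beta$, so the series converges absolutely and uniformly, justifying term-by-term integration).

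Then I would integrate term by term from $0$ to $x$, using $\int_0^x T_0(t)\,dt = x$ and the standard antiderivative formula $\int T_{2n}(t)\,dt = \tfrac{1}{2}\!\left(\tfrac{T_{2n+1}(t)}{2n+1} - \tfrac{T_{2n-1}(t)}{2n-1}\right) + C$ for $n\geq 1$. The boundary contribution at $t=0$ vanishes because $T_m(0) = \cos(m\pi/2) = 0$ for every odd $m$, eliminating both $T_{2n+1}(0)$ and $T_{2n-1}(0)$. The factor of $2$ in the series then cancels the $\tfrac{1}{2}$ from the antiderivative, yielding exactly
\begin{equation*}
\erf\!\big(\sqrt{2\beta}\,x\big) = 2\sqrt{\tfrac{2\beta}{\pi}}\, e^{-\beta}\!\left[ I_0(\beta)\, x + \sum_{j=1}^{\infty} (-1)^j I_j(\beta)\!\left(\tfrac{T_{2j+1}(x)}{2j+1} - \tfrac{T_{2j-1}(x)}{2j-1}\right)\right].
\end{equation*}

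There is no serious obstacle; the only point requiring any care is justifying the interchange of the infinite sum and the integral, which follows immediately from uniform convergence of the Chebyshev series for $e^{-2\beta t^2}$ on the compact interval $[-1,1]$ (a consequence of the exponential decay of $I_n(\beta)$ in $n$).
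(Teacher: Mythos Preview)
Your proof is correct and follows essentially the same route as the paper's: both write $\erf(\sqrt{2\beta}\,x)$ as $2\sqrt{2\beta/\pi}\,e^{-\beta}\int_0^x e^{-\beta T_2(t)}\,dt$, apply Jacobi--Anger to expand the integrand in Chebyshev polynomials $T_{2j}$, and integrate term by term using the standard antiderivative of $T_{2j}$ together with $T_{\text{odd}}(0)=0$. The only cosmetic difference is that you reach the $T_{2j}$ expansion via the substitution $t=\cos\theta$ and the identity $\cos(2j\theta)=T_{2j}(t)$, whereas the paper uses the Chebyshev form of Jacobi--Anger together with the composition identity $T_j(T_2(t))=T_{2j}(t)$; these are the same computation in different notation.
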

\begin{proof}
By definition,
\begin{align*}
\erf\big(\sqrt{2\beta}x\big) = \frac{2}{\sqrt{\pi}}\int_0^{\sqrt{2\beta}x} dy\, e^{-y^2} = 2\sqrt{\frac{2\beta}{\pi}}\int_0^x dz\, e^{-2\beta z^2}= 2\sqrt{\frac{2\beta}{\pi}}e^{-\beta} \int_0^x dz\, e^{-\beta T_2(z)},
\end{align*}
changing variables in the second equality and substituting $T_2(z) = 2z^2 -1$ to obtain the third equality. Next, we use the fact that for any $x \in [-1,1]$, the Jacobi-Anger identity gives
\[ e^{-\beta x} = I_0(\beta) + 2\sum_{j=1}^\infty I_j(\beta) (-1)^jT_j(x). \]
Hence, since $T_2(z) \in [-1,1]$ for all $z \in [-1,1]$, we have that for any $x \in [-1,1]$,
\begin{align*}
\erf\left(\sqrt{2\beta}x\right) &= 2\sqrt{\frac{2\beta}{\pi}}e^{-\beta} \int_0^x dz\, \left(I_0(\beta) + 2\sum_{j=1}^\infty I_j(\beta) (-1)^n T_j(T_2(z)) \right) \\
&= 2\sqrt{\frac{2\beta}{\pi}}e^{-\beta} \int_0^x dz\, \left( I_0(\beta) + 2\sum_{j=1}^\infty I_j(\beta) (-1)^j T_{2j}(z)\right) \\
&= 2\sqrt{\frac{2\beta}{\pi}}e^{-\beta} \left[I_0(\beta)x + \sum_{j=1}^\infty I_j(\beta)(-1)^j \left(\frac{T_{2j+1}(x)}{2j+1} - \frac{T_{2j-1}(x)}{2j-1}\right) \right],
\end{align*}
where we used the composition identity $T_m(T_n(z)) = T_{mn}(z)$ in the second line and $\int dx\, T_n = \frac{1}{2}\left(\frac{T_{n+1}}{n+1} - \frac{T_{n-1}}{n-1} \right)$ in the third line, noting that $T_n(0) = 0$ for odd $n$. 
\end{proof}

\begin{prop} \label{claim:besselsum}
For any $\beta \in \mathbb{R}$ and $d \in \mathbb{Z}_{\geq 0}$, we have
\[ \sum_{j=d+1}^\infty I_j(\beta) = \sum_{j=d+1}^\infty \frac{\beta^j}{j!}2^{-j} \sum_{k=0}^{\floor{(j-d-1)/2}} {j\choose k}\,. \]
\end{prop}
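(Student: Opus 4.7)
The plan is to derive the identity by direct manipulation of the power series of the modified Bessel function. Recall that
\[
I_j(\beta) \;=\; \sum_{m=0}^{\infty} \frac{1}{m!\,(m+j)!}\left(\frac{\beta}{2}\right)^{2m+j},
\]
which converges absolutely for all $\beta\in\mathbb{R}$ and all $j\in\mathbb{Z}_{\ge0}$. Substituting this into $\sum_{j=d+1}^\infty I_j(\beta)$ gives a double sum over $(j,m)\in\{d+1,d+2,\dots\}\times\{0,1,\dots\}$, and the first step is to invoke absolute convergence to swap the order of summation freely.

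Next I would reindex by setting $N = 2m+j$, the total exponent of $\beta/2$. For a fixed $N$, the admissible $m$ range is determined by the two constraints $m\ge 0$ and $j=N-2m\ge d+1$, which combine to $0\le m\le \lfloor (N-d-1)/2\rfloor$. Since $N\ge 2m+j \ge d+1$, the outer sum runs over $N\ge d+1$. After the substitution, the summand becomes $\frac{1}{m!\,(N-m)!}(\beta/2)^N$, and recognising $\frac{1}{m!(N-m)!}=\frac{1}{N!}\binom{N}{m}$ collects the inner sum into $\sum_{m=0}^{\lfloor (N-d-1)/2\rfloor}\binom{N}{m}$. Renaming $N\to j$ and $m\to k$ then yields the right-hand side of the claim.

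The only subtle point is justifying the change of summation order; this is routine since every term is nonnegative for $\beta\ge 0$ (so Tonelli applies), while for general $\beta\in\mathbb{R}$ one can apply the same argument to $|\beta|$ to get absolute convergence of the double series, after which Fubini permits the reorganisation. No deeper Bessel identities are needed, and the remainder of the argument is a clean change of variables. (Equivalently, one can give a probabilistic derivation via the Skellam distribution: if $A,B$ are independent Poissons with mean $\beta/2$, then $\mathbb{P}(A-B=n)=e^{-\beta}I_n(\beta)$, and conditioning on $A+B=j$ gives $A\sim\mathrm{Bin}(j,1/2)$, which reproduces the binomial tail in the inner sum; but the direct series manipulation is shorter.)
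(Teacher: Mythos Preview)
Your proposal is correct and follows essentially the same approach as the paper: start from the power series $I_j(\beta)=\sum_{m\ge0}\frac{1}{m!(m+j)!}(\beta/2)^{2m+j}$, reindex by the total exponent $N=2m+j$, swap the order of summation, and identify $\tfrac{1}{m!(N-m)!}=\tfrac{1}{N!}\binom{N}{m}$ to obtain the binomial inner sum. Your explicit justification of the interchange via Tonelli/Fubini is a nice addition that the paper leaves implicit.
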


\begin{proof}
We calculate
\begin{align*}
    \sum_{\ell=d+1}^\infty I_\ell(\beta) &= \sum_{\ell=d+1}^\infty\sum_{m=0}^\infty {2m+\ell \choose m} \frac{(\beta/2)^{2m+\ell}}{(2m+\ell)!} \\
    &= \sum_{\ell=d+1}^\infty \sum_{\substack{j = \ell\\\text{$j-\ell$ even}}}^\infty {j \choose (j - \ell)/2}\frac{(\beta/2)^j}{j!} \\
    &= \sum_{j=d+1}^\infty \sum_{\substack{\ell = d+1\\ \text{$j-\ell$ even}}}^j {j\choose (j-\ell)/2} \frac{(\beta/2)^j}{j!} \\
    &= \sum_{j=d+1}^\infty \frac{(\beta/2)^j}{j!} \sum_{k=0}^{\floor{(j-d-1)/2}} {j\choose k},
\end{align*}
where the first line follows from the definition of $I_\ell(\cdot)$, the second from making the change of variables $j = 2m+\ell$, the third from exchanging the summations, and the fourth from re-indexing the inner sum with $k = (j-\ell)/2$. 
\end{proof}

\begin{prop} \label{claim:Wthing}
For any $\beta, \varepsilon \in \mathbb{R}_{>0}$, we have
\begin{equation} \label{betaepsilon}
\left(\frac{e\beta}{t}\right)^t e^{-\beta} \leq \varepsilon.
\end{equation}
for all
\begin{equation} \label{t2} t \begin{dcases} \geq
f(\beta,\varepsilon)
\qquad &\text{if $\varepsilon < 1$} \\
\in \mathbb{R} \qquad &\text{if $\varepsilon \geq 1$},
\end{dcases} \end{equation}
where $f(\cdot,\cdot)$ is defined in Eq.~\eqref{eq:f}.
Moreover, $f(\beta, \varepsilon) > \beta$ for all $\varepsilon < 1$, and
\begin{equation} \label{fasymptotic}
f(\beta,\varepsilon) = \mathcal{O}\left(\beta + \ln\left(\frac{1}{\varepsilon}\right) \right)\,.
\end{equation}
\end{prop}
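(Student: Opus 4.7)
The plan is to turn the inequality $(e\beta/t)^t e^{-\beta}\leq \varepsilon$ into a one-dimensional problem by taking logarithms, solve it in closed form via the Lambert-W function, and then read off the three auxiliary claims. Setting $h(t)\coloneqq t\ln(e\beta/t)-\beta$, the inequality is equivalent to $h(t)\leq \ln\varepsilon$. Differentiating gives $h'(t)=\ln(\beta/t)$, so $h$ attains its unique maximum $h(\beta)=0$ at $t=\beta$. This immediately handles the case $\varepsilon\geq 1$, since then $\ln\varepsilon\geq 0\geq h(t)$ for every $t>0$.

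For $\varepsilon<1$ I would restrict to the decreasing branch $t\geq\beta$ and change variables via $t=\beta e^{1+u}$ with $u\geq -1$. A short computation gives $h(t)=-\beta\bigl(u e^{1+u}+1\bigr)$, so $h(t)\leq -\ln(1/\varepsilon)$ is equivalent to $u e^u\geq r$, where $r\coloneqq \tfrac{1}{e}\bigl(\tfrac{1}{\beta}\ln(1/\varepsilon)-1\bigr)$. Because $\varepsilon<1$, we have $r>-1/e$, and since $u\mapsto u e^u$ is strictly increasing on $[-1,\infty)$ with range $[-1/e,\infty)$, this last inequality is in turn equivalent to $u\geq W(r)$. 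Unwinding via the Lambert-W identity $e^{W(r)}=r/W(r)$ (which extends continuously to the removable case $W(r)=0$) yields
\[
t \;=\; \beta e^{1+W(r)} \;=\; \frac{\beta e\, r}{W(r)} \;=\; \frac{\ln(1/\varepsilon)-\beta}{W(r)} \;=\; f(\beta,\varepsilon),
\]
as claimed.

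The strict inequality $f(\beta,\varepsilon)>\beta$ then follows from strict monotonicity of $W$: the bound $r>-1/e$ forces $W(r)>-1$, hence $e^{1+W(r)}>1$ and $f>\beta$. For the asymptotic bound I would split on the size of $r$. If $r\leq e$, then $W(r)\leq 1$ and $e^{W(r)}\leq e$, so $f=\beta e\cdot e^{W(r)}\leq \beta e^2$. If $r>e$, then $W(r)>1$ and $e^{W(r)}=r/W(r)<r$, giving $f<\beta e\cdot r=\ln(1/\varepsilon)-\beta$. Either way $f=\mathcal{O}(\beta+\ln(1/\varepsilon))$. The only real obstacle is the Lambert-W bookkeeping—tracking the principal branch, its range on $[-1/e,\infty)$, and the $W(r)=0$ singularity—but all three assertions fall out from this single substitution.
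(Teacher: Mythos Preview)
Your proof is correct and follows the same Lambert-$W$ reduction as the paper: both take logarithms, observe the maximum at $t=\beta$ to dispose of the case $\varepsilon\geq 1$, and then invert on the decreasing branch to identify the threshold with $f(\beta,\varepsilon)$. Your substitution $t=\beta e^{1+u}$ is a clean variant of the paper's manipulation and lands on the same formula.

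The one genuine difference is in the asymptotic bound $f(\beta,\varepsilon)=\mathcal{O}(\beta+\ln(1/\varepsilon))$. The paper obtains this by noting that $(e\beta/t)^t\leq\varepsilon$ already suffices and then invoking \cite[Lemma~59]{gilyen19} as a black box, whereas you give a short self-contained argument via the case split $r\lessgtr e$ and the identity $e^{W(r)}=r/W(r)$. Your route removes the external dependence and yields explicit constants ($f\leq e^2\beta$ or $f<\ln(1/\varepsilon)$), which is a modest improvement in a paper that otherwise strives for non-asymptotic bounds.
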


\begin{proof}
For any $\beta$, the function $g_\beta(t) \coloneqq (e\beta/t)^t e^{-\beta}$ reaches its global maximum of $1$ at $t = \beta$. Therefore, if $\varepsilon \geq 1$, Eq.~\eqref{betaepsilon} holds for all $t$. The function $g_\beta$ decreases monotonically past $t = \beta$, limiting to $0$ as $t \to \infty$. Hence, if $\varepsilon < 1$, we look for the $t > \beta$ such that $g_\beta(t) = \varepsilon$:
\begin{align*}
    \left(\frac{e\beta}{t}\right)^t e^{-\beta} = \varepsilon \quad \Leftrightarrow \quad \frac{1}{t}\left[\ln\left(\frac{1}{\varepsilon}\right) - \beta\right] \exp\left(\frac{1}{t}\left[\ln\left(\frac{1}{\varepsilon}\right) - \beta\right] \right) = \frac{1}{e\beta}\left[\ln\left(\frac{1}{\varepsilon}\right) - \beta\right].
\end{align*}
Note that since $\varepsilon <1$, the right-hand side of the last expression is always $> -1/e$. The solution such that $t > \beta$ is then given by the principal branch of the Lambert-W function:
\[ \frac{1}{t}\left[\ln\left(\frac{1}{\varepsilon}\right) - \beta\right] = W\left(\frac{1}{e}\left[\frac{1}{\beta}\ln\left(\frac{1}{\varepsilon}\right) - 1\right]\right),\]
which rearranges to give $t = f(\beta,\varepsilon)$. Thus, $f(\beta,\varepsilon) > \beta$. Eq.~\eqref{fasymptotic} follows from noting that since $\beta >0$, Eq.~\eqref{betaepsilon} holds for any $t$ such that $(e\beta/t)^t \leq \varepsilon$, and then applying a loosened version of \cite[Lemma 59]{gilyen19}.
\end{proof}

\begin{lemma}[Error function approximation to the sign function] \label{claim:sgnerf} 
For any $\nu, \varepsilon_3 \in \mathbb{R}_{> 0}$, and $k \in \mathbb{R}_{>0}$ satisfying $k^2 \geq \frac{1}{2\nu^2} W\left(\frac{2}{\pi\varepsilon_3^2}\right)$, we have
\[\max_{x \in (-\infty,-\nu] \cup [\nu,\infty)}\left|\mathrm{sgn}(x) - \erf(kx)\right| \leq \varepsilon_3\,. \]
\end{lemma}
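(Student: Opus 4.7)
The plan is to reduce the two-sided maximum to the one-sided tail of the error function by symmetry and then invert an elementary Gaussian tail bound.

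First I would observe that both $\mathrm{sgn}(x)$ and $\erf(kx)$ are odd functions of $x$, so $|\mathrm{sgn}(x) - \erf(kx)|$ is even and it suffices to bound $\sup_{x\geq\nu} (1 - \erf(kx)) = \sup_{x\geq\nu}\erfc(kx)$. Since $\erfc$ is monotonically decreasing on $(0,\infty)$, the supremum is attained at $x=\nu$, reducing the problem to showing $\erfc(k\nu)\leq\varepsilon_3$ under the hypothesis $k^2\geq \frac{1}{2\nu^2}W(2/(\pi\varepsilon_3^2))$.

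Next I would use the standard one-line Gaussian tail bound: for any $y>0$,
\begin{equation}
\erfc(y) \;=\; \frac{2}{\sqrt{\pi}}\int_y^\infty e^{-t^2}\,dt \;\leq\; \frac{2}{\sqrt{\pi}}\int_y^\infty \frac{t}{y}\,e^{-t^2}\,dt \;=\; \frac{e^{-y^2}}{y\sqrt{\pi}},
\end{equation}
applied with $y=k\nu$. So it is enough to prove $e^{-k^2\nu^2}/(k\nu\sqrt{\pi}) \leq \varepsilon_3$, equivalently (after squaring and writing $u\coloneqq k^2\nu^2$) that $u\,e^{2u}\geq 1/(\pi\varepsilon_3^2)$.

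Finally, setting $w=2u$ converts this to $w\,e^w \geq 2/(\pi\varepsilon_3^2)$, which by the defining property of the principal branch of the Lambert-W function is equivalent to $w\geq W(2/(\pi\varepsilon_3^2))$, i.e. $k^2\nu^2 \geq \tfrac{1}{2}W(2/(\pi\varepsilon_3^2))$ — exactly the hypothesis. No step is really an obstacle here; the only mildly non-trivial check is that $\erfc$ is decreasing (so the supremum is at $x=\nu$) and that the $1/y$ factor in the tail bound is already accounted for inside the Lambert-W expression, which is why the exponent carries a factor of $2$ rather than $1$ on the right-hand side of the hypothesis.
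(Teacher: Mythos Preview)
Your proof is correct and essentially identical to the paper's: both reduce by odd symmetry to $\mathrm{erfc}(k\nu)$, apply the standard bound $\mathrm{erfc}(y)\leq e^{-y^2}/(\sqrt{\pi}y)$, and invert via the Lambert-$W$ function. The only cosmetic difference is that you derive the tail bound explicitly rather than citing it.
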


\begin{proof}
\begin{align*} 
\max_{x \in (-\infty,-\nu] \cup [\nu,\infty)}\left|\mathrm{sgn}(x) - \erf(kx)\right| = \max_{x\in [\nu,\infty)} \left|1 - \erf(kx)\right| = \mathrm{erfc}(k\nu)\leq \frac{e^{-k^2\nu^2}}{\sqrt{\pi}k\nu}\,,
\end{align*}
where the first equality follows from the symmetry of $\sgn(\cdot)$ and $\erf(\cdot)$, the second from the fact that $1- \erf(kx) = \mathrm{erfc}(kx)$ is a decreasing function for $k>0$, and the third from the standard bound $\mathrm{erfc}(x) \leq e^{-x^2}/(\sqrt{\pi}x)$ for $x >0$. The upper bound is at most $\varepsilon_3$ if $2k^2\nu^2 e^{2k^2\nu^2} \geq {2}/({\pi \varepsilon_3^2})$, which holds for $2k^2\nu^2 \geq W(2/(\pi \varepsilon_3^2))$.
\end{proof}

\begin{lemma}\cite[Equation~7]{kasperkovitz80}\label{lem:kasperkovitz}
For any $\beta\geq1$ and $j\in\mathbb{Z}$, we have
\begin{align*}
\left|\sqrt{2\pi\beta}e^{-\beta}\cdot I_j(\beta)-\exp\left(-\frac{j^2}{2\beta}\right)\right|\leq1.07\beta^{-1/4}.
\end{align*}
\end{lemma}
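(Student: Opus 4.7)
The plan is to invoke the result directly, since the lemma is quoted verbatim as Equation~7 of Kasperkovitz's 1980 paper~\cite{kasperkovitz80}. To give the reader a sense of where the bound comes from, I would sketch a self-contained derivation by Laplace's method applied to the integer-order integral representation
\begin{equation*}
I_j(\beta) \;=\; \frac{1}{\pi}\int_0^\pi e^{\beta\cos\theta}\cos(j\theta)\,d\theta,
\end{equation*}
whose integrand concentrates sharply near $\theta=0$ for large $\beta$.

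The bulk of the argument would split the interval at some $\theta_0$ of the form $\beta^{-\alpha}$ with $\alpha\in(1/4,1/2)$ (for example $\theta_0=\beta^{-3/8}$). On the bulk $[0,\theta_0]$, I would Taylor-expand $\cos\theta = 1-\theta^2/2+O(\theta^4)$ and write
\begin{equation*}
e^{\beta\cos\theta} \;=\; e^{\beta}\,e^{-\beta\theta^2/2}\bigl(1+O(\beta\theta^4)\bigr).
\end{equation*}
Extending the leading Gaussian-against-cosine integral from $[0,\theta_0]$ to $\mathbb{R}_{\geq 0}$ evaluates exactly to $\tfrac{1}{2}\sqrt{2\pi/\beta}\,e^{-j^2/(2\beta)}$, and multiplying by the prefactor $\sqrt{2\pi\beta}\,e^{-\beta}/\pi$ reproduces the Gaussian term claimed in the lemma. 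The tail $[\theta_0,\pi]$ contributes at most $\exp(\beta\cos\theta_0)\leq e^{\beta}\exp(-\beta\theta_0^2/2+O(\beta\theta_0^4))$, which is exponentially smaller than $e^{\beta}$ and hence negligible at the $\beta^{-1/4}$ scale. The quartic-correction term and the Gaussian-truncation correction, after integration, combine to an overall deviation of order $\beta^{-1/4}$.

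The main obstacle in a from-scratch derivation is pinning down the explicit numerical constant $1.07$ uniformly over all $j\in\mathbb{Z}$: the three error sources (quartic Taylor remainder, Gaussian truncation at $\theta_0$, and tail region $[\theta_0,\pi]$) must be balanced against each other via an optimal choice of $\theta_0$, and the bound must remain uniform as $|j|$ ranges across all of $\mathbb{Z}$, including the delicate regime $|j|\sim\sqrt{\beta}$ where both quantities in the lemma are of comparable size. Kasperkovitz carries out this numerical optimization in~\cite{kasperkovitz80}; for our purposes it suffices to cite that inequality as a black box, since Lemma~\ref{lem:kasperkovitz} is only invoked to bound a prefactor in Eq.~\eqref{Fsum}, where any bound of the form $C\beta^{-1/4}$ with an explicit constant $C$ would lead to the same $\mathcal O(\log d)$ asymptotic scaling of the Fourier coefficient sum.
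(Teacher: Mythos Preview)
Your proposal is correct and matches the paper's treatment: the paper states Lemma~\ref{lem:kasperkovitz} with the citation \cite[Equation~7]{kasperkovitz80} and provides no proof, so invoking the result as a black box is exactly what is done. Your Laplace-method sketch is a helpful addition that goes beyond what the paper offers.
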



\section{Proof of Eq.~\eqref{acdf}} \label{app:acdf}

In this appendix, we prove that our Fourier series approximation to the Heaviside function given in Eq.~\eqref{Fbetad} can be used to construct an approximate CDF, defined in Eq.~\eqref{Ctilde}, that satisfies the approximation guarantees in Eq.~\eqref{acdf}. This allows us to use the classical post-processing algorithms of Lin \& Tong \cite{lin21}. In particular, \cite[Appendix B]{lin21} proves a result similar to Proposition~\ref{prop:guarantee} below. However, their proof applies only under the assumption that the Fourier series $F(\cdot)$ is bounded as $0 \leq F(x) \leq 1$ for all $x \in \mathbb{R}$, whereas from Theorem~\ref{thm:Fourier}, we only have the guarantee $-\varepsilon \leq F(x) \leq 1 + \varepsilon$, for some $\varepsilon > 0$. Note that the Fourier series used in \cite{lin21} also only satisfies this weaker condition.

\begin{prop} \label{prop:guarantee}
Let $F(\cdot)$ be any function satisfying 
\begin{enumerate}
	\item $|\Theta(x) - F(x)| \leq \varepsilon$ for all $x \in [-\pi + \delta, -\delta] \cup [\delta, \pi-\delta]$, and
	\item $-\varepsilon \leq F(x) \leq 1 + \varepsilon$ for all $x \in [-\pi,\pi]$
\end{enumerate}
for some $\delta \in (0,\pi/2)$ and $\varepsilon \in \mathbb{R}_{>0}$. For any probability density function $p(\cdot)$ that is supported within the interval $[-(\pi-\delta)/2,(\pi-\delta)/2]$, define 
\[ C(x) \coloneqq (p *\Theta)(x), \qquad \widetilde{C}(x) \coloneqq (p * F)(x), \]
where $(f*g)(x) \coloneqq \int_{-\infty}^\infty dy\, f(y)g(x-y)$.
Then, for all $x \in [-(\pi-\delta)/2,(\pi-\delta)/2]$, 
\begin{equation} C(x-\delta) - \varepsilon \leq \widetilde{C}(x) \leq C(x + \delta) + \varepsilon. \end{equation}
\end{prop}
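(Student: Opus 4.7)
The plan is to write both differences as a single convolution
\begin{equation*}
\widetilde{C}(x)-C(x\pm\delta)=\int_{-\infty}^{\infty}dy\,p(y)\bigl[F(x-y)-\Theta(x\pm\delta-y)\bigr],
\end{equation*}
and to bound the bracket pointwise by $\varepsilon$ in absolute value for every $y$ in the support of $p$. Since $p$ integrates to one, this would yield the two-sided inequality at once.

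The key preliminary observation is that the support hypothesis on $p$, combined with $x\in[-(\pi-\delta)/2,(\pi-\delta)/2]$, forces the shifted variable $u\coloneqq x-y$ to lie in $[-(\pi-\delta),\pi-\delta]\subset[-\pi,\pi]$. Thus both hypotheses on $F$ always apply. I would then partition this range into $R_+\coloneqq[\delta,\pi-\delta]$, $R_0\coloneqq[-\delta,\delta]$, and $R_-\coloneqq[-\pi+\delta,-\delta]$, and check in each sub-region that the integrand is bounded by $\pm\varepsilon$ by combining the appropriate hypothesis on $F$ with the explicit value of $\Theta$ at $u\pm\delta$.

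For the upper bound $\widetilde{C}(x)\leq C(x+\delta)+\varepsilon$: on $R_+$, hypothesis 1 yields $F(u)\leq 1+\varepsilon=\Theta(u+\delta)+\varepsilon$ since $u+\delta\geq 2\delta>0$; on $R_0$, hypothesis 2 gives $F(u)\leq 1+\varepsilon=\Theta(u+\delta)+\varepsilon$ since again $u+\delta\geq 0$; and on $R_-$, hypothesis 1 gives $F(u)\leq\varepsilon=\Theta(u+\delta)+\varepsilon$ almost everywhere, since $\Theta(u+\delta)=0$ except at the endpoint $u=-\delta$. The lower bound $\widetilde{C}(x)\geq C(x-\delta)-\varepsilon$ is symmetric: on $R_+$, hypothesis 1 gives $F(u)\geq 1-\varepsilon\geq\Theta(u-\delta)-\varepsilon$; on $R_-$, $\Theta(u-\delta)=0$ while hypothesis 2 gives $F(u)\geq-\varepsilon$; and on $R_0$, the same bound $F(u)\geq-\varepsilon$ matches $\Theta(u-\delta)-\varepsilon=-\varepsilon$ almost everywhere, because $u-\delta\in[-2\delta,0]$ is negative off the measure-zero endpoint $u=\delta$.

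I do not anticipate a genuine obstacle: the argument is essentially mechanical case analysis. The only subtlety worth flagging is the jump of $\Theta$ at $0$, which falls inside $R_0$ after the shift by $\mp\delta$; this is harmless because it contributes only on a measure-zero set, and because the hypotheses on $F$ in $R_0$ are phrased as the two-sided bound $-\varepsilon\leq F(u)\leq 1+\varepsilon$ precisely so as to cover this transition region without demanding pointwise closeness to $\Theta$.
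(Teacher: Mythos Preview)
Your proposal is correct and follows essentially the same approach as the paper: write the difference as a single integral in the shifted variable $u=x-y$, observe from the support hypothesis that $u\in[-(\pi-\delta),\pi-\delta]$, split into the three regions $[-\pi+\delta,-\delta]$, $[-\delta,\delta]$, $[\delta,\pi-\delta]$, and apply hypothesis~1 on the outer regions and hypothesis~2 on the middle one. The paper carries out the lower bound explicitly and declares the upper bound analogous, whereas you spell out both; the measure-zero endpoint caveat you flag is handled identically.
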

\begin{proof}
Let $x \in [-(\pi-\delta)/2,(\pi-\delta)/2]$. Since $p(y)$ is only non-zero for $y \in [-(\pi-\delta)/2,(\pi-\delta)/2]$ by assumption,
\begin{align*}
C(x-\delta) &= \int_{-(\pi-\delta)/2}^{(\pi-\delta)/2} dy\, p(y) (\Theta(x - \delta - y) - F(x-y))  
\\
&= \int_{x - (\pi-\delta)/2}^{x + (\pi-\delta)/2} dy'\, p(x - y')(\Theta(y' -\delta) - F(y')) 
\\
&= \left[\int_{x-(\pi-\delta)/2}^{-\delta} + \int_{-\delta}^\delta + \int_\delta^{x+(\pi-\delta)/2} \right] dy\, p(x-y)(\Theta(y-\delta) - F(y)),
\end{align*}
with the convention that $\int_a^b dy\, [\dots] = 0$ if $a > b$.
Since $x - (\pi-\delta)/2 \geq -(\pi-\delta)/2 - (\pi-\delta)/2 = -\pi-\delta$, we have $(x-(\pi-\delta)/2,-\delta) \subseteq (-\pi-\delta, -\delta)$. Similarly, since $x+(\pi-\delta)/2 \leq (\pi-\delta)/2 + (\pi-\delta)/2$, we have $(\delta, x+ (\pi-\delta)/2) \subseteq (\delta, \pi - \delta)$. Hence, $\Theta(y-\delta) - F(y) \leq \varepsilon$ in the first and third integrals, by assumption 1. For the second integral, note that $y-\delta \in (-2\delta,0)$ for $y \in (-\delta, \delta)$, so $\Theta(y-\delta) = 0$ on this range, and it follows from assumption 2 that $\Theta(y-\delta) - F(y) = -F(y) \leq \varepsilon$. Thus,
\begin{align*} 
C(x-\delta) - \widetilde{C}(x) &\leq \varepsilon \int_{x-(\pi-\delta)/2}^{x + (\pi-\delta)/2}dy\, p(x-y) = \varepsilon,
\end{align*}
so $C(x-\delta) - \varepsilon \leq \widetilde{C}(x)$. The upper bound is obtained using an analogous argument.
\end{proof}

When we defined the CDF $C(\cdot)$ in Eq.~\eqref{cdfdef}, we chose the normalisation factor $\tau = \frac{\pi}{2\lambda + \Delta}$, which implies that the corresponding probability density function $p(\cdot)$ is supported within the interval $[-\tau \lambda, \tau\lambda]$. Hence, we can apply Proposition~\ref{prop:guarantee} for any $\delta$ such that
\[ \tau\lambda \leq \frac{\pi-\delta}{2} \quad \Leftrightarrow \quad \delta \leq \pi - 2\lambda\tau = \tau\Delta. \]
This shows that the approximate CDF $\widetilde{C}(x) = (p*F)(x)$ from Eq.~\eqref{Ctilde} satisfies the guarantees in Eq.~\eqref{acdf} for all $\varepsilon \in \mathbb{R}_{>0}$ and $\delta \in (0,\lambda \tau]$, provided that we use the appropriate Fourier series $F(\cdot)$ from Lemma~\ref{lem:fourier} (with $\beta$ and $d$ in Eq.~\eqref{Fbetad} chosen appropriately in terms of $\delta$ and $\varepsilon$), as claimed in the main text.

\section{LCU decomposition of the time evolution operator (Lemma~\ref{lem:simulation})} \label{app:lcu}

In this appendix, we prove Lemma~\ref{lem:simulation} by constructing a particular decomposition of the time evolution operator $e^{i\hat{H}t}$ into a linear combination of unitaries (LCU). We then provide an algorithm, Algorithm~\ref{algsample}, for efficiently sampling a unitary from this decomposition with probability proportional to its coefficient.

\begin{proof}[Proof of Lemma~\ref{lem:simulation}]
By assumption, we have $\hat{H} = \sum_\ell p_\ell P_\ell$, where $p_\ell > 0$ for all $\ell$, $\sum_\ell p_\ell = 1$, and each $P_\ell$ is a Pauli operator. We write $e^{i\hat{H}t} = (e^{i\hat{H}t/r})^r$, and observe that if each $e^{i\hat{H}t/r}$ has an LCU decomposition
\begin{equation} \label{cmVm}
e^{i\hat{H}t/r} = \sum_m c_m W_m\,,
\end{equation}
then 
\begin{equation} e^{i\hat{H}t} = \sum_{m_1,\dots, m_r} c_{m_1}\dots c_{m_r} W_{m_1}\dots W_{m_r} \eqqcolon \sum_k b_k U_k \end{equation}
is an LCU decomposition for $e^{i\hat{H}t}$, with total weight
\begin{equation} \label{bkcm}
\sum_{k} |b_k| = \left(\sum_m |c_m| \right)^r\,.
\end{equation}
Furthermore, note that we can sample a unitary $U_k$ with probability proportional to $|b_k|$ by independently sampling $r$ unitaries $W_{m_1},\dots, W_{m_r}$ according to the distribution given by $\{|c_m|\}_m$, and implementing their product $W_{m_1}\dots W_{m_r}$.

We construct the following decomposition for $e^{i\hat{H}t/r}$. 
Letting $x \coloneqq t/r$, we have
\begin{align}
e^{i\hat{H}t/r} = e^{i\hat{H}x} &= \sum_{n=0}^\infty \frac{1}{n!}(ix\hat{H})^n \nonumber \\
&= \sum_{n \text{ even}} \frac{1}{n!}(ix\hat{H})^n\left(\mathbbm{1} + i\frac{x}{n+1} \hat{H}\right) \nonumber \\
&= \sum_{n \text{ even}} \frac{1}{n!} \left(ix \sum_\ell p_\ell P_\ell\right)^n \left(\mathbbm{1} + i\frac{x}{n+1}\sum_{\ell'} p_{\ell'} P_{\ell'}\right) \nonumber \\
&= \sum_{n \text{ even}} \frac{1}{n!} (ix)^n \sum_{\ell_1, \dots, \ell_n} p_{\ell_1} \dots p_{\ell_n} P_{\ell_1}\dots P_{\ell_n} \sum_{\ell'}p_{\ell'} \left(\mathbbm{1} + i\frac{x}{n+1} P_{\ell'} \right) \nonumber\\
&= \sum_{n \text{ even}} \frac{1}{n!} (ix)^n \sqrt{1 + \left(\frac{x}{n+1}\right)^2} \sum_{\ell_1, \dots, \ell_n, \ell'} p_{\ell_1}\dots p_{\ell_n} p_{\ell'} \left(P_{\ell_1}\dots P_{\ell_n} V_{\ell'}^{(n)} \right), \label{lcu1}
\end{align}
where 
\begin{align} \label{eq:angles}
V_{\ell'}^{(n)} \coloneqq \frac{1}{\sqrt{1 + \left(\frac{x}{n+1}\right)^2}} \left(\mathbbm{1} + i\frac{x}{n+1} P_{\ell'}\right)= \exp(i\theta_n P_{\ell'}), \quad\text{with }\theta_n \coloneqq \arccos\left(\left[1 + \left(\frac{x}{n+1}\right)^2 \right]^{-1/2}\right).
\end{align}
Thus, in the notation of Eq.~\eqref{cmVm}, we have decomposed $e^{i\hat{H}t/r}$ into a linear combination of unitaries of the form $W_m \to P_{\ell_1}\dots P_{\ell_n} V_{\ell'}^{(n)}$, with coefficients $c_m \to \frac{1}{n!} (ix)^{n}\sqrt{1 + \left(\frac{x}{n+1}\right)^2}p_{\ell_1} \dots p_{\ell_n} p_{\ell'}$. Since the $P_{\ell}$'s are Pauli operators, the controlled version of each of these unitaries $V_m$ can be implemented as a sequence of Cliffords (in particular, controlled Pauli operators) along with one multi-qubit Pauli rotation, which can be synthesised using Cliffords and one controlled single-qubit rotation. It follows that each unitary $U_k$ in the decomposition for $e^{i\hat{H}t}$ can be implemented using $r$ controlled single-qubit rotations. For the total weight of the coefficients, we have
\begin{align*} \sum_m |c_m|  = \sum_{n \text{ even}} \frac{1}{n!} |x|^n \sqrt{1 + \left(\frac{x}{n+1}\right)^2}\sum_{\ell_1, \dots_{\ell_n}, \ell'} p_{\ell_1}\dots p_{\ell_n} p_{\ell'}= \sum_{n=0}^\infty \frac{1}{(2n)!} x^{2n} \sqrt{1 + \left(\frac{x}{2n+1}\right)^2}
\end{align*}
with $x = t/r$. By Proposition~\ref{fact:sum} below, $\sum_m |c_m| \leq \exp(x^2) = \exp(t^2/r^2)$, so from Eq.~\eqref{bkcm}, $\sum_k |b_k| \leq \exp(t^2/r)$. Finally, we can make $b_k > 0$ for all $k$ by moving the phase $(i\sgn(x))^n = (i\sgn(t))^n$ of the coefficients $c_m$ onto the unitaries $W_m$.
\end{proof}

The non-Clifford gate complexity $r$ can in fact be chosen to be arbitrarily small, at the cost of an exponentially large sample complexity. Specifically, note that in the proof, $\sum_m |c_m|$ can also be bounded as $\sum_m |c_m| \leq \exp(x) = \exp(t/r)$, which gives $\sum_k |b_k| \leq \exp(t)$. An exponentially large sample complexity in the limit of zero non-Clifford gate complexity is consistent with the fact that Clifford circuits can be efficiently simulated classically~\cite{gottesman1998heisenberg,Aaronson2004}. 

The proof immediately gives the following algorithm for sampling from the decomposition from Lemma~\ref{lem:simulation}.

\begin{algorithm}[H]  \caption{Efficiently sample from the LCU decomposition of $e^{i\hat{H}t}$ from Lemma~\ref{lem:simulation}
} \label{algsample}
\textbf{Input:} 
A Hamiltonian $\hat{H} = \sum_{\ell=1}^L p_\ell P_\ell$ specified as a convex combination of Pauli operators, a real number $t$, a positive integer~$r$.
\\
\textbf{Output:} 
Description of a random unitary $U$, such that $\E[U] \propto e^{i\hat{H}t}$ and $U$ can be implemented using $r$ controlled single-qubit Pauli rotations. 
\begin{algorithmic}[1]
\State $\textsc{vList} \leftarrow ().$
\State \textbf{Do} $r$ times:
\State \hspace{1em} Sample an even non-negative integer $n$ with probability \[ q_n \propto \tfrac{(t/r)^n}{n!}\sqrt{1 + \left(\tfrac{t/r}{n+1}\right)^2}. \]
\State \hspace{1em} Independently sample $n+1$ indices $\ell_0,\dots, \ell_n$ from $\{p_\ell\}_{\ell=1}^L$.
\State \hspace{1em} Append $e^{i\theta_n P_{\ell_0}}$ to \textsc{vList}, where \[ \theta_n \coloneqq \arccos\Big(\Big[1 + \Big(\tfrac{t/r}{n+1}\Big)^2 \Big]^{-1/2}\Big). \]
\State \hspace{1em} Append $P_{\ell_1},\dots, P_{\ell_n}$ to \textsc{vList}.
\State \hspace{1em} Append $(i\sgn(t))^n \mathbbm{1}$ to \textsc{vList}.
\State \textbf{Return} $U = \textsc{vList}[l] \dots \textsc{vList}[2]\textsc{vList}[1]$, where $l = \mathrm{length}(\textsc{vList})$.
\end{algorithmic} 
\end{algorithm}

As presented, Step 3 of Algorithm~\ref{algsample} samples from an infinite distribution, over all even positive integers. 
However, the probability of sampling an integer $n$ decreases super-exponentially with $n$, so a very precise approximation can be made by truncating to only the first few terms in the distribution, as analysed in Appendix~\ref{App:Truncating}.

\begin{prop} \label{fact:sum}
For all $x \in \mathbb{R}$, we have
\[ \sum_{n=0}^{\infty}\frac{1}{(2n)!} x^{2n} \sqrt{1 + \left(\frac{x}{2n+1}\right)^2} \leq e^{x^2}. \]
\end{prop}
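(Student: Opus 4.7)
The plan is to bound $\sqrt{1 + (x/(2n+1))^2}$ by a quadratic polynomial using the elementary inequality $\sqrt{1+y^2} \leq 1 + \tfrac{1}{2}y^2$, which is valid for all real $y$ since squaring both sides gives $1+y^2 \leq 1 + y^2 + \tfrac{1}{4}y^4$. Applying this with $y = x/(2n+1)$ and summing reduces the problem to showing
\[ \sum_{n\geq 0}\frac{x^{2n}}{(2n)!} + \tfrac{1}{2}\sum_{n\geq 0}\frac{x^{2n+2}}{(2n)!(2n+1)^2} \;\leq\; e^{x^2}. \]
Since both sides are even power series in $x$ with nonnegative coefficients, I would prove the inequality coefficient-by-coefficient in $x^{2m}$.

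Reading off the coefficient of $x^{2m}$ on the left (the second sum contributes only for $m\geq 1$, via $n = m-1$) and on the right, the desired comparison becomes
\[ \frac{1}{(2m)!} + \frac{1}{2(2m-2)!(2m-1)^2} \;\leq\; \frac{1}{m!} \qquad (m \geq 1), \]
with the $m=0$ case being the trivial equality $1\leq 1$. Clearing denominators using $(2m)!/(2m-2)! = 2m(2m-1)$ and simplifying, this is equivalent to
\[ \frac{(2m)!}{m!} \;=\; (m+1)(m+2)\cdots (2m) \;\geq\; \frac{3m-1}{2m-1}. \]
This is immediate: the right-hand side is at most $2$ for all $m \geq 1$, while the left-hand side is $\geq m+1 \geq 2$; and at $m=1$ both sides equal exactly $2$, so the inequality holds with equality there.

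The main subtlety -- and the only thing that might have caused the approach to fail -- is that the constant $1/2$ in $\sqrt{1+y^2}\leq 1+\tfrac{1}{2}y^2$ is sharp for this purpose: the comparison at $m=1$ is an exact equality $\tfrac{1}{2}+\tfrac{1}{2} = 1$, leaving no slack at the $x^2$ order. Consequently any looser linearisation of $\sqrt{1+y^2}$ would break the bound. Past $m=1$, however, the factorial growth of $(2m)!/m!$ overwhelms the rational correction, so the remaining verification is a one-line algebraic check rather than an analytic challenge.
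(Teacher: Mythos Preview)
Your proof is correct. The coefficient-wise comparison after the uniform bound $\sqrt{1+y^2}\le 1+\tfrac12 y^2$ works exactly as you say: the reduction to $(2m)!/m!\ge (3m-1)/(2m-1)$ is accurate, and the $m=1$ case is indeed tight.

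Your route is genuinely different from the paper's. The paper splits into two regimes: for $|x|\le 1$ it applies $\sqrt{1+a^2}\le 1+\tfrac12 a^2$ only to the first three terms $n=0,1,2$, bounds the resulting polynomial by $1+x^2+\tfrac12 x^4$, and handles $n\ge 3$ via the crude inequality $\frac{1}{(2n)!}\sqrt{1+(x/(2n+1))^2}\le \frac{1}{n!}$; for $|x|>1$ it instead uses $\sqrt{1+a^2}\le 1+|a|$ to collapse the sum to $e^{|x|}<e^{x^2}$. Your argument is cleaner: a single linearisation followed by a term-by-term power-series comparison avoids the case split and the ad hoc treatment of low-order terms. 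The trade-off is that your approach relies on the exact constant $\tfrac12$ (as you note, $m=1$ is tight), whereas the paper's case split is more forgiving---it could tolerate a slightly worse constant in the sqrt bound because it only applies it on $|x|\le 1$ and handles the large-$|x|$ regime by a different mechanism.
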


\begin{proof}
For $|x| \leq 1$, the sum of the first three terms in the series on the LHS can be bounded as
\begin{align*} 
\sum_{n=0}^2 \frac{1}{(2n)!} x^{2n} \sqrt{1 + \left(\frac{x}{2n+1}\right)^2} &\leq \sum_{n=0}^2 \frac{x^{2n}}{(2n)!}\left[1+ \frac{1}{2}\left(\frac{x}{2n+1}\right)^2 \right]\\
&= 1 + x^2 + \frac{5}{72}x^4 + \frac{1}{1200} x^6 \\
&\leq 1+ x^2 + \frac{1}{2}x^4 \\
&= \sum_{n=0}^2 \frac{1}{n!} x^{2n}\,,
\end{align*}
using $\sqrt{1 + a^2} \leq 1 + \frac{1}{2}a^2$, $x^6 \leq x^4$, and $5/72 + 1/1200 < 1/2$, while for all $n \geq 3$,
\[ \frac{1}{(2n)!} \sqrt{1 + \left(\frac{x}{2n+1}\right)^2} \leq \frac{1}{n!}. \]
Hence, we find
\begin{align*} \sum_{n=0}^\infty \frac{1}{(2n)!} x^{2n} \sqrt{1 + \left(\frac{x}{2n+1}\right)^2} \leq \sum_{n=0}^2 \frac{1}{n!} x^{2n} + \sum_{n=3}^\infty \frac{1}{n!} x^{2n}=  e^{x^2}
\end{align*}
for all $|x| \leq 1$. For $|x| > 1$, we have
\begin{align*}
\sum_{n=0}^{\infty}\frac{1}{(2n)!} x^{2n} \sqrt{1+ \left(\frac{x}{2n+1}\right)^2} &\leq \sum_{n=0}^\infty \frac{1}{(2n)!}x^{2n}\left(1 + \left|\frac{x}{2n+1}\right|\right)= \sum_{n=0}^\infty \frac{1}{n!} |x|^{n} = e^{|x|}< e^{x^2}\,,
\end{align*}
using $\sqrt{1+a^2} \leq 1+ |a|$ to obtain the first inequality.
\end{proof}

\section{Optimising the runtime vector for arbitrary Fourier series}
\label{app:optimiser}

In this appendix, we show how to optimise the complexity of our algorithm with respect to the runtime vector $\vec{r}$. 
Recall that in Algorithm \ref{alg1}, we use the LCU decomposition constructed in the proof of Lemma~\ref{lem:simulation} as our decomposition for each $e^{iHt_j}$, leading to Eq.~\eqref{eq:sampling}. Importantly, for each $j \in S_1$, we are free to choose any positive integer $r_j$ when applying Lemma~\ref{lem:simulation}. Then, the total weight of the coefficients of the decomposition is
\[ \mu_j \coloneqq \sum_{k \in S_2}b_k^{(j)} \leq \exp(t_j^2/r_j), \]
and the complexity of the controlled version of each unitary in the decomposition of $e^{iHt_j}$ is that of $r_j$ controlled single-qubit Pauli rotations. Specifically, from Eqs.~\eqref{eq:A-sum}, \eqref{Csample}, and~\eqref{Cgate}, we have the expressions
\begin{align}
    &\mathcal{A}(\vec{r}) = \sum_{j \in S_1} |F_j| \mu_j, \label{A_app}\\
    &\mathcal{C}_{\mathrm{sample}}(\vec{r}) = \left\lceil \left(\frac{2\mathcal{A}(\vec{r})}{\eta/2-\varepsilon}\right)^2\ln\frac{1}{\vartheta}\right\rceil \label{Csample_app} \\
    &\mathcal{C}_{\mathrm{gate}}(\vec{r}) = \frac{1}{\mathcal{A}(\vec{r})}\sum_{j \in S_1} |F_j| \mu_j r_j \label{Cgate_app}.
\end{align}
Note that for each $j$, the value of $\mu_j$ implicitly depends on $r_j$. Also, observe that we can in fact exclude all indices $j$ such that $t_j = 0$ from the sums in Eqs.~\eqref{A_app} and Eq.~\eqref{Cgate_app}. This is due to the  fact that $\tr[\rho e^{0}] =1$ does not have to be estimated, so we do not have to sample these indices when implementing Algorithm~\ref{alg1}. Therefore, throughout this section, all sums will implicitly be over $S_1 \setminus \{j: t_j = 0\}$.

When finding the optimal runtime vector $\vec{r} = (r_j)_{j\in S_1} \in \mathbb{N}^{S_1}$, which determines the sample and gate complexities, we consider two different goals:
\begin{enumerate}
\item finding $\vec{r}$ that minimises the expected total gate complexity $\mathcal{C}_{\mathrm{total}}(\vec{r})=2\mathcal{C}_{\mathrm{sample}}(\vec{r}) \cdot \mathcal{C}_{\mathrm{gate}}(\vec{r})$ (Appendix \ref{app:total}),
\item minimising the sample complexity given an upper bound on the expected gate complexity per sample, i.e., given a number $g \in \mathbb{R}_{>0}$, finding $\vec{r}$ that minimises $\mathcal{C}_{\text{sample}}(\vec{r})$ subject to the constraint $\mathcal{C}_{\mathrm{gate}}(\vec{r}) \leq g$ (Appendix \ref{app:sample}).
\end{enumerate}
Our Algorithm~\ref{alg1} uses the specific Fourier series of Lemma~\ref{lem:fourier}, whose coefficients are specified in Appendix~\ref{app:fourier}, but we note that the results presented in this section will apply to any arbitrary set $\{F_j\}_{j\in S_1}$ of Fourier coefficients. 

We solve both problems approximately, where the sources of approximation are as follows.
\begin{itemize}
\item We use approximate expressions for the complexities, replacing $\mu_j$ with its analytic upper bound
\begin{equation} \label{eq:MukBound} u_j \coloneqq \exp(t_j^2/r_j) \end{equation}
in the formulae for  $\mathcal C_{\mathrm{sample}}(\vec{r})$, $\mathcal C_{\mathrm{gate}}(\vec{r})$, and $\mathcal C_{\mathrm{total}}(\vec{r})$.
\item  We ignore the fact that $\mathcal C_{\mathrm{sample}}(\vec{r})$ must be an integer, i.e., we remove the ceiling in Eq.~\eqref{Csample_app}.
\item The $r_j$'s that minimise the approximation expressions are real numbers in general. On the other hand, in the context of our algorithm, the $r_j$'s are required to be integers. We simply round our results for the approximate $r_j$'s to the nearest integer.
\end{itemize}
After determining the $\vec{r}$ that optimise these approximate expressions, we then round each entry and substitute the resulting vector into the \emph{exact} expression, Eqs.~\eqref{Csample_app} and~\eqref{Cgate_app}, in our numerical calculations. This gives complexities that are valid and exact, but only near-optimal in general. However, the upper bound $u_j$ for $\mu_j$ becomes tight for small $t_j/r_j$, and the $r_j$'s are typically large enough that the rounding error is negligible, so we expect these complexities to be close to the optimal solutions.

Note that since $\mathcal{C}_{\mathrm{gate}}(\vec{r})$ does not include the cost of preparing the ansatz state $\rho$, in this section we are considering only the case where the state preparation complexity is small relative to the cost of implementing the random unitaries from Lemma~\ref{lem:simulation}. However, the same techniques can be straightforwardly adapted to incorporate state preparation costs into the optimisation.


\subsection{Minimising the total complexity}\label{app:total}


Making the approximations discussed above, we have
\begin{align*} 
\mathcal{C}_{\mathrm{total}}(\vec{r}) \propto \mathcal{A}(\vec{r}) \sum_{k}  |F_k| \mu_k r_k \leq \Bigg( \sum_{j} |F_j| u_j \Bigg)\left(\sum_k |F_k| u_k r_k \right) \eqqcolon c(\vec{r}),
\end{align*}
where we have used that $\mu_j \leq u_j$ for all $j$ (cf.~Eq.~\eqref{eq:MukBound}).

The following proposition reduces the minimisation of $c(\vec{r})$, and hence $\mathcal{C}_{\mathrm{total}}(\vec{r})$, to a simple one-dimensional optimisation problem.


\begin{prop} \label{claim:optimalrs}
For any $F_j \in \mathbb{C} \setminus \{0\}$ and $t_j \in \mathbb{R} \setminus \{0\}$ for $j \in S_1$, define the function $c: \mathbb{R}_{>0}^{S_1} \to \mathbb{R}_{>0}$ by
\[ c(\vec{r}) = \Bigg(\sum_j |F_j| u_j\Bigg)\left(\sum_k |F_k|u_k r_k \right),\]
where for each $j \in S_1$,
\begin{equation} \label{u_j} u_j = u_j(r_j) \coloneqq \exp(t_j^2/r_j). \end{equation}
Then, $\vec{r}\,^* \coloneqq \argmin\limits_{\vec{r} \in \mathbb{R}_{>0}^{S_1}} c(\vec{r})$ satisfies
\begin{equation} \label{r_jstar}
r_j^* = \frac{t_j^2}{2}\left(1 + \sqrt{1+ \frac{4}{t_j^2}S(\vec{r}\,^*)} \right) \quad \forall\, j \in S_1 \setminus \{0\},
\end{equation}
where $S: \mathbb{R}_{>0}^{S_1} \to \mathbb{R}_{>0}$ is the function
\begin{equation} \label{S} S(\vec{r}) \coloneqq \frac{\sum_j |F_j| u_j r_j}{\sum_k |F_k| u_k}. \end{equation}
\end{prop}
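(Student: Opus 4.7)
The problem is an unconstrained minimisation of a smooth positive function on $\mathbb{R}_{>0}^{S_1}$, so the plan is simply to compute the gradient, set it to zero, and extract the resulting characterisation of the critical point.

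I would start by observing that, since $u_j = \exp(t_j^2/r_j)$ depends only on $r_j$,
\begin{equation*}
\frac{\partial u_j}{\partial r_j} = -\frac{t_j^2}{r_j^2}\,u_j,
\end{equation*}
and then compute the partial derivatives of the two factors $A(\vec r) := \sum_j |F_j|u_j$ and $B(\vec r) := \sum_k |F_k|u_k r_k$ making up $c = AB$. Only the term with index $j$ contributes in each, giving
\begin{equation*}
\frac{\partial A}{\partial r_j} = -\frac{|F_j|t_j^2 u_j}{r_j^2}, \qquad \frac{\partial B}{\partial r_j} = |F_j|u_j\!\left(1 - \frac{t_j^2}{r_j}\right).
\end{equation*}
Applying the product rule and dividing out the strictly positive factor $|F_j|u_j/r_j^2$, the stationarity condition $\partial c/\partial r_j = 0$ becomes the quadratic
\begin{equation*}
A\,r_j^2 - A t_j^2\, r_j - t_j^2 B = 0,
\end{equation*}
or equivalently $r_j^2 - t_j^2 r_j - t_j^2 S(\vec r) = 0$ with $S(\vec r) = B(\vec r)/A(\vec r)$ as defined in Eq.~\eqref{S}.

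Solving this quadratic for $r_j$ by the standard formula and keeping the positive root (the negative root is ruled out because $S(\vec r) > 0$ forces $\sqrt{t_j^2 + 4S} > |t_j|$, so the other root is negative, while we need $r_j > 0$) yields exactly Eq.~\eqref{r_jstar}. Note that $t_j \neq 0$ by hypothesis for all $j \in S_1$, so the expression is well-defined. The crucial observation is that the right-hand side depends on the remaining entries of $\vec r$ only through the single scalar $S(\vec r^*)$; this is what reduces the high-dimensional optimisation to a one-dimensional fixed-point problem in $S$, and is the conceptual point the proposition is making.

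The only step that requires a bit of care is confirming that this critical point is indeed the global minimum claimed in the statement, rather than a saddle or maximum. For this I would argue qualitatively: fixing all other $r_k$'s, the partial function $r_j \mapsto c(\vec r)$ diverges to $+\infty$ both as $r_j \to 0^+$ (through $u_j \to \infty$) and as $r_j \to \infty$ (through the $u_k r_k$ term in $B$), so it attains a minimum in the interior, and the quadratic above has at most one positive root, which must therefore be that minimum. Since $c$ is coercive in this sense in every coordinate and smooth on $\mathbb{R}_{>0}^{S_1}$, any global minimiser must be a critical point satisfying Eq.~\eqref{r_jstar} in every coordinate simultaneously. I do not anticipate any serious obstacle; the calculation is routine, and the main content is really the observation about $S(\vec r^*)$ that collapses the dimensionality.
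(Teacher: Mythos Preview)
Your proposal is correct and follows essentially the same approach as the paper: compute $\partial u_j/\partial r_j$, apply the product rule to $c = AB$, divide through to obtain the quadratic $r_j^2 - t_j^2 r_j - t_j^2 S(\vec r) = 0$, and take the positive root. The paper's proof is in fact terser on the global-minimum point (it simply asserts ``it is easily verified''), so your coercivity argument in each coordinate is a welcome addition rather than a deviation.
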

\begin{proof}
Using $\partial u_j/\partial r_l = -\delta_{jl} t_l^2 u_l/r_l^2$,
\begin{align*}
\frac{\partial c}{\partial r_l} = |F_l|\left(-\frac{t_l^2}{r_l^2} u_l\right)\left( \sum_k |F_k| \mu_k r_k\right) + \Bigg(\sum_j |F_j| u_j\Bigg)|F_l|\left(1- \frac{t_l^2}{r_l}\right)u_l 
\end{align*}
which is $0$ if and only if
\[ -\frac{t_l^2}{r_l^2} S(\vec{r}) + 1 - \frac{t_l^2}{r_l} = 0. \]
Solving this for $r_l$ and taking the positive root leads to Eq.~\eqref{r_jstar}. It is easily verified that $c(\cdot)$ attains its global minimum here. 
\end{proof} 

Note that Proposition~\ref{claim:optimalrs} reduces the \textit{a priori} multi-dimensional problem of minimising the multivariate function ${c}(\cdot)$ to a simple one-dimensional problem. To see this, define the function $\vec{R}: \mathbb{R}_{>0} \to \mathbb{R}_{>0}^{S_1}$ by $R_j(s) \coloneqq (t_j^2/2)(1 + \sqrt{1 + 4s/t_j^2})$ for all $j$, so that $\vec{r}\,^* = \vec{R}(S(\vec{r}\,^*))$ by Proposition~\ref{claim:optimalrs}. Applying the function $S(\cdot)$ to both sides, we see that $s^* \coloneqq S(\vec{r}\,^*)$ satisfies the single-variable equation
\begin{equation} \label{sstar}
s^* = S(\vec{R}(s^*)),
\end{equation}
which can be solved numerically using standard root-finding methods. In particular, by using the fact that the function $(x/2)(1+ \sqrt{1 + 4s/x})$ increases with $x$ for any $s \in \mathbb{R}$, it can be shown that $0 < s^* \leq 2t_{\max}^2$, where $t_{\max} \coloneqq \max_{j\in S_1} |t_j|$. Hence, using the bisection method, for instance, $s^*$ can be found to within additive error $\epsilon$ in $\mathcal{O}(\log(t_{\max}/\epsilon))$ iterations.
Upon finding $s^*$, one can then easily calculate every $r_j^*$ as $r_j^* = R_j(s^*)$.


\subsection{Minimising the sample complexity given constraints on gate complexity}\label{app:sample}

Above, we considered minimising $\mathcal{C}_{\mathrm{total}}(\vec{r}) = \mathcal{C}_{\mathrm{sample}}(\vec{r}) \cdot \mathcal{C}_{\mathrm{gate}}(\vec{r})$, with no constraints on $\mathcal{C}_{\mathrm{sample}}$ or $\mathcal{C}_{\mathrm{gate}}$. In some scenarios, it may be useful to consider the following more constrained problem: Impose an upper bound on $\mathcal{C}_{\mathrm{gate}}(\vec{r})$, the expected gate complexity per Hadamard test. 
What choice of runtime vector $\vec{r}$ gives the minimum sample complexity $\mathcal{C}_{\mathrm{sample}}(\vec{r})$? This problem may be well-motivated in e.g., the context of ``early fault-tolerance,'' where it might be advantageous to run a larger number of shorter circuits, even if this increases the total complexity~\cite{lin21}.

We approximate
\[ \mathcal{C}_{\mathrm{sample}}(\vec{r}) \approx \frac{4\ln(1/\vartheta)}{(\eta/2 - \varepsilon)^2}\Bigg(\sum_j |F_j| u_j \Bigg)^2, \qquad \mathcal{C}_{\mathrm{gate}}(\vec{r}) \approx \frac{\sum_j |F_j| u_j r_j}{\sum_k |F_k| u_k}= S(\vec{r})\,,\]
where $u_j$ is defined in Eq.~\eqref{u_j} and $S(\cdot)$ in Eq.~\eqref{S}.
It is clear that allowing for larger $\mathcal{C}_{\mathrm{gate}}$ decreases the minimum $\mathcal{C}_{\mathrm{sample}}$ required, so we replace our upper limit on $\mathcal{C}_{\text{gate}}$ (an inequality constraint) by an equality constraint. 

\begin{prop} \label{claim:rsG}
For any $F_j \in \mathbb{C} \setminus \{0\}$ and $t_j \in \mathbb{R}\setminus \{0\}$ for $j \in S_1 \setminus \{0\}$, let $u_j$ and $S(\cdot)$ be defined as in Eqs.~\eqref{u_j} and~\eqref{S}. For any $g > 0$, define the function $\vec{R}^{(g)}$ by
\[ R^{(g)}_j(\lambda) \coloneqq \frac{t_j^2}{2}\left[1 + \sqrt{1 + \frac{4}{t_j^2}\left(\frac{1}{\lambda} - g \right)}\right].  \]
Then, the minimum of $f(\vec{r}) \coloneqq \sum_j |F_j| u_j$ subject to $S(\vec{r}) = g$ and $\vec{r} > 0$ is attained by 
\begin{equation} \label{rstarg} \vec{r}\,^{(g)*} \coloneqq \vec{R}^{(g)}(\lambda^*), \end{equation}
where $\lambda^*$ is the solution to
\begin{equation} \label{lambdastar}
S(\vec{R}^{(g)}(\lambda^*)) = g\,.
\end{equation}
\end{prop}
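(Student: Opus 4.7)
The plan is to solve this as a standard Lagrange multiplier problem. First I would rewrite the constraint $S(\vec{r}) = g$ in the cleaner polynomial form $h(\vec{r}) \coloneqq \sum_{j} |F_j| u_j (r_j - g) = 0$, so that the task becomes: minimise $f(\vec{r}) = \sum_{j} |F_j| u_j$ over $\{\vec{r} \in \mathbb{R}_{>0}^{S_1\setminus\{0\}} : h(\vec{r}) = 0\}$. Each $u_j = \exp(t_j^2/r_j)$ is strictly convex in $r_j$ and diverges as $r_j \to 0^+$, so $f$ is a coercive, separable sum of strictly convex functions and therefore attains a minimum on the (closed, non-empty for feasible $g$) constraint surface. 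The convexity of $f$ will also be useful later to promote the Lagrange critical point to a global minimiser.

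Next, I would apply the Lagrange condition $\nabla f + \lambda \nabla h = 0$. Using $\partial_{r_l} u_j = -\delta_{jl}(t_l^2/r_l^2)u_l$, the gradients work out to
\begin{equation*}
\partial_{r_l} f = -|F_l|\frac{t_l^2}{r_l^2}\,u_l, \qquad \partial_{r_l} h = |F_l|\,u_l \left[1 - \frac{t_l^2(r_l - g)}{r_l^2}\right].
\end{equation*}
Dividing the stationarity condition by the strictly positive quantity $|F_l| u_l$ and clearing denominators, the system fortuitously \emph{decouples} across indices $l$ into a single quadratic
\begin{equation*}
r_l^2 - t_l^2 r_l - t_l^2\left(\frac{1}{\lambda} - g\right) = 0 .
\end{equation*}
Of the two roots, only the positive-sign branch of the quadratic formula lies in $\mathbb{R}_{>0}$, and it is precisely $R_l^{(g)}(\lambda)$ as displayed in the claim. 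Thus every Lagrange stationary point is of the form $\vec{r} = \vec{R}^{(g)}(\lambda)$ for some scalar $\lambda$, reducing the optimisation to a one-parameter family indexed by the multiplier.

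The final step is to fix $\lambda = \lambda^*$ by imposing the original constraint $S(\vec{r}) = g$, which directly yields Eq.~\eqref{lambdastar}. A short monotonicity argument in $1/\lambda$ shows that $\lambda \mapsto S(\vec{R}^{(g)}(\lambda))$ is continuous and traverses an interval containing $g$ as $1/\lambda$ ranges above $g$, so a root $\lambda^*$ exists; combined with convexity of $f$ and uniqueness of the admissible quadratic root, this gives $\vec{r}^{(g)*} = \vec{R}^{(g)}(\lambda^*)$ as the global minimiser. The main technical nuisance is the algebra taking the Lagrange equations to the decoupled quadratic above and selecting the correct root in $\mathbb{R}_{>0}$; ruling out saddle behaviour is then straightforward from coercivity of $f$ together with the uniqueness of the positive stationary point.
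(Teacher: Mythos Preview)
Your proposal is correct and follows essentially the same route as the paper: rewrite the constraint $S(\vec r)=g$ as $\sum_j |F_j|u_j(r_j-g)=0$, apply Lagrange multipliers, and observe that the stationarity equations decouple into the same quadratic in each $r_l$, whose positive branch is $R_l^{(g)}(\lambda)$, with $\lambda^*$ fixed by re-imposing the constraint. You add justification (existence of a minimiser, existence of $\lambda^*$, ruling out saddles) that the paper omits; one minor caveat is that $f$ is not coercive on $\mathbb{R}_{>0}^{S_1\setminus\{0\}}$ in the usual sense (since $u_j\to 1$ as $r_j\to\infty$), but the constraint set itself is bounded, which is what actually guarantees attainment.
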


\begin{proof}
The constraint $S(\vec{r}) = g$ is equivalent to 
\[ \sum_j|F_j| u_j r_j = g\sum_j |F_j| u_j, \]
so using the method of Lagrange multipliers, we solve
\[ \vec{\nabla}\mathcal{L}(\vec{r}\,^{(g)*}, \lambda^*) = \vec{0} \]
for $\vec{r}\,^{(g)*}$ and $\lambda^*$, where 
\[ \mathcal{L}(\vec{r}, \lambda) \coloneqq \sum_j |F_j| u_j - \lambda\Bigg(g\sum_j |F_j| u_j - \sum_j |F_j| u_j r_j\Bigg). \]
This gives
\[ {r}_j^{(g)*} = \frac{t_j^2}{2}\left[1 + \sqrt{1 + \frac{4}{t_j^2}\left(\frac{1}{\lambda^*} - g\right)} \right]\]
for all $j$ (under the requirement that $\vec{r}\,^{(g)*} > 0$) and 
\[ S(\vec{r}\,^{(g)*}) = g, \]
which when rewritten in terms of the function $\vec{R}^{(g)}$ gives the result. 
\end{proof}
Thus, we can find all of the $r_j^{(g)*}$'s by solving a one-dimensional problem---namely, by solving Eq.~\eqref{lambdastar} for $\lambda^*$---then simply evaluating $R_j^{(g)}(\lambda^*)$ for each $j$.
An analogous strategy can be used to find the $\vec{r}$ that minimises the expected gate complexity given a fixed upper bound on the sample complexity.


\section{Probabilistic analysis} \label{app:prob}

In this section, we fill in the details for the proof that Algorithm~\ref{alg1} outputs an incorrect answer with probability at most $\vartheta$, and analyse the effect of truncating the infinite distribution in Algorithm~\ref{algsample} to a small number of terms.

\subsection{Failure probability of Algorithm~\ref{alg1}}


Let all quantities be defined as in Algorithm~\ref{alg1}. 
Recall that by Eq.~\eqref{acdf}, $\widetilde{C}(x) < \eta - \eps$ would imply that $C(x-\delta) < \eta$, while $\widetilde{C}(x) > \eps$ would imply that $C(x + \delta) > 0$, so we can solve Problem~1 by deciding between $\widetilde{C}(x) < \eta - \eps$ and $\widetilde{C}(x) > \eps$, outputting either if both are true. To estimate $\widetilde{C}(x)$, Steps 5-7 of Algorithm~\ref{alg1} sample from the decomposition
\[ \widetilde{C}(x) = \sum_{(j,k) \in S_1 \times S_2} a_{jk}\tr[\rho U_k^{(j)}] \]
from Eq.~\eqref{eq:sampling} as follows. Let $J,K$ denote random variables with $\mathbb{P}[J = j, K = k] = |a_{jk}|/\mathcal{A}(\vec{r})$, where $\mathcal{A}(\vec{r}) = \sum_{(j,k) \in S_1 \times S_2}|a_{jk}|$ as in Eq.~\eqref{eq:A-sum} (the $a_{jk}$ depend implicitly on the runtime vector $\vec{r}$). For each unitary $U_{k}^{(j)}$, let $X_{jk}$ and $Y_{jk}$ denote the random variables associated with the outcomes of the Hadamard test on $\rho$ and $U_k^{(j)}$, such that $\E[X_j] = \mathrm{Re}(\tr[\rho U_k^{(j)}])$ and $\E[Y_j] = \mathrm{Im}(\tr[\rho U_k^{(j)}])$. Then, the random variable
\[ Z \coloneqq \mathcal{A}(\vec{r})e^{i\,\mathrm{arg}(a_{JK})}(X_{JK} + iY_{JK}) \]
is a unbiased estimator for $\widetilde{C}(x)$, i.e., $\E[Z] = \widetilde{C}(x)$. To compare to Step~7 of Algorithm~\ref{alg1}, note from Eq.~\eqref{eq:sampling} that $\mathrm{arg}(a_{jk}) = \mathrm{arg}(F_j) + jx$, since $b_k^{(j)} > 0$ by Lemma~\ref{lem:simulation}. 

Step~8 of Algorithm~\ref{alg1} computes the average $\overline{Z}$ of $\mathcal{C}_{\mathrm{sample}}(\vec{r})$ independent samples of $Z$, and compares its real part to $\eta/2$, guessing that $\widetilde{C}(x) < \eta - \eps$ (so $C(x-\delta) < \eta$) if $\overline{Z} < \eta/2$ and that $\widetilde{C}(x) > \eps$ (so $C(x+\delta) > 0$) if $\overline{Z} \geq \eta/2$. (Here, we write e.g., $\overline{Z} < a$ as shorthand for $\mathrm{Re}(\overline{Z}) <a$.) Thus, the probability of error is bounded as
\begin{align*} p_{\mathrm{error}} &\leq \Pr\left[\overline{Z} < \eta/2 \, \Big| \, \widetilde{C}(x) \geq \eta - \eps\right]\Pr\left[\widetilde{C}(x) \geq \eta-\eps\right] + \Pr\left[\overline{Z} \geq \eta/2 \, \Big| \, \widetilde{C}(x) \leq \eps\right]\Pr\left[\widetilde{C}(x) \leq \eps\right] \\
& \leq \Pr\left[\overline{Z} < \eta/2 \, \Big| \, \widetilde{C}(x) = \eta - \eps\right]\Pr\left[\widetilde{C}(x) \geq \eta-\eps\right] + \Pr\left[\overline{Z} \geq \eta/2 \, \Big| \, \widetilde{C}(x) = \eps\right]\Pr\left[\widetilde{C}(x) \leq \eps\right].
\end{align*}
Since $\E[\overline{Z}] = \widetilde{C}(x)$, we see that conditioned on $\widetilde{C}(x) = \eta-\varepsilon$, 
\begin{equation} \label{tail} \Pr\left[\overline{Z} < \eta/2\right] = \Pr\bigg[ \E[\overline{Z}] - \overline{Z} > \eta/2 - \eps\bigg]. \end{equation}
Hence, using the fact that $|Z| = \sqrt{2}\mathcal{A}(\vec{r})$, so $\mathrm{Re}(Z)$ is contained in the interval $[-\sqrt{2}\mathcal{A}(\vec{r}),\sqrt{2}\mathcal{A}(\vec{r})]$, Hoeffding's inequality gives
\begin{align*}
    \Pr\left[\overline{Z} < \eta/2 \, \Big| \, \widetilde{C}(x) = \eta - \eps\right] &\leq \exp\left[-\frac{2(\eta/2-\eps)^2}{(2\sqrt{2}\mathcal{A}(\vec{r}))^2} \mathcal{C}_{\mathrm{sample}}(\vec{r}) \right],
\end{align*}
and the right-hand side is at most $\vartheta$ when $\mathcal{C}_{\mathrm{sample}}(\vec{r})$ is chosen as in Eq.~\eqref{Csample} of Algorithm~\ref{alg1}. Likewise, $\Pr\left[\overline{Z} \geq \eta/2 \, \Big| \, \widetilde{C}(x) = \eps\right] \leq \eps$, so
\[ p_{\mathrm{error}} \leq \vartheta \left(\Pr\left[\widetilde{C}(x) > \eta-\eps\right] + \Pr\left[\widetilde{C}(x) \leq \eps\right] \right) \leq \vartheta, \]
as claimed.

\subsection{Truncating the infinite distribution}
\label{App:Truncating}

The linear decomposition of $e^{i\hat{H}t}$ into unitaries constructed in the proof of Lemma~\ref{lem:simulation} in Appendix~\ref{app:lcu} contains an infinitely many unitaries, due to the fact that the index $n$ in the sum in Eq.~\eqref{lcu1} ranges over all even, non-negative integers. In practice, instead of sampling from a distribution over infinitely many integers in Step~3 of Algorithm~\ref{algsample}, one could truncate the sum in Eq.~\eqref{lcu1} at some order $M$, and modify Step~3 accordingly. Since the coefficients in Eq.~\eqref{lcu1} decay very rapidly with $n$, the effect of this truncation is negligible for modest values of $M$.

To make this rigorous, we consider the random variable $Z'$ that results from sampling from the truncated distribution, and see how this changes the Hoeffding's inequality analysis in the previous subsection. The total weight of the truncated LCU will be less than $\mathcal{A}(\vec{r})$, so $\mathrm{Re}(Z')$ is bounded in the interval $[-\sqrt{2}\mathcal{A}(\vec{r}),\sqrt{2}\mathcal{A}(\vec{r})]$. Unlike $Z$, however, $Z'$ is not in general an unbiased estimator for $\widetilde{C}(x)$; we will have $\E[Z] - \E[Z'] = \widetilde{C}(x) - \E[Z'] = B$ for some bias $B$. Consequently, defining $\overline{Z'}$ to be the average of $\mathcal{C}_{\mathrm{sample}}(\vec{r})$ independent samples of $Z'$, the analogue of Eq.~\eqref{tail} would read
\[ \Pr[\overline{Z'} < \eta/2] \leq \Pr \bigg[\E[\overline{Z'}] - \overline{Z'} > \eta/2 - \varepsilon - |B| \bigg] \]
(using $\overline{Z'}$ as shorthand for $\mathrm{Re}(\overline{Z'})$ where it is clear from context), leading to
\begin{align*}
    \Pr\left[\overline{Z'} < \eta/2 \, \Big| \, \widetilde{C}(x) = \eta - \eps\right] &\leq \exp\left[-\frac{2(\eta/2-\eps - |B|)^2}{(2\sqrt{2}\mathcal{A}(\vec{r}))^2} \mathcal{C}_{\mathrm{sample}}(\vec{r}) \right],
\end{align*}
by Hoeffding's inequality, and similarly for $\Pr\left[\overline{Z'} \geq \eta/2 \, \Big| \, \widetilde{C}(x) = \eps\right] \leq \eps$. Therefore, it suffices to replace the $(\eta/2 - \eps)^{-2}$ factor in the definition of $\mathcal{C}_{\mathrm{sample}}(\vec{r})$ in Eq.~\eqref{Csample} with $(\eta/2 - \eps - |B|)^{-2}$.

Hence, it remains to bound the bias $B$. We show in Theorem~\ref{thm:truncate} below that $|B|$ decreases superexponentially with the truncation order $M$. For this, we introduce some extra notation for convenience. From the proof of Lemma~\ref{lem:simulation}, we have that for each $j \in S_1$,
\begin{equation*}
e^{i\hat{H}t_j/r_j} = \sum_{\substack{n=0\\n \text{ even}}}^\infty \alpha_n^{(j)} A_n^{(j)}
\end{equation*}
with
\begin{align*}
    \alpha_n^{(j)} \coloneqq \frac{1}{n!} \left(\frac{t_j}{r_j}\right)^n\sqrt{1 + \left(\frac{t_j/r_j}{n+1}\right)^2}, \qquad A_n^{(j)} = (i\hat{H})^n \frac{1}{\sqrt{1 + \left(\frac{t_j/r_j}{n+1}\right)^2}}\left(\mathbbm{1} + i\frac{t_j/r_j}{n+1}\hat{H}\right).
\end{align*}
Since $\hat{H} = \sum_\ell p_\ell P_\ell$ is assumed to be a convex combination of Pauli operators, each $A_n^{(j)}$ is a convex combination of unitaries. Note also that 
in this notation, the total weight $\mu_j$ of the coefficients in the LCU decomposition for $e^{i\hat{H}t_j} = (e^{i\hat{H}t_j})^{r_j}$ is given by
\begin{equation} \label{mu_j2} \mu_j = \Bigg(\sum_{\substack{n = 0\\n\text{ even}}}^\infty \alpha_n^{(j)} \Bigg)^{r_j}\end{equation}
for each $j$. Then, we have 
\[ \E[Z] = \sum_{j \in S_1} F_j \tr\Bigg[ \rho \Bigg(\sum_{\substack{n=0\\\text{$n$ even}}}^\infty \alpha_n^{(j)}A_n^{(j)}\Bigg)^{r_j}\Bigg] = \widetilde{C}(x), \qquad \E[Z'] = \sum_{j \in S_1} F_j \tr\Bigg[ \rho \Bigg(\sum_{\substack{n=0\\\text{$n$ even}}}^M \alpha_n^{(j)}A_n^{(j)}\Bigg)^{r_j}\Bigg]. \]

\begin{theorem} \label{thm:truncate} For any $F_j \in \mathbb{C}$, $t_j \in \mathbb{R}$, and $\vec{r} \in \mathbb{N}^{S_1}$ such that $r_j \geq |t_j|$ for all $j$,\footnote{Note that for $|t_j| \geq 1$, this is satisfied by both the heuristic choice in Eq.~\eqref{eq:constant-weight} as well as the near-optimal solutions given by Eqs.~\eqref{r_jstar} and~\eqref{rstarg}.} let $Z$ and $Z'$ be defined as above, and let $\mathcal{A}(\vec{r})$ and $\mathcal{C}_{\mathrm{gate}}(\vec{r})$ be defined as in Eqs.~\eqref{A_app} and~\eqref{Cgate_app}. Then, for any $\gamma > 0$, we have
$|B| = |\E[Z] - \E[Z']| \leq \gamma$ if $M$ is any integer satisfying
\[ M \geq \frac{\ln(1/\gamma')}{W(\ln(1/\gamma')/e)}, \quad \text{with } \gamma' \coloneqq \frac{2\gamma}{\mathcal{A}(\vec{r})\mathcal{C}_{\mathrm{gate}}(\vec{r})}, \]
where $W(\cdot)$ denotes the principal branch of the Lambert-W function.
\end{theorem}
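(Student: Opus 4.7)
The plan is to bound $|B|$ term by term in the decomposition $\E[Z] - \E[Z'] = \sum_j F_j e^{ijx}\tr[\rho(S_j^{r_j} - (S_j')^{r_j})]$, where $S_j \coloneqq \sum_{n\,\text{even}} \alpha_n^{(j)} A_n^{(j)} = e^{i\hat{H}t_j/r_j}$ and $S_j' \coloneqq \sum_{n\,\text{even}}^{M} \alpha_n^{(j)} A_n^{(j)}$ is its truncation. Since $|\tr[\rho X]| \leq \|X\|$ for any state $\rho$, it suffices to control $\sum_j |F_j|\,\|S_j^{r_j} - (S_j')^{r_j}\|$ in operator norm.

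For each factor I would expand both powers as LCUs and observe that the monomials $A_{n_1}^{(j)}\cdots A_{n_{r_j}}^{(j)}$ appearing in the difference are exactly those with at least one $n_k > M$. Since each $A_n^{(j)}$ is itself a convex combination of unitaries and therefore has operator norm at most $1$, the triangle inequality yields
\[ \|S_j^{r_j} - (S_j')^{r_j}\| \leq \Bigg(\sum_{n\,\text{even}} \alpha_n^{(j)}\Bigg)^{r_j} - \Bigg(\sum_{n\,\text{even}}^M \alpha_n^{(j)}\Bigg)^{r_j}. \]
Writing $T_j \coloneqq \sum_{n\,\text{even},\,n>M} \alpha_n^{(j)}$ and applying the elementary inequality $a^r - b^r \leq r a^{r-1}(a-b)$ for $a \geq b \geq 0$, the right-hand side is at most $r_j \mu_j^{(r_j-1)/r_j} T_j$. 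Since $\alpha_0^{(j)} = \sqrt{1+(t_j/r_j)^2} \geq 1$ implies $\mu_j \geq 1$, the exponent can be loosened to $\mu_j^{(r_j-1)/r_j} \leq \mu_j$, giving $\|S_j^{r_j} - (S_j')^{r_j}\| \leq r_j \mu_j T_j$ and hence $|B| \leq \sum_j |F_j|\,r_j \mu_j T_j$.

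Two ingredients now finish the argument. First, by \eqref{A_app} and \eqref{Cgate_app}, $\sum_j |F_j| \mu_j r_j = \mathcal{A}(\vec{r})\,\mathcal{C}_{\mathrm{gate}}(\vec{r})$, so any uniform bound $T_j \leq T$ gives $|B| \leq T\cdot \mathcal{A}(\vec{r})\,\mathcal{C}_{\mathrm{gate}}(\vec{r})$, and the target $|B| \leq \gamma$ reduces to $T \leq \gamma'/2$. Second, the hypothesis $r_j \geq |t_j|$ makes $|t_j|/r_j \leq 1$, so the crude estimate $\alpha_n^{(j)} \leq \sqrt{2}\,(|t_j|/r_j)^n/n! \leq \sqrt{2}/n!$ together with $\sum_{n>M} 1/n! \leq 2/(M+1)!$ yields $T_j = O(1)/(M+1)!$ uniformly in $j$. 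Invoking Stirling in the loose form $1/M! \leq (e/M)^M$ reduces the requirement to $(e/M)^M \leq \gamma'$, equivalently $M(\ln M - 1) \geq \ln(1/\gamma')$. Setting $u = M/e$ and $v = \ln u$, this becomes $v e^v \geq \ln(1/\gamma')/e$, whose boundary is solved by $v = W(\ln(1/\gamma')/e)$, giving the closed-form threshold $M \geq \ln(1/\gamma')/W(\ln(1/\gamma')/e)$ claimed in the statement.

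The main technical step is the LCU-weight bound on $\|S_j^{r_j} - (S_j')^{r_j}\|$ and its reduction to the compact form $r_j \mu_j T_j$ via $a^r - b^r \leq r a^{r-1}(a-b)$; once this is in hand, the Stirling-plus-Lambert-$W$ inversion is a short calculation and the only remaining bookkeeping is to absorb the explicit constants from the tail estimate into the factor of $2$ in the definition $\gamma' = 2\gamma/(\mathcal{A}(\vec{r})\mathcal{C}_{\mathrm{gate}}(\vec{r}))$, which can be done by using a slightly sharper form of the tail bound or by a minor rescaling.
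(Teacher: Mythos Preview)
Your argument is correct and follows essentially the same route as the paper: bound $|B|$ via $|\tr[\rho X]|\leq\|X\|$, reduce $\|S_j^{r_j}-(S_j')^{r_j}\|$ to $r_j\mu_j T_j$ (the paper reaches the same intermediate bound via the telescoping inequality $\|C^r-D^r\|\leq r\max\{\|C\|,\|D\|\}^{r-1}\|C-D\|$ rather than your direct monomial expansion, but the two are equivalent here), then identify $\sum_j|F_j|\mu_j r_j=\mathcal{A}(\vec r)\,\mathcal{C}_{\mathrm{gate}}(\vec r)$ and invert a Stirling-type tail estimate via the Lambert $W$ function. The one place you are looser is the tail: the paper uses $\sqrt{1+a^2}\leq 1+|a|$ to telescope the even-$n$ sum into $\sum_{n>M}\tfrac{1}{n!}(|t_j|/r_j)^n$ and then the Poisson tail bound (their Eq.~\eqref{poissontail}) to get exactly $T_j\leq\tfrac{1}{2}(e/M)^M$, which is the ``slightly sharper form'' you need to land on the stated constant $\gamma'=2\gamma/(\mathcal{A}\,\mathcal{C}_{\mathrm{gate}})$ without any residual rescaling.
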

\begin{proof}
Using $|\tr[\rho A]| \leq \|A\|$ for any state $\rho$ and operator $A$, we have
\begin{align*}
|B| = \left|\E[Z] - \E[Z']\right| 
&\leq \sum_j |F_j| \Bigg\|\Bigg(\sum_{\substack{n=0\\n \text{ even}}}^\infty \alpha_n^{(j)} A_n^{(j)}\Bigg)^{r_j} - \Bigg(\sum_{\substack{n=0\\\text{$n$ even}}}^M \alpha_n^{(j)} A_n^{(j)}\Bigg)^{r_j}\Bigg\| \\
&\leq \sum_j |F_j| r_j \Bigg(\sum_{\substack{n=0\\\text{$n$ even}}}^\infty \alpha_n^{(j)} \|A_n^{(j)}\| \Bigg)^{r_j - 1} \Bigg(\sum_{\substack{n=M+1 \\n \text{ even}}}^\infty \alpha_n^{(j)} \|A_n^{(j)}\|\Bigg) \\
&\leq \sum_j |F_j| r_j \Bigg(\sum_{\substack{n=0\\\text{$n$ even}}}^\infty \alpha_n^{(j)} \Bigg)^{r_j-1} \sum_{\substack{n = M+1\\ n\text{ even}}}^\infty \alpha_n^{(j)} \\
&\leq \sum_j |F_j| r_j \Bigg(\sum_{\substack{n=0\\\text{$n$ even}}}^\infty \alpha_n^{(j)} \Bigg)^{r_j} \sum_{\substack{n = M+1\\ n\text{ even}}}^\infty \alpha_n^{(j)} \\
&\leq \sum_j  |F_j| r_j \mu_j \sum_{n=M+1}^\infty \frac{1}{n!} \Bigg(\frac{|t_j|}{r_j}\Bigg)^n.
\end{align*}
Here, the second inequality follows from the fact that 
\[ \|C^r - D^r\| \leq \sum_{k=1}^r\|C\|^{k-1}\|C-D\|\|D\|^{r-k} \leq r\max\{\|C\|,\|D\|\}^{r-1}\|C - D\| \]
for any operators $C$ and $D$.
The third inequality uses the fact that $\|A_n^{(j)}\| \leq 1$ since each $A_n^{(j)}$ is a convex combination of unitaries, and the fourth uses the observation that $\alpha_0^{(j)} \geq 1$ for all $j$. To obtain the last inequality, we use Eq.~\eqref{mu_j2} and the inequality $\sqrt{1 + a^2} \leq 1 + |a|$ for all $ a \in \mathbb{R}$, which implies
\[ \alpha_n^{(j)} \leq \frac{1}{n!} \left( \frac{t_j}{r_j}\right)^n \left(1 + \frac{1}{n+1} \frac{|t_j|}{r_j} \right). \]

Now, we use the assumption $|t_j| \leq r_j$ and Eq.~\eqref{poissontail} to bound
\[ \sum_{n=M+1}^\infty \frac{1}{n!}\left( \frac{|t_j|}{r_j}\right)^n \leq \frac{1}{2}\left(\frac{e}{M}\right)^{M}, \]
which is at most $\gamma'/2$ if $M > \frac{\ln(1/\gamma')}{W(\ln(1/\gamma')/e)}$. We then have $|B| \leq \sum_j |F_j|r_j\mu_j \gamma'/2 = \mathcal{A}(\vec{r})\mathcal{C}_{\mathrm{gate}}(\vec{r})\gamma'/2$, so the result follows by setting $\gamma' = 2\gamma/(\mathcal{A}(\vec{r})\mathcal{C}_{\mathrm{gate}}(\vec{r}))$.
\end{proof}

Thus, Theorem~\ref{thm:truncate} also shows that the truncation order $M$, which ultimately determines the classical sampling complexity of our algorithm, scales only logarithmically with the total quantum complexity, proportional to $\mathcal{A}(\vec{r})\mathcal{C}_{\mathrm{gate}}(\vec{r})$.

\section{Compiling to standard gates} \label{sec:GateComplexities}

In the main text, we counted the number $\mathcal{C}_{\mathrm{gate}}(\vec{r})$ of controlled Pauli rotations per sample,\footnote{In the main text, we described $\mathcal{C}_{\mathrm{gate}}$ as the number of (controlled) single-qubit Pauli rotations, as each multi-qubit Pauli rotation can be synthesised by conjugating a single-qubit $Z$ rotation by Clifford gates. Here, we present a different compilation strategy starting from the multi-qubit rotations.} ignoring the less important Clifford gate costs. This can be further compiled to more primitive gate sets, and we present rough estimates for doing so in this section.

First, each controlled Pauli rotation $\exp(i\theta P)$ can be decomposed into Cliffords and two single-qubit $Z$ rotations as
\begin{align*}
   \vert 1 \rangle \langle 1\vert  \otimes \exp(i \theta P) +  \vert 0 \rangle \langle 0 \vert  \otimes \id & =  \exp(i \theta \vert 1 \rangle \langle 1\vert \otimes P) \\
   & = \exp(i (\theta/2)(\id - Z_0 ) \otimes P) \\
   & = \exp(i (\theta/2)\id \otimes P ))  \exp(- i (\theta/2)Z_0 \otimes P ),
\end{align*}
where $Z_0$ is a Pauli $Z$ acting on the control qubit. Since $\{ \id \otimes P, - Z_0 \otimes  P \}$ is a commuting and independent set of Pauli operators, there exists a Clifford $C$ such that under conjugation by $C$, we have $\{ \id \otimes P, - Z_0 \otimes  P \} \rightarrow \{ Z_1, Z_2\}$. Thus, we find
\begin{align}
   \vert 1 \rangle \langle 1\vert  \otimes \exp(i \theta P) +  \vert 0 \rangle \langle 0 \vert  \otimes \id &   = C \exp(i (\theta/2) Z_1 ))  \exp(i (\theta/2)Z_2)) C^\dagger
\end{align}
leading to an extra factor of $\times2$ in non-Clifford complexity.

Second, each single-qubit $Z$ rotation of the form $\exp(i (\theta/2) Z_j ))$ can be compiled into the Clifford+$T$ gate set~\cite{bocharov15b,kliuchnikov13,gosset14,bocharov15,RS14}, though this typically increases gate counts by a factor $\mathcal{O}(\log(1/\epsilon))$ to achieve synthesis precision $\epsilon$. For instance, using the Ross-Selinger synthesis algorithm leads to an  synthesis overhead of $3\log_2(1/\epsilon) + O(\log(\log(1/\epsilon)))$.  For $\epsilon=10^{-10}$ this gives a $\sim\! 100\times$ overhead, which can be reduced to $\sim\!50\times$ using random compilation of the Ross-Selinger algorithm~\cite{campbell2017shorter}. Therefore, the expected $T$-count per sample is upper bounded by $\sim\! 100 \mathcal{C}_{\mathrm{gate}}(\vec{r})$.  

However, this large $100\times$ constant factor can be reduced by using smarter compilation strategies.  In our algorithm, the vast majority of gates are sampled from the leading order terms in Eq.~\eqref{eq:LeadingOrder}, which all have the same rotation angle. Let us assume that in our algorithm we have a subset of controlled Pauli rotations $\{ P_1, P_2, P_3, \ldots , P_w \}$ by the same angle. Furthermore, for chemistry problems these will typically be independent---there are no combinations that multiply to form the identity---and we assume they are all commuting within this subset (we address validity of this assumption later).  Then, the set of $w$ controlled Pauli rotations becomes a sequence of Pauli rotations with respect to the set
\begin{equation}
\{ \id \otimes P_1 , - Z_0 \otimes P_1 , \id \otimes P_2 , - Z_0 \otimes P_2 , \ldots , \id \otimes P_w , - Z_0 \otimes P_w \},
\end{equation}
which is a set of $2w$ commuting and independent Pauli operators.  For such a set of operators, there will exist a Clifford rotation $C$ that maps this set to $\{ Z_1, Z_2, \ldots , Z_{2w} \}$.  Therefore, the $w$ controlled-Pauli rotations can be realised by $C \prod_{j=1}^{2w} \exp(i (\theta/2) Z_j) C^\dagger$. The special structure of $\prod_{j=1}^{2w} \exp(i (\theta/2) Z_j)$  allows the use of Hamming weight phasing~\cite{gidney2018halving,kivlichan2018quantum,campbell2020early}. Notice that
\begin{equation}
    \exp(i (\theta/2) Z_j) \ket{x} =  \exp(i \theta ( w- |x| )) \ket{x},
\end{equation}
where $|x|$ is the Hamming weight of bit-string $x$ over the indices 1 to $2w$.  Therefore, the key idea of Hamming weight phasing is that we produce the phase $\exp(i \theta ( w- |x| ))$ by first mapping $\ket{x} \rightarrow \ket{x}\ket{|x|}$ where $\ket{|x|}$ is a register of size $\log_2(2w)$ qubits storing a binary representation of the integer $|x|$.  We now need only $\log_2(2w)$ Pauli $Z$ rotations acting on the $\ket{|x|}$ register. The cost of calculating the Hamming weight on $2w$ bit is upper bounded by $2w$ Toffoli gates (and ancilla qubits). For example, costing each Pauli rotation at 50 $T$ gates, equivalently 25 Toffoli gates using catalysis, the total Toffoli cost is then
\begin{equation}
    \mathcal{C}_{w-\mathrm{Rot}} = w \left( 2w + 25 \log_2( 2w )   \right) .
\end{equation}
We plot the Toffoli cost per gate $\mathcal{C}_{w-\mathrm{Rot}}/w$ in \cref{fig:HWP}.  For large $w$, the Toffoli cost per gate will approach 2.  We see that for finite $w$, at $w=100$ we need $\sim\! 4$ Toffoli per gate and at $w=40$ we need $\sim\! 6$ Toffoli per gate.

\begin{figure}
    \centering
    \includegraphics[width=300pt]{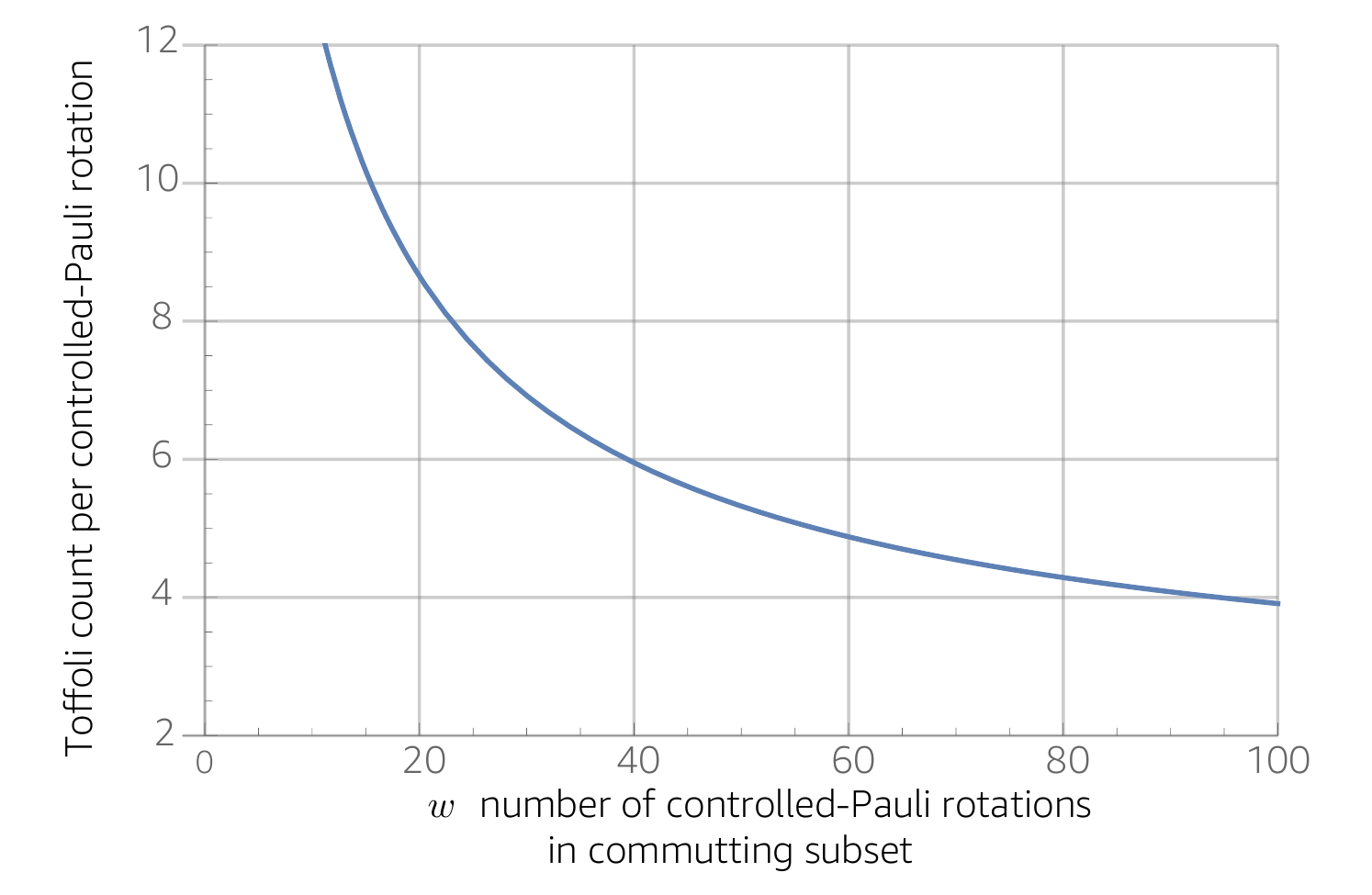}
    \caption{Upper-bound on the Toffoli cost of implementing $w$ controlled-$\exp(i \theta P_j)$ rotations, where all $P_j$ are commuting and independent. We assume that each Pauli $Z$ rotation require 25 Toffoli gates, using catalysis and randomized Ross-Selinger synthesis \cite{campbell2017shorter}.}
    \label{fig:HWP}
\end{figure}

Assuming the randomly sampled unitaries are dominated by long commuting sequences of the form $\{ P_1, P_2, \ldots , P_w \}$ with large $w$ (e.g. $w\gg 100$) and identical rotation angles, this supports the claim in the main text that, roughly, the Toffoli cost scales as $2 \mathcal{C}_{\mathrm{gate}}$. But using a more modest number of ancilla (e.g. $w \sim 40$), the Toffoli cost scales as $6 \mathcal{C}_{\mathrm{gate}}$.



Of course, there is a small but nonzero probability that we sample higher order terms ($n>0$) with rotation angles $\theta_n \neq \theta_0$ (cf.~\cref{eq:angles}). we can perform these controlled rotations using standard---though expensive---circuit synthesis instead of Hamming weight phasing. However, the frequency of these rotations is far fewer than one in every 100 gates, so this extra expense is relatively negligible. There is also a finite probability that a Pauli rotation $\exp(i\theta P_j)$, is followed by a sample $P_k$ that does not commute with $P_j$. However, there are $\mathcal{O}(N^4)$ Pauli operators in the Hamiltonian, and for any given $P_j$ there are only $\mathcal{O}(N^3)$ non-commuting Pauli terms. With each Pauli equally weighted, the probability of selecting an anti-commuting operator is $\mathcal{O}(1/N)$. Therefore, for large enough $N$ we expect to encounter many long sequences of commuting rotations that enable the use of Hamming weight phasing.  For these reasons, we stress that the $2 \mathcal{C}_{\mathrm{gate}} - 6 \mathcal{C}_{\mathrm{gate}}$ Toffoli count claims are a rough, asymptotic estimate. A more detailed analysis of the finite-$N$ statistics and pre-asymptotics is beyond the scope of this work.

\section{Alternative rescaling factors}
\label{App:TAU}

When we defined the CDF $C(\cdot)$ in Eq.~\eqref{cdfdef}, we rescaled the Hamiltonian by a factor $\tau$. Our analysis in the main text proceeds on the assumption that $\tau$ is set to $\pi /(2 \lambda + \Delta)$. Under this assumption, our results follow in a fully rigorous manner. However, reduced resource overheads can be obtained via heuristic modifications of the value used for $\tau$.  

The CDF is inferred from expectations of $\mathrm{Tr}[\rho e^{i j H \tau}]$, and so to avoid ambiguity due to periodicity of this exponential Proposition~\ref{prop:guarantee} required that $\tau$ was chosen small enough that $p(x)$ is supported within the interval $x \in [- (\pi - \delta)/2 , (\pi - \delta)/2 ]$.  Recall that if the state $\rho$ is supported on eigenstates with eigenvalues in the range $[-E, +E]$ for some $E$, then $p(x)$ is supported on $x \in [- \tau E, \tau E]$.  It follows that Proposition~\ref{prop:guarantee} can be employed whenever
\begin{equation}
    \tau E \leq (\pi - \delta)/2 .
\end{equation}
Using $\delta =\tau \Delta$ and simplifying, this equates to
\begin{equation}
    \tau \leq \frac{\pi}{2E + \Delta}
\end{equation}
In Appendix~\ref{app:acdf} and throughout the main text, we used that $E \leq \| H \| \leq \lambda$. However, this analysis is overly pessimistic and we could often set  
\begin{equation} \label{eq:bDEF}
    \tau =  \frac{\pi}{2 \lambda/b + \Delta}.
\end{equation}
for some $b>1$ without any significant problems as we explain below.

First, $\| H \| \leq \lambda$ is typically very loose for frustrated systems.  If we know $\|H\|$, then we can determine a new range of safe values for $\tau$ and therefore $b$.  However, calculating $\|H\|$ is computationally hard and so typically its value is unknown; hence, the use of $b=1$ for our rigorous theorem statements.

\begin{figure}
    \centering
    \includegraphics[width=250pt]{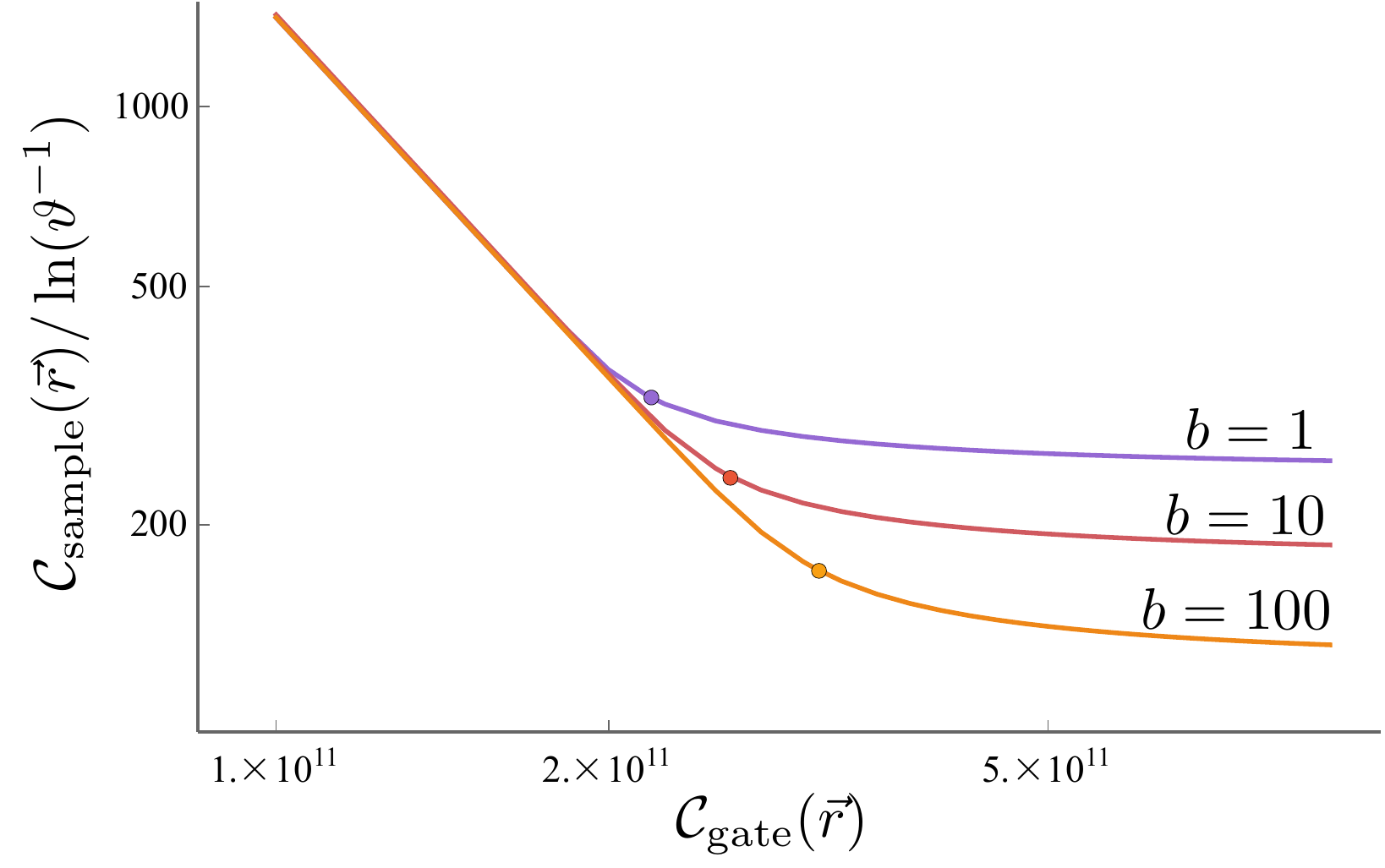}
    \caption{Similar to Fig.~\ref{fig:FeMoco}, this is a log-log plot of $\mathcal{C}_{\mathrm{sample}}(\vec{r})/\ln(\vartheta^{-1})$ vs.~$\mathcal{C}_{\mathrm{gate}}(\vec{r})$, for $\lambda = 1511$ (FeMoco~\cite{LiFeMoco,koridon2021orbital}) with $\Delta= 0.0016$ (chemical accuracy), $\eta = 1$, and various runtime vectors $\vec{r}$ optimised using the algorithms in Appendix~\ref{app:optimiser}. Whereas Fig.~\ref{fig:FeMoco} explored different $\eps$ values, here we fix $\eps=0.2$ and vary $b \in \{1,10,100 \}$ as defined by Eq.~\eqref{eq:bDEF}, showcasing the resource improvements possible with $b>1$. Dots indicate the values that optimise the total expected complexity $2\mathcal{C}_{\mathrm{sample}}\cdot \mathcal{C}_{\mathrm{gate}}$, while curves are obtained by fixing $\mathcal{C}_{\mathrm{gate}}$ and optimising $\mathcal{C}_{\mathrm{sample}}$. 
    }
    \label{fig:varyB}
\end{figure}

Second, the assumptions of Proposition~\ref{prop:guarantee} can be relaxed with a similar proof going through.  That is, let us assume that $p(x)$ is \textit{mostly} supported on the interval $x \in [- (\pi - \delta)/2 , (\pi - \delta)/2 ]$, so that the support outside this interval has total weight no more than $\epsilon'$.  Then one could derive a similar result to Proposition \ref{prop:guarantee} at the price of an extra $\epsilon'$ to the additive error bounds on the CDF. Provided $\epsilon'$ is small compared to $\eta$, this extra error could be accommodated by a slight tuning of the algorithm parameters.  When will this assumption on $p(x)$ hold?  The initial state $\rho$ is typically taken to be an approximation of the ground state, so it will have very low energy $\mathrm{Tr}[\rho H]$, close to the ground state energy. Indeed, the ground state energy and also $\mathrm{Tr}[\rho H]$ could be several orders of magnitude smaller than $\|H\|$.  When $\mathrm{Tr}[\rho H] \ll \|H\|$, $\rho$ cannot have large overlap with high-energy eigenstates.  Thus, in practice, it will often be safe to set $b$ such that $ b > \lambda / \|H\|$ (but not very much larger) since then any support of $p(x)$ outside $[- (\pi - \delta)/2 , (\pi - \delta)/2 ]$ will be relatively small.


\end{document}